\title{\Large{Steady-State and Dynamical Behavior of a PDE Model of Multilevel Selection with Pairwise Group-Level Competition}}
\author[1]{Konstantinos Alexiou}
\affil[1]{Department of Mathematics, University of St. Andrews, St. Andrews, Scotland, UK}
\author[2,*]{Daniel B. Cooney}
\affil[2]{Department of Mathematics, University of Illinois Urbana-Champaign, Urbana, IL, USA}
\affil[*]{Correspondence to dbcoone2@illinois.edu}
\newcommand\reallywidehat[1]{%
\savestack{\tmpbox}{\stretchto{%
  \scaleto{%
    \scalerel*[\widthof{\ensuremath{#1}}]{\kern.1pt\mathchar"0362\kern.1pt}%
    {\rule{0ex}{\textheight}}
  }{\textheight}%
}{2.4ex}}%
\stackon[-6.9pt]{#1}{\tmpbox}%
}
\newcommand{\RR}{\mathbb{R}}
\newcommand{\ds}{\displaystyle}
\newcommand{\mc}{\mathcal}
\newcommand{\ol}{\overline}
\newcommand{\bbm}{\begin{bmatrix}}
\newcommand{\bpm}{\begin{pmatrix}}
\newcommand{\ebm}{\end{bmatrix}}
\newcommand{\epm}{\end{pmatrix}}
 \newcommand{\dsdel}[2]{\displaystyle\frac{\partial #1}{\partial #2}}
\newcommand{\dsddx}[2]{\displaystyle\frac{d #1}{d #2}}
\newcommand{\dsddt}[1]{\displaystyle\frac{d #1}{dt}}
\newcommand{\holder}{Hölder  \hspace{0.05mm}}
\renewcommand{\abstractname}{Abstract}
\numberwithin{equation}{section}
\numberwithin{figure}{section}
\renewcommand{\thesection}{\arabic{section}}
\begin{document}

\maketitle

\newtheorem{definition}{Definition}[section]
\newtheorem{theorem}{Theorem}[section]
\newtheorem{lemma}[theorem]{Lemma}
\newtheorem{corollary}[theorem]{Corollary}
\newtheorem{claim}[theorem]{Claim}
\newtheorem{fact}[theorem]{Fact}
\newtheorem{proposition}[theorem]{Proposition}
\newtheorem{remark}[theorem]{Remark}
\newtheorem{example}[theorem]{Example}
\newtheorem{observation}[theorem]{Observation}

\renewcommand{\thesection}{\arabic{section}}
\setcounter{section}{0}

\begin{abstract}
Evolutionary competition often occurs simultaneously at multiple levels of organization, in which traits or behaviors that are costly for an individual can provide collective benefits to groups to which the individual belongs. Building off of recent work that has used ideas from game theory to study evolutionary competition within and among groups, we study a PDE model for multilevel selection that considers group-level evolutionary dynamics through a pairwise conflict depending on the strategic composition of the competing groups. This model allows for incorporation of group-level frequency dependence, facilitating the exploration for how the form of probabilities for victory in a group-level conflict can impact the long-time support for cooperation via multilevel selection. We characterize well-posedness properties for measure-valued solutions of our PDE model and apply these properties to show that the population will converge to a delta-function at the all-defector equilibrium when between-group selection is sufficiently weak. We further provide necessary conditions for the existence of bounded steady state densities for the multilevel dynamics of Prisoners' Dilemma and Hawk-Dove scenarios, using a mix of analytical and numerical techniques to characterize the relative strength of between-group selection required to ensure the long-time survival of cooperation via multilevel selection. We also see that the average payoff at steady state appears to be limited by the average payoff of the all-cooperator group, even for games in which groups achieve maximal average payoff at intermediate levels of cooperation, generalizing behavior that has previously been observed in PDE models of multilevel selection with frequency-indepdent group-level competition.

\end{abstract}
 
\singlespacing

%
{\hypersetup{linkbordercolor=black, linkcolor = black}
\begin{spacing}{0.01}
\renewcommand{\baselinestretch}{0.1}\normalsize
\tableofcontents
\addtocontents{toc}{\protect\setcounter{tocdepth}{2}}
\end{spacing}
\singlespacing

\section{Introduction}

In various natural and social systems, evolutionary dynamics driven by natural selection or cultural transmission can operate across multiple levels of organization, creating tensions between the evolutionary incentives of individuals and the collectives to which the individuals belong. These tensions between levels of selection arise on scales ranging from genetic conflict within cells \cite{sachs2004evolution,paulsson2002multileveled,gitschlag2020nutrient} to the evolution of cooperative behavior in complex animal societies \cite{shaffer2016foundress,waring2017coevolution,boyd2003evolution}. Natural selection operating on aggregates of individuals can help to facilitate major evolutionary transitions like the emergence of protocells \cite{hogeweg2003multilevel,takeuchi2009multilevel,szathmary1987group} the origin of chromosomes \cite{smith1993origin,szathmary1993evolution,szathmary1995major} and the evolution of multicellularity \cite{staps2019emergence,pichugin2019evolution,ispolatov2012division,tarnita2013evolutionary}, while cultural group selection has been attributed as a mechanism for the promotion of cooperative social norms that facilitate the formation of large human societies \cite{boyd2003evolution,henrich2004cultural}. To understand such varied natural phenomena that arise from cross-scale evolutionary competition, it can be helpful to use mathematical modeling to formulate and analyze the tug-of-war between traits or behavior favored at different levels of biological organization. 

Evolutionary game theory provides a mathematical framework that can be helpful to analyze the conflict between an individual incentive to cheat and a collective incentive to achieve cooperation within a group of individuals. Modeling game-theoretic interactions in group-structured populations provides examples of misalignment between individual-level and group-level interests, as social dilemmas may arise in which individual payoff is maximized by a cheating strategy but the average payoff of group members is maximized when at least some members of a group cooperate. By considering two-player, two-strategy games with a range of payoff matrices, it is possible to formulate a variety of social dilemmas in which cooperation is socially beneficial, but in which individual-level replicator dynamics can favor dominance of defection, coexistence between cooperators and defectors, or bistability of all-defector and all-cooperator states \cite{hofbauer1998evolutionary,nowak2005evolution}. Depending on the payoff structure of underlying games, it is also possible to explore scenarios in which the average payoff of group members is maximized by a group composed only of cooperators, as well as games for which an intermediate level of cooperation can maximize the collective payoff for a group \cite{cooney2019replicator}.

A range of mathematical frameworks have been introduced to describe the dynamics of multilevel selection, incorporating different ways to describe group-structured populations and how the evolution of cooperation can be achieved through group-level competition. These frameworks include trait-group models in which collective replication of transiently-formed groups can help to promote altruistic behaviors \cite{wilson1975theory,wilson1977structured,fontanari2024dynamics}, models with fixed group structure featuring group-level fission or fusion events \cite{traulsen2005stochastic,traulsen2006evolution,simon2010dynamical,simon2012numerical,simon2013towards,simon2016group,simon2024evolutionary,simon2024fission}, and spatially explicit models in which group formation and group-level competition emerges via spatial pattern formation \cite{savill1997self,hogeweg2003multilevel,hogeweg2012toward,hermsen2022emergent,doekes2024multiscale} or other aggregation processes \cite{tarnita2013evolutionary,ackermann2023role}. The mathematical approaches used to describe cross-scale evolutionary dynamics range from individual-based stochastic models used to describe the fixation or persistence of cooperation due to group-level competition \cite{traulsen2005stochastic,traulsen2006evolution,traulsen2008analytical,mcloone2018stochasticity,bottcher2016promotion} to a variety of PDE models describing multilevel selection that incorporate individual-level evolutionary forces like migration, mutation, and genetic drift \cite{ogura1987stationary,ogura1987stationary2,velleret2024two} or which incorporate detailed group-level events including fission, fusion, and collective extinction of groups \cite{simon2010dynamical,simon2012numerical,simon2013towards,simon2016group,simon2024fission,simon2024evolutionary,lerch2024flexible}.
 Similar nested stochastic models have been further explored to study a range of biological phenomena including host-pathogen coevolution \cite{pokalyuk2019diversity,pokalyuk2020maintenance}, the evolution of cooperative or complementary genetic replicators in protocells \cite{fontanari2006coexistence,fontanari2013solvable,fontanari2014effect,fontanari2014nonlinear,markvoort2014computer}, and the origin of chromosomes \cite{smith1993origin}.

Luo and couathors recently introduced a stochastic framework for describing evolutionary dynamics featuring individual and collective birth-death competition in group-structured populations \cite{luo2014unifying,luo2017scaling,van2014simple}, modeling finite population dynamics through a nested Moran process and deriving a PDE describing the dynamics of multilevel selection in the limit of large population size. Luo and Mattingly considered the case of two types of individuals in which one type had a fixed advantage under individual-level replication and the other type conferred a collective advantage  to their group, showing that beneficial group-level outcomes could be achieved in the long-time behavior a PDE model of multilevel selection when competition among groups was sufficiently strong \cite{luo2017scaling}. Subsequent extensions of these two-level birth-models have explored fixation probabilities in finite populations \cite{mcloone2018stochasticity}, the existence of quasi-stationary distributions in a diffusive PDE scaling limit of the two-level stochastic process \cite{velleret2020individual,velleret2024two}, and the formulation of individual-level and group-level replication rates based on two-player, two-strategy social dilemma games played within each group \cite{cooney2019replicator,cooney2020analysis}. The resulting hyperbolic PDE models for multilevel selection have been further generalized to study multilevel dynamics with individual-level and group-level replication rates described by arbitrary functions of the fraction of cooperators within each group \cite{cooney2022long}, and results for these generalized models have been applied to explore synergistic effects of group-level competition and within-group mechanisms for promoting the evolution of cooperation \cite{cooney2022assortment,cooney2023evolutionary} and to study models of protocell evolution and the origin of chromosomes \cite{cooney2022pde}. 

For these two-level replicator equation models, it was possible to use the method of characteristics to determine the long-time behavior for these models of multilevel selection. A particularly interesting feature of the two-level replicator model was a phenomenon described as a ``shadow of lower-level selection'', in which the long-time group-level replication rate could not exceed the replication rate of the all-cooperator group, even for scenarios in which the replication rate of groups was maximized by intermediate levels of cooperation \cite{cooney2019replicator,cooney2020analysis,cooney2022long}. One question of interest for this paper is whether this behavior is limited to the case of previously studied two-level replicator equations, or whether this long shadow cast by lower-level selection can hold for a broader class of PDE models of multilevel selection that incorporate frequency-dependent competition at the group level. While existing work on generalizations of the Luo-Mattingly PDE model of multilevel selection typically assume that group-level replication events occur at rates that depend only on the strategic composition of the replicating group, it is also possible to incorporate frequency-dependent competition at the level of groups by assuming that group-level selection occurs through interactions between competing groups.

Questions related to multilevel selection with group-level interactions have often been considered in the evolutionary anthropology literature, arising in models of cultural evolution of behaviors that spread through both individual-level transmission and pairwise conflict between groups. Simulation studies of cultural group selection have been used to study the coevolution of cooperative behaviors and within-group social norms for punishment of defectors or rewarding of cooperators \cite{boyd2003evolution,janssen2014effect,odouard2023polarize}. The simulation model introduced by Boyd and coauthors described competition between groups through a series of pairwise conflicts between groups, in which the probability of group-level victory depends on the strategic composition of the competing groups, with the victorious group producing a copy of itself and replacing the group that lost the pairwise conflict. This form of pairwise between-group competition differs from the models based on the two-level birth-death process proposed by Luo and coauthors, in which the replication rate of groups depended only on the strategic composition of the replicating group. By allowing pairwise conflicts to determine group-level replication events, we are now incorporating frequency dependence in our model of group-level competition, allowing for a more general description of the evolutionary dynamics of group-structured competition featuring selection within and between groups. Such models of pairwise group-level conflict have also recently been applied to study nested models of multilevel selection with density-dependent within-group dynamics \cite{lerch2024flexible}, and prior theoretical work on intergroup conflict in animal populations \cite{rusch2020logic} suggests substantial room for formulating multilevel selection models with intergroup competition for resources in the presence of individual-level competition within groups. 

Our goal in this paper is to understand how pairwise group-level competition impacts the dynamics of multilevel selection for various scenarios arising from evolutionary games. We look to explore how the incorporation of group-level frequency dependence impacts both the qualitative behavior and mathematical details of PDE models of multilevel selection relative to existing work on two-level replicator equations that feature group-level replication rates depending only on the strategic composition of the replicating group. This analysis allows us to explore how the tug-of-war between individual level and group incentives plays out in the case of pairwise group conflicts, broadening the scope of analytically tractable models of multilevel selection and highlighting which previously studied behaviors of two-level evolutionary dynamics may be robust to the specific assumptions made by mathematical modelers when formulating a stochastic or PDE model of multilevel selection.

In this paper, we study the dynamical and steady-state behavior of a PDE model of multilevel selection that incorporates pairwise between-group competition, expanding on the recent class of PDE models that have assumed frequency-independent group-level replication rates. We provide a measure-valued formulation for the PDE model, using the method of characteristics and a contraction mapping argument to show well-posedness of measure-valued solutions and obtain an implicit representation formula to solutions of the multilevel dynamics. As a first application of this measure-valued representation of solutions, we study the infimum and supremum H{\"o}lder exponent of solutions near the all-cooperator equilibrium, which are two properties of the measure-valued solution that can be helpful to characterize the tail-behavior of the strategic distribution of the group-structured population \cite{cooney2022long}. We are able to show that the infimum and supremum H{\"o}lder exponents near the all-cooperator composition is preserved in time for our model, suggesting that these quantities may play a similar role for multilevel dynamics with pairwise group conflicts as they have been shown to do in the case of two-level replicator equations \cite{cooney2022long}. 

We then explore the possible long-time outcomes for the multilevel dynamics for the cases of within-group and group-level competition based on generalizations of the Prisoners' Dilemma (PD) and Hawk-Dove games (HD). For the PD game, we use the representation formula for measure-valued solutions to show that the population will converge to a delta-function at the all-defector equilibrium when between-group selection is sufficiently weak relative to competition within groups. We also explore the possibility of existence of density steady states for the multilevel dynamics for the PD and HD scenarios, deriving necessary conditions for the existence of steady states with given behavior near the all-cooperator equilibrium. These necessary conditions provide us with an expression for the average group-level victory probability such a steady-state population would achieve in pairwise competition with the all-cooperator group, and motivate a conjectured formula for the threshold strength of between-group competition required to achieve a steady state supporting positive levels of cooperation under the dynamics of multilevel selection with pairwise group-level competition. We further explore numerical simulations for our PDE model, observing good agreement between the behavior of numerical solutions and the conjectured expressions for the threshold selection strength and collective success of possible steady-state populations. Notably, we see that the numerical solutions and conjectured analytical formulas suggest that the population may be limited by the collective success of the all-cooperator group even in the limit of infinitely strong between-group competition, suggesting that the shadow of lower-level selection seen in two-level replicator models may also generalize to our PDE models for multilevel selection featuring frequency-dependent group-level competition. 

We also consider a class of individual and group-level replication rates that generalize the multilevel dynamics for Stag-Hunt (SH) games, with individual-level selection featuring bistability of the all-defector and all-cooperator groups and in which group-level selection most favors the all-cooperator composition. We show that, in the presence of any pairwise group-level competition, the population will converge upon a delta-function at the all-cooperator outcome, so cooperation will achieve long-time fixation in the population when an all-cooperator group is locally stable under individual-level dynamics and is favored under group-level competition. This result provides an analogue to the result of Boyd and Richerson on group selection between alternative stable within-group equilibria  \cite{boyd1990group}, which was initially proposed in the context of finite population dynamics with a separation of time-scales between within-group and group-level competition. This result suggests that achieving local stability of a cooperative equilibrium will be sufficient to achieve full-cooperation under our model of multilevel selection with pairwise group conflict, which can be useful for future work exploring synergistic effects between pairwise group-level competition and within-group mechanisms that help to promote and stabilize cooperation with groups.

In Section \ref{sec:mainmodel}, we introduce our PDE model for multilevel selection with pairwise group-level competition and we discuss the game-theoretical background used to generate assumptions on the individual and group replication rates. In Section \ref{sec:measureproperties}, we present a measure-valued formulation of our PDE model for multilevel selection, providing a characterization of well-posedness and preservation of the tail behavior for measure-valued solutions. We then study dynamical and steady state properties of our model for the case of generalized PD games in Section \ref{sec:PDbehavior}, and provide similar analysis for the generalizations of the HD and SH games in Section \ref{sec:HDandSHbehavior}. We then study numerical solutions to the multilevel dynamics in Section \ref{sec:numerics}, studying dynamical behavior for an example group-level victory probability based on the Fermi update rule and providing evidence for results and conjectures provided in the previous two sections. We present a discussion of our results and an outlook for future work in Section \ref{sec:discussion}, and we provide additional proofs of analytical results and information on our numerical simulations in the appendix.

\section{Baseline Model of Multilevel Selection with Pairwise Group-Level Competition}
\label{sec:mainmodel}

In this section, we will formulate our baseline model for multilevel selection with pairwise group-level competition and explore the underlying game-theoretic assumptions that will motivate our assumptions for the individual-level and group-level replication events in our model. We discuss our general PDE model in Section \ref{sec:generalPDEmodel} and describe the game theory underlying our modeling assumptions in Section \ref{sec:gametheory}.

\subsection{General Approach for Modeling Multilevel Selection with Pairwise Group-Level Competition}
\label{sec:generalPDEmodel}

In this section, we formulate our PDE model for multilevel selection with pairwise group-level competition. A heuristic derivation of this PDE model was provided in the context of describing multilevel selection with pairwise group-level competition in the presence of altruistic punishment within groups \cite{cooney2024exploring}, so we will focus on describing the main assumptions we make regarding individual-level and group-level replication rates. For our model, the dynamics of within-group competition use the same assumptions that were made in the derivation of two-level replicator equation models with replication rates based on the payoff of evolutionary games \cite{cooney2019replicator,cooney2022long}, while the assumptions will differ from previous models by considering group-level replication events that result from pairwise group conflict. 

We consider a population structured into $m$ groups, with each group containing $n$ individuals. There are two possible types of individuals, cooperators ($C$) and defectors ($D$), and we describe the strategic composition of an $n$-member group by the number of cooperators $i$ in the group. We then formulate our model of multilevel selection by specifying the individual-level and group-level replicator rates based on the payoffs achieved in a group featuring $i$ cooperators and $n-i$ defectors. Our individual-level replication events will depend on the payoffs $\pi_C\left(\frac{i}{n}\right)$ and $\pi_D\left(\frac{i}{n}\right)$ received by cooperators and defectors in an $i$-cooperator group. We assume that cooperators and defectors respectively replicate at rates $1 + w_I \pi_C\left(\frac{i}{n}\right)$ and $1 + w_I \pi_D\left( \frac{i}{n} \right)$, where $w_I$ describes the sensitivity of individual-level replication events to the payoffs received by the replicating individuals. We then assume that the offspring of the replicating individual replaces a randomly chosen individual within the same group, which allows the group size to remain constant and to allow us to describe the state of the group through the fraction of cooperator $\frac{i}{n}$ (as the total number of group members remains $n$ for all time). 

Unlike previously studied models in which group replication depends only on the strategic composition of the replicating group, we can introduce frequency-dependent group-level competition by assuming that the group-level replication rate of an $i$-cooperator group will depend on the fraction of groups $f^{m,n}_j(t)$ of groups featuring $j$ cooperators for $j \in \{0,1,\cdots,n\}$. We assume that each group engages in pairwise conflicts with other groups at rate $\Lambda$, and their opponent group in the pairwise conflict is drawn uniformly from the population of groups. This means that a given $i$-cooperator group interacts with a $j$-cooperator group at rate $\Lambda f_j^{m,n}(t)$, and the total rate of interaction between $i$-cooperator and $j$-cooperator groups occurs at rate $\Lambda f^{m,n}_i(t) f^{m,n}_j(t)$. We assume that an $i$-cooperator group defeats a $j$-cooperator group in a pairwise conflict with probability $\rho\left( \frac{i}{n},\frac{j}{n} \right)$, while the $i$-cooperator group loses this conflict with the complementary probability $\rho\left( \frac{j}{n} , \frac{i}{n} \right) = 1 - \rho\left( \frac{i}{n},\frac{j}{n} \right)$. We assume that the winning group in the pairwise conflict makes a copy of itself, with the offspring group replacing the group that loses the conflict. Therefore, the rate at which $i$-cooperator groups are produced in pairwise competition is given by $\sum_{j=0}^n \Lambda f^{m,n}_i(t) f^{m,n}_j(t) \rho\left( \frac{i}{n},\frac{j}{n} \right)$, while the rate at which $i$-cooperator groups are lost due to pairwise competition is given by $\sum_{j=0}^n \Lambda f^{m,n}_i(t) f^{m,n}_j(t) \rho\left( \frac{j}{n},\frac{i}{n} \right)$.

We can then use these rates for individual-level and group-level replication events to describe a large-population limit for this two-level birth-death process. Under the assumption that the group size $n$ and the number of groups $m$ tend to infinity while all other parameters in our model are fixed, we can derive a hyperbolic partial differential equation for $f(t,x)$, the probability density of groups featuring $x$-fraction of cooperators. This PDE takes the following form
\begin{equation} \label{eq:multilevelPDEfirstform}
\begin{aligned}
    \dsdel{f(t,x)}{t} &= \dsdel{}{x} \left[x (1-x) \pi(x) f(t,x) \right] + \lambda f(t,x) \left[  \int_0^1 \rho(x,u) f(t,u) du -  \int_0^1 \rho(u,x) f(t,u) du \right],
    \end{aligned}
\end{equation}
where $\rho(x,y)$ describes the probability that a group with $x$ cooperators defeats a group with $y$ cooperators in a pairwise competition between groups, $\pi(x) := \pi_D(x) - \pi_C(x)$ describes the individual-level replication advantage of defectors, and the parameter $\lambda := \frac{\Lambda}{w_I}$ describes the relative strength of group-level competition. With the interpretation that $\rho(x,y)$ is the probability of an $x$-cooperator group winning a conflict against a $y$-cooperator group, we will always assume that the function $\rho(x,y)$ has the following properties:
\begin{subequations}
\begin{align}
0 &\leq \rho(x,y) \leq 1 \\
\rho(y,x) &= 1 - \rho(x,y)
\end{align}
\end{subequations}
for all $(x,y) \in [0,1]^2$.For details on a heuristic derivation of this PDE model with pairwise group-level competition, we refer the reader to \cite[Section D.1]{cooney2024exploring}.

Equation \eqref{eq:multilevelPDEfirstform} is a hyperbolic PDE, whose characteristic curves are given by the ODE
\begin{equation} \label{eq:characteristics}
\dsddt{x(t)} = - x (1-x) \pi(x) = x(1-x) \left[ \pi_C(x) - \pi_D(x) \right],
\end{equation}
which is the replicator equation for individual-level selection. Within groups, the fraction of cooperators will increase in our PDE model at group compositions $x$ at which cooperators receive a higher payoff than defectors, while the density of $x$-cooperator groups will increase due to group-level competition when $x$-cooperator groups are expected to win more group-level conflicts than they are expected to lose when paired against groups drawn uniformly from the current distribution of groups. 

Using the fact that $\rho(x,u) = 1 - \rho(u,x)$, we can rearrange the nonlocal term of Equation \eqref{eq:multilevelPDEfirstform} to see that our hyperbolic PDE can be 
\begin{equation} \label{eq:multilevelPDEtworho}
    \dsdel{f(t,x)}{t} = \dsdel{}{x} \left[x (1-x) \pi(x) f(t,x) \right] + \lambda f(t,x) \left[ 2 \int_0^1 \rho(x,u) f(t,u) du - 1  \right].
\end{equation}
This form of the equation is typically more convenient for studying well-posedness of solutions and studying the dynamical behavior of the PDE. 

Unlike previously studied PDE models for multilevel selection \cite{luo2017scaling,cooney2019replicator,cooney2022long}, the group-level replication rate $\int_0^1 \rho(x,y) f(t,y) dy$ for an $x$-cooperator group depends on the entire distribution of groups $f(t,\cdot)$. However, with a choice of additively separable pairwise group-level victory probability taking the form
\begin{equation}
\rho(x,y) = \frac{1}{2} \left( 1 + \mc{G}(x) - \mc{G}(y) \right)
\end{equation}
for a function $\mc{G}(x)$ satisfying $|\mc{G}(x)| \leq 1$, we can note that
\begin{equation}
2 \rho(x,y) - 1 = \mc{G}(x) - \mc{G}(y),
\end{equation}
and the PDE from Equation \eqref{eq:multilevelPDEtworho} takes the form
\begin{equation} \label{eq:twolevelreplicator}
    \dsdel{f(t,x)}{t} = \dsdel{}{x} \left[x (1-x) \left( \pi_D(x) - \pi_C(x) \right) f(t,x) \right]  + \lambda f(t,x) \left[ \mc{G}(x) - \int_0^1 \mc{G}(y) f(t,y) dy \right].
\end{equation}
This PDE is of the general form previously studied for multilevel selection with arbitrary continuously-differentiable group-level replication rate $\lambda \mc{G}(x)$, and the long-time behavior for measure-valued solutions of this  PDE model has been fully characterized in prior work \cite{cooney2022long}. Our emphasis in this paper is on group-level victory probabilities $\rho(x,y)$ that are not additively separable, for which the group-level replication term is nonlinear and we cannot study the dynamics of our nonlinear PDE model by studying the behavior of a simpler associated linear PDE.

We can also describe a weak formulation of this model where the population composition is described by the probability measure $\mu_t(dx)$ for groups featuring $x$ cooperators and $1-x$ defectors. For a given test function $v(x) \in C^1([0,1])$, the measure-valued version of our hyperbolic PDE model takes the following form:
\begin{equation} \label{eq:PDEmeasure}
\begin{aligned}
\dsddt{} \int_0^1 v(x) \mu_t(dx) &= -\int_0^1 v'(x) x(1-x) \pi(x) \mu_t(dx) \\
&+ \lambda \int_0^1 v(x) \left[ 2 \left(\int_0^1 \rho(x,y) \mu_t(dy)\right)  - 1 \right] \mu_t(dx).
\end{aligned}
\end{equation}
We will use this characterization of measure-valued solutions to obtain a representation formula for solutions, and our dynamical results on the preservation of tail behavior of our solution near $x=1$ and on the dominance of defection for multilevel Prisoners' Dilemma scenarios with weak group-level competition. An equivalent measure-valued formula was previously used to fully characterize the long-time behavior of PDE models for multilevel selection based on two-level replicator equations, and incorporating population states that can contain delta-measures allows us describe monomorphic steady states in which every group in the population has concentrated upon an equilibrium point for individual-level dynamics.

\subsection{Game-Theoretic Motivation} \label{sec:gametheory}

We now focus on example individual-level and group-level replication rates that depend on the payoffs of two-player, two-strategy games played with the groups. These examples will motivate some of the assumptions we will make on the individual-level replication function $\pi(x)$ and the group-level victory probability $\rho(x,y)$ for our analytical exploration in Sections \ref{sec:PDbehavior} and \ref{sec:HDandSHbehavior}, and the replication rates we we formulate in this section will also provide concrete examples that we use for numerical simulations of the long-time behavior for the PDE models in Section \ref{sec:numerics}.

We will draw the assumptions for our model of multilevel selection from symmetric games with the following payoff matrix
\begin{equation} \label{eq:payoffmatrix}
\begin{blockarray}{ccc}
& C & D \\
\begin{block}{c(cc)}
C & R & S \\
D & T & P \\
\end{block}
\end{blockarray},
\end{equation}
for which the entries correspond to a reward for mutual cooperator ($R$), a punishment for mutual defection ($P$), a temptation payoff to defect against a cooperator ($T$), and a sucker payoff for cooperating with a defector ($S$). Four games of the form that we will discuss in this paper are the Prisoners' Dilemma (PD), the Hawk-Dove (HD) game, the Stag Hunt (SH) game, and the Prisoners' Delight (PDel), which are characterized by the following ranking of the entries of the payoff matrix
\begin{subequations} \label{eq:payoffrankings}
\begin{align}
    \mathrm{PD} &: T > R > P > S \label{eq:PDpayoffs} \\
    \mathrm{HD} &: T > R > S > P \label{eq:HDpayoffs} \\
    \mathrm{SH} &: R > T > P > S. \label{eq:SHpayoffs} \\
    \mathrm{PDel} &: R > T > S > P \label{eq:PDelpayoffs}
\end{align}
\end{subequations}
For each of these games, $R > P$, so players receive a higher payoff if they both cooperate than if they both defect, indicating a collective incentive for a pair of individuals to achieve mutual cooperation. In the PD game, the rankings $T > R$ and $P > S$ imply that an individual receives a higher payoff by defecting than cooperating regardless of the action taken by their opponent. Under the HD game, $T > R$ and $S > P$, so individuals receive a payoff advantage by playing the opposite strategy from the strategy of their opponent.  For the SH game, $R > T$ and $P > S$, so individuals receive a payoff advantage when they play the same strategy as their opponent. Finally, for the PDel game, the rankings $R > T$ and $P > S$ correspond to cooperation yielding a higher payoff than defection regardless of the strategy of one's opponent.

If we assume that each individual plays the game against all members of their own group and the group is infinitely large, then the average payoff obtained by a cooperator and defector in a group composed of a fraction $x$ cooperators and a fraction $1-x$ defectors can be calculated as 
\begin{subequations}
\begin{align}
\pi_C(x) &= x R + (1-x) S \\
\pi_D(x) &= x T + (1-x) P
\end{align}
\end{subequations}
We may also use these payoff functions to define the payoff advantage that defectors have over cooperators in an $x$-cooperator groups as
\begin{equation}
\pi(x) := \pi_D(x) - \pi_C(x) = P - S - \left(R - S - T + P \right) x,
\end{equation}
and we may define the average payoff of the members of an $x$-cooperator group as
\begin{equation}
G(x) := x \pi_C(x) + (1-x) \pi_D(x) = P + \left( S + T - 2P \right) x + \left( R - S - T + P \right) x^2.
\end{equation}

For convenience, we introduce the shorthand notation $\alpha = R - S - T + P$, $\beta = S - P$, and $\gamma = S + T  -2 P$ to describe the roles different values play in the multilevel dynamics. With this notation, we may rewrite the net individual-level advantage of defectors in an $x$-cooperator group as
\begin{equation}
\pi(x) = - \left( \beta + \alpha x \right),
\end{equation}
and we may write the average payoff of an $x$-cooperator group as
\begin{equation}
G(x) = P + \gamma x + \alpha x^2.
\end{equation}

We can then use properties of the individual-level payoff functions $\pi_C(x)$ and $\pi_D(x)$ and the average payoff function $G(x)$ to characterize features of individual-level and group-level selection imposed by each of the four games under consideration. For the dynamics of individual-level selection, we can study the replicator equation
\begin{equation}
\dsddt{x} = - x (1-x) \pi(x) = x (1-x) \left( \beta + \alpha x \right),
\end{equation}
which has equilibrium points $x = 0$ and $x = 1$ for all games. For the HD and SH games, there is a third equilibrium $x = x_{eq}$ given by the interior fraction of cooperation
\begin{equation}
x_{eq} = \frac{\beta}{-\alpha} = \frac{S - P}{S - P + T - R} \in (0,1).
\end{equation}

The within-group dynamics for each of the four games span the possible stability scenarios for the endpoint and interior equilibria \cite{nowak2006evolutionary}. Starting from any interior initial condition, The PD game features convergence to the all-defector equilibrium $x = 0$ (with $\pi(x) > 0$ for $x \in [0,1]$) and the PDel game features convergence to the all-cooperator equilibrium ($\pi(x) < 0$ for $x \in [0,1]$). The HD game features convergence of the population upon the interior equilibrium $x_{eq}$ for interior initial conditions (with $\pi(x) < 0$ for $x \in [0,x_{eq})$ and $\pi(x) > 0$ for $x \in (x_{eq},1]$), while the SH game features bistability between the all-defector and all-cooperator equilibria with the basins of attraction separated by the interior equilibrium $x_{eq}$ (with $\pi(x) > 0$ for $x \in [0,x_{eq})$ and $\pi(x) < 0$ for $x \in (x_{eq},1]$). 

The dynamics of average payoff $G(x)$ and the corresponding group-level dynamics for these games have a slightly more subtle dependence on the parameters of the payoff matrix, with different possible qualitative behaviors displayed within the same classes of games. 
In Lemma \ref{lem:payoffproperties}, we discuss known properties and $G(x)$ arising from the payoffs of two-player and two-strategy games, and we will look to generalize these properties in our subsequent analysis of multilevel dynamics for a broader class of group-level replication rates and approaches to modeling pairwise group-level conflict.

\begin{lemma}[Characterization of Parameters and Average Payoffs for Example Games (Originally studied in \cite{cooney2020analysis})]
\label{lem:payoffproperties}
For the PD game, $\gamma$ and $\alpha$ can take either sign. There are several different possible behaviors for the average payoff $G(x)$ based on the values of $\gamma$ and $\alpha$:
\begin{itemize}
    \item If $\gamma > 0$, then the fraction of cooperation $x^*$ maximizing average payoff $G(x)$ has the following piecewise characterization
    \begin{equation}
  x^* = \left\{
    \begin{array}{lr}
      1 & : \gamma + 2 \alpha \leq 0\\
      \ds\frac{\gamma}{-2\alpha} & : \gamma + 2 \alpha < 0
    \end{array}
  \right. .
    \end{equation}
We therefore see that average payoff is maximized by $x^* = 1$ when $2R > T + S$ (when interaction between two cooperators contributes more to the group's total payoff than the payoff generated by a cooperator and defector). Average payoff is maximized by an intermediate level of cooperation 
\item If $\gamma < 0 < \alpha$, then the average payoff $G(x)$
initially decreases for values of $x$ close to $0$ before reaching a local minimum at $x = \frac{\gamma}{-2 \alpha}$, and then increases until reaching its maximal value at $x^* = 1$. \end{itemize}
For the HD game, the payoff ranking always results in $\gamma > 0 > \alpha$, so the only possibilities are that $G(x)$ is increasing for all $x \in [0,1]$ if $\gamma + 2 \alpha > 0$, or that $G(x)$ has a unique maximum at the point $x^* = \frac{\gamma}{-2\alpha} \in (0,1)$ if $\gamma + 2 \alpha < 0$. In addition, we have the following ranking of payoffs satisfied at the equilibrium points of the within-group dynamics for the HD game:
\begin{equation}
G(1) > G(x_{eq}) > G(0)
\end{equation}
For the SH game, we have that $\alpha > 0$ and $\beta < 0$, but the sign of $\gamma$ cannot be conclusively determined from the SH payoff rankings. This means that $G(x)$ can either be an increasing or decreasing function when the fraction of cooperators $x$ is close enough to $0$, but $G(x)$ is always maximized by the all-cooperator composition $x^* = 1$ and $G(x)$ is always increasing for $x$ close enough to $1$. Furthermore, we have the following ranking of payoffs for the equilibrium points of the within-group dynamics:
\begin{equation}
G(0) < G(x_{eq}) < G(1).
\end{equation}
\end{lemma}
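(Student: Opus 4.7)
The plan is to treat $G(x) = P + \gamma x + \alpha x^2$ as a quadratic in $x$ and locate its extrema on $[0,1]$ from the sign of $\alpha$ (which fixes concavity) together with the position of the critical point $x_c = \gamma/(-2\alpha)$ relative to $[0,1]$. The first step is to read off, from the payoff rankings \eqref{eq:PDpayoffs}--\eqref{eq:SHpayoffs}, which signs of $\alpha = R - S - T + P$, $\beta = S - P$, and $\gamma = S + T - 2P = (S-P) + (T-P)$ are forced by each game: for PD only $\beta < 0$ is forced; for HD the inequalities $T > R$ and $S > P$ give $\alpha < 0$, $\beta > 0$, and $\gamma > 0$; for SH the inequalities $R > T$ and $P > S$ give $\alpha > 0$ and $\beta < 0$ while the sign of $\gamma$ is free.

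For the PD statements I would split on the sign of $\alpha$. When $\gamma > 0$ and $\alpha \geq 0$, $G$ is convex with $G'(0) = \gamma > 0$, hence increasing on $[0,1]$ and $x^* = 1$. When $\gamma > 0$ and $\alpha < 0$, $G$ is concave with $x_c > 0$, and careful sign-tracking (keeping in mind that multiplication by $2\alpha < 0$ flips inequalities) gives $x_c \geq 1 \Longleftrightarrow \gamma + 2\alpha \geq 0$, which reproduces the stated piecewise formula for $x^*$. For the subcase $\gamma < 0 < \alpha$, $G$ is convex with $G'(0) = \gamma < 0$, so the unique critical point $x_c > 0$ is necessarily a local minimum; here I invoke the PD-only chain $\gamma < 0 \Rightarrow S + T < 2P < 2R$, i.e.\ $\gamma + 2\alpha = 2R - S - T > 0$, which forces $x_c \in (0,1)$, and then use $G(1) - G(0) = \alpha + \gamma = R - P > 0$ to place the global maximum at $x^* = 1$.

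For HD, the signs $\alpha < 0 < \gamma$ and $x_c > 0$ immediately reduce the maximization to the two stated cases, separated by the sign of $\gamma + 2\alpha$. For the equilibrium-payoff ranking I would use the key identity that at any interior equilibrium the within-group payoffs coincide, $\pi_C(x_{eq}) = \pi_D(x_{eq})$, so
\begin{equation}
G(x_{eq}) = x_{eq}\pi_C(x_{eq}) + (1-x_{eq})\pi_D(x_{eq}) = \pi_C(x_{eq}) = \pi_D(x_{eq}),
\end{equation}
which collapses the required comparisons to
\begin{align*}
G(1) - G(x_{eq}) &= R - \left[x_{eq}R + (1-x_{eq})S\right] = (1-x_{eq})(R-S), \\
G(x_{eq}) - G(0) &= \left[x_{eq}T + (1-x_{eq})P\right] - P = x_{eq}(T-P),
\end{align*}
both strictly positive from $R > S$, $T > P$, and $x_{eq} \in (0,1)$. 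For SH, convexity of $G$ plus $G(1) - G(0) = R - P > 0$ yields $x^* = 1$; the behavior near $0$ is dictated by $\mathrm{sgn}(G'(0)) = \mathrm{sgn}(\gamma)$, and $G'(1) = \gamma + 2\alpha = 2R - S - T > 0$ (since $R > T > S$) gives monotonicity near $1$. The ranking $G(0) < G(x_{eq}) < G(1)$ follows from exactly the same two displayed identities, which only use $R > S$, $T > P$, and $x_{eq} \in (0,1)$, all valid for SH.

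The main obstacle is the sign-tracking around $x_c = \gamma/(-2\alpha)$, and in particular the observation that in the PD subcase $\gamma < 0 < \alpha$ the local minimum is automatically interior because $R > P$ forces $2R > 2P > S + T$; this is the only place where a payoff-ranking constraint has to be squeezed out beyond the direct signs of $\alpha, \beta, \gamma$. Everything else reduces to the uniform trick $G(x_{eq}) = \pi_C(x_{eq}) = \pi_D(x_{eq})$, which makes the equilibrium payoff rankings fall out in one line.
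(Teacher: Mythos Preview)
Your argument is correct. The paper does not supply its own proof of this lemma; it is stated as a collection of facts from prior work (the citation to \cite{cooney2020analysis} in the lemma heading), so there is no in-paper argument to compare against. Your approach---reducing everything to the concavity/convexity of $G$ and the location of the critical point $x_c = \gamma/(-2\alpha)$, together with the single identity $G(x_{eq}) = \pi_C(x_{eq}) = \pi_D(x_{eq})$ for the equilibrium rankings---is exactly the natural elementary route, and each step checks out.

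Two minor remarks. First, you have correctly worked around what appears to be a typo in the lemma's piecewise formula (the stated condition ``$\gamma + 2\alpha \leq 0$'' for $x^* = 1$ should read ``$\gamma + 2\alpha \geq 0$'', as your derivation shows). Second, your observation that in the PD subcase $\gamma < 0 < \alpha$ one automatically has $\gamma + 2\alpha = 2R - S - T > 0$ (via $2R > 2P > S + T$) is the key non-obvious step, and you have it right; this is what guarantees the local minimum is interior and the maximum sits at $x^* = 1$.
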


\subsubsection{Group-Level Victory Probabilities}
We now define a family of group-level victory probabilities, which allow us to study the different possible ways that pairwise group competition can depend on the strategic composition of groups or differences in the average payoff of competing groups. Each of these group-level victory probabilities we discuss has previously been explored for this PDE model for multilevel selection with pairwise group-level competition when within-group dynamics feature competition between defectors and altruistic punishers \cite{cooney2024exploring}.

There are multiple ways in which we can map differences in average payoff of group members to the probability of the victory of an $x$-cooperator group over a $y$-cooperator group in a pairwise conflict. Boyd and coauthors considered a group-level victory probability of the form
\begin{equation}
\rho(x,y) = \frac{1}{2} + \frac{1}{2} \left( x - y \right),
\end{equation}
assuming that the advantage in group-level competition should be proportional to the difference in the fraction of cooperative individuals between the two competing groups \cite{boyd2003evolution}.

We can also consider group level victory probabilities that depend on the difference of average payoffs $G(x)$ and $G(y)$ achieved by members of $x$-cooperator and $y$-cooperator groups. A first example of such a group-level victory probability $\rho(x,y)$ is given by

\begin{equation}
\rho(x,y) = \frac{1}{2} \left[ 1 + \frac{G(x) - G(y)}{G^* - G_{*}}\right],
\end{equation}
where $G^* = \max_{x \in [0,1]} G(x)$ and $G_{*} = \min_{x \in [0,1]} G(x)$. This group-level victory probability is based on the local update rule for individual-level learning \cite{traulsen2005coevolutionary}. This is an example of an addtively separable group-level victory probability with net group-level replication rate $\mc{G}(x)= \frac{G(x)}{G^*-G_*}$, so the long-time behavior of the multilevel dynamics for this victory probability can be established using existing results for two-level replicator equations \cite{cooney2022long,cooney2024exploring}. 

As a first example of a group-level victory probability that is not additively separable, we can consider a group-level version of the Fermi update rule for individual-level selection \cite{traulsen2006stochastic}. This constitutes of a victory probability of the form
\begin{equation}
\rho(x,y) = \frac{1}{2} \left(1 + \tanh\left(s \left[ G(x) - G(y)  \right] \right) \right),
\end{equation}
where $s \geq 0$ describes the sensitivity of group-level competition to differences in payoffs between the competing groups. We will use this group-level victory probability for our simulations in the main text, exploring how multilevel dynamics with a nonlinear group-level replication rate can help to promote the long-time evolution of cooperation.  

In the appendix, we will also consider two other group-level victory probabilities that are not additively separable. The first of these victory probabilities is based on models of pairwise competition for individual-level selection \cite{morgan2003pairwise,schluter2016robustness}, and is given by
\begin{equation}
\rho(x,y) = \frac{1}{2} + \frac{1}{2} \left[ \frac{G(x) - G(y)}{|G(x)| + |G(y)|} \right].
\end{equation}
This is a modified version of the group-level local update rule that determines victory based on the difference in average payoff of the competing groups normalized by the absolute values of the payoffs for the two competing groups. The second victory probability we consider in the appendix is
\begin{equation}
\rho(x,y) = \frac{\left(G(x) - G_*\right)^{1/a}}{\left(G(x) - G_*\right)^{1/a} + \left(G(y) - G_*\right)^{1/a}\textbf{}},
\end{equation}
where the parameter $a$ describes a sensitivity of group-level victory probabilities on average payoff of group members. This victory probability is based on the Tullock contest function used in the economics and animal behavior literature to describe the ability to win a contest over resources based on the input effort by competing individuals or groups \cite{tullock2008efficient,rusch2020logic,tverskoi2021dynamics,lerch2024flexible}.

\begin{remark}
The formulation of group-level victory probabilities with parameters like $s$ or $a$ describing sensitivity to payoff differences allows us to separate the effects of the relative speed of group-level competition and the strength of group-level selection in our PDE models of multilevel selection. In such models, the relative speed of group-level selection is quantified by the group level conflict rate $\lambda$, while sensitivity parameters like $s$ and $a$ quantify the strength of group-level selection. This stands in contrast to the case of two-level replicator equations in the form of Equation \eqref{eq:twolevelreplicator}, in which the single parameter $\lambda$ captures the combined effects of the relative rates and strengths of group-level selection events. 
\end{remark}

Having provided a variety of example group-level victory probabilities that depend on the average payoffs of competing groups, we can now use our understanding of the payoff structure for two-player, two-strategy social dilemmas to formulate general assumptions we will use to describe victory probabilities for generalized PD, HD, and SH scenarios. In particular, we would like to describe families of group-level victory probabilities that can generalize the properties of group-level competition that are collected in Lemma \ref{lem:payoffproperties}.

For all of the social dilemma games under consideration, we have that the average payoff function $G(x)$ satisfies $G(1) > G(0)$, so we will assume that the probabilities of victory in group-level competition will satisfy
\begin{equation}
\rho(1,0) > \frac{1}{2} > \rho(0,1).
\end{equation}
For the cases of Hawk-Dove and Stag-Hunt games, we know that the average payoffs satisfy $G(1) > G(x_{eq})$, so we will further assume that
\begin{equation}
\begin{aligned}
\rho(1,x_{eq}) &> \frac{1}{2} > \rho(x_{eq},1) 
\end{aligned}
\end{equation}
For the SH and PDel games, we know that average payoff is an increasing function of $x$ when the fraction of cooperators is close enough to $1$, so we will further impose the assumptions that there is a point $z_{min}$ such that
\begin{equation}
\rho(x,y) > \rho(y,x)
\end{equation}
for any $x > z_{min} > y$ and that
\begin{equation}
\rho(x,u) > \rho(y,u)
\end{equation}
for any $u \in [0,1]$ if $x > z_{min} > u$. These two assumptions captures the fact that there is a level of cooperation $z_{min}$ for which groups featuring $x > z_{min}$ cooperators will have a collective advantage in pairwise competition over all groups featuring a fraction of cooperators less than $z_{min}$. These assumptions reflect the property described in Lemma \ref{lem:payoffproperties} that there is a value $z_{min} > x_{eq}$ such that, for the SH game, the  average payoff $G(x)$ is increasing $x > z_{min}$ and that $G(x)$ exceeds the average payoff $G(y)$ for all lower fractions of cooperation $y < x$ if $x > z_{min}$. We will use these assumptions on the group-level victory probability $\rho(x,y)$ to analyze the long-time behavior for multilevel selection for a generalized SH scenario in Section \ref{sec:SHdelta}.

\section{Measure-Valued Formulation of the PDE Model and Well-Posedness in the Space of Measures}
\label{sec:measureproperties}

In this section, we address general properties of measure-valued solutions to our PDE model for multilevel selection with pairwise group-level competition. We first prove well-posedness of the weak formulation of our PDE model given in Equation \eqref{eq:PDEmeasure} in Section \ref{sec:wellposednessresults}. Our approach will be to consider an associated linear PDE and to use the well-posedness of solutions for the linear PDE and a fixed point argument to show the existence of unique solutions to address well-posedness of Equation \eqref{eq:multilevelPDEtworho}.

We will then apply our implicit representation formula for solutions to Equation \eqref{eq:PDEmeasure}, showing in Section \ref{sec:holderpreserve} that the infimuma and supremum H{\"o}lder exponents of the strategic composition of the population near $x=1$ are preserved under the dynamics of our model of multilevel selection. We will further use this implicit representation formula in Section \ref{sec:PDbehavior} to demonstrate that defectors will take over the entire population when between-group competition is sufficiently weak. 

\subsection{Main Results on Well-Posedness in Space of Measures}
\label{sec:wellposednessresults}

We are looking to establish the existence and uniqueness of a flow of measures $\mu := \{ \mu_t(dx)\}_{t \in [0,T]}$ that satisfies the following weak formulation of our PDE model for multilevel selection
\begin{equation}
\begin{aligned}
\dsdel{}{t} \int_0^1 v(x) \mu_t(dx) &= -\int_0^1 v'(x) x (1-x) \pi(x) \mu_t(dx) + \lambda \int_0^1 v(x) \left[ 2 \int_0^1 \rho(x,y) \mu_t(dy) - 1 \right] \mu_t(dx) 
\end{aligned}
\end{equation}
for any test function $v(x) \in C^1\left([0,1]\right)$ and whose initial distribution is given by a prescribed initial measure $\mu_0(dx)$. 

To further explore this, we now introduce an associated linear PDE that we can use to demonstrate the existence of measure-valued solutions to Equation \eqref{eq:multilevelPDEtworho} through an interation scheme. Given an arbitrary %
flow of measures $\nu := \{\nu_t\}_{t \in [0,T]} \in C\left([0,T];\mc{M}([0,1]\right)$, we can define the following linear PDE 
\begin{equation} \label{eq:PDEhlinear}
\begin{aligned}
\dsddt{} \int_0^1 v(x) \mu_t^{\nu}(dx) &= -\int_0^1 v'(x) x(1-x) \pi(x) \mu_t^{\nu}(dx) + \lambda \int_0^1 v(x) \left[ 2 \int_0^1 \rho(x,y) \nu_t(dy)  - 1 \right] \mu_t^{\nu}(dx) \\
\mu_0^{\nu}(dx) &= \mu_0(dx)
\end{aligned}
\end{equation}
that has the same prescribed initial measure as our full nonlinear PDE model. This is a linear hyperbolic PDE whose characteristic curves are satisfy Equation \eqref{eq:characteristics}, which is the replicator equation for individual-level selection. We denote by $\phi_t(x_0)$ the solution to this ODE starting from the initial point $x_0$, and we denote by $\phi_t^{-1}(x)$ the solutions of the characteristic ODE backward in time from a given point $(t,x)$, meaning that $\phi_t^{-1}(x)$ satisfies

\begin{equation} \label{eq:backwardschar}
\begin{aligned}
\dsddt{\phi_t^{-1}(x)} &= \phi_t^{-1} \left( 1 - \phi_t^{-1}(x) \right) \pi\left(\phi_t^{-1}(x)\right) \\
\phi_0^{-1}(x) &= x. 
\end{aligned}
\end{equation}

To study existence of measure-valued weak solutions to the hyperbolic model, we can adapt an approach previously used to study the two-level replicator equation with a frequency-independent group-level reproduction term \cite{cooney2020analysis}.  Here we are considering continuity of the trajectories $\mu = \{\mu_t\}_{t \geq 0}$ with respect to the weak-$*$ topology on $\mc{M}([0,1])$, meaning that we consider trajectories for which the family of integrals
\begin{equation}
\langle v , \mu_t \rangle := \int_0^1 v(x) \mu_t(dx)
\end{equation}
will be a continuous function of $t$ for all test functions $v(x) \in C([0,1])$. Accordingly, we consider convergence of a flow of measures $\mu = \{\mu_t\}_{t \geq 0}$ to a limit $\mu_{\infty}$ provided that, for each test function $v(x) \in C([0,1])$,
\begin{equation}
\lim_{t \to t_1} \int_0^1 v(x) \mu_t(dx) = \int_0^1 v(x) \mu_{\infty}dx
\end{equation}
for each test function $v(x) \in C\left([0,1]\right)$.

This assumption on the space of solutions is in line with existing work on measure-valued solutions for PDEs arising in evolutionary games featuring mixed strategies or continuous strategy sets \cite{cleveland2013evolutionary,martin2024asymptotic}, models of the dynamics of structured populations \cite{ackleh2005rate,canizo2013measure,ackleh2016population,ackleh2020well,gabriel2022periodic}, and in models motivated by epidemiology \cite{ackleh2024multiple} and neuroscience \cite{canizo2019asymptotic}. 

In this paper, we will consider solutions in the space of flows of measures $C([0,T]; \mc{M}[0,1])$, where $\mc{M}([0,1])$ denote finite signed Borel measures on $[0,1]$. For pairs of measures $\mu_t, \nu_t \in \mc{M}[0,1]$ at a fixed time $t$, we will consider the total variation distance
\begin{equation}
|| \mu_t - \nu_t ||_{TV} := \sup_{\substack{v \in L^{\infty}([0,1]) \\ ||v||_{\infty} = 1}} | \langle v,\mu_t - \nu_t \rangle | = \sup_{\substack{v \in L^{\infty}([0,1]) \\ ||v||_{\infty} = 1}} \bigg| \int_0^1 v(x) \mu_t(dx) - \int_0^1 v(x) \nu_t(dx) \bigg|.
\end{equation}
Because we will consider solutions that are continuous in the weak-* topology, we will also describe the distance between measures in terms of the bounded Lipschitz norm, which is defined as
\begin{equation}
|| \mu_t - \nu_t ||_{BL} := \sup_{\substack{v \in W^{1,\infty}([0,1]) \\  ||v||_{1,\infty} \leq 1}} \bigg| \int_0^1 v(x) \mu_t(dx) - \int_0^1 v(x) \nu_t(dx) \bigg|,
\end{equation}
where $W^{1,\infty}\left([0,1]\right)$ denotes the set of weakly differentiable functions equipped with the norm
\begin{equation}
|| v ||_{1,\infty} := ||v||_{\infty} + ||v'||_{\infty}.
\end{equation}
We can then consider the distance between the flows of measure $\mu = \{\mu_t\}_{t \in[0,T]}$ and $\nu = \{\nu_t\}_{t \in[0,T]}$ using the norm
\begin{equation}
|| \mu - \nu ||_{C([0,T]; \mc{M}[0,1])} := \sup_{t \in [0,T]} || \mu_t - \nu_t ||_{TV}.
\end{equation}

To establish the well-posedness of our full nonlinear PDE model through a fixed point argument, we would like to construct an iteration scheme by defining a new flow of measures $H(\{\nu\})$ using our auxiliary linear PDE. We can define the iteration map as
\begin{equation}
H(\nu) = \{H(\nu)_t \}_{t \in [0,T]}  := \{\mu_t^{\nu}\}_{t \in [0,T]},
\end{equation}
where $\mu^{\nu} := \{\mu_t^{\nu}\}_{t \in [0,T]}$ is the flow of measures that solves Equation \eqref{eq:PDEhlinear} for $t \in [0,T]$ starting from the initial measure $\mu_0$. In particular, we can use the representation formula for the solution to the linear PDE for $\mu^{\nu}$ to write this mapping as a trajectory of the form
\begin{equation}
\begin{aligned}
\int_0^1 v(x) H(\nu)_t(dx) &= \int_0^1 v(x) \mu_t^{\nu}(dx) \\
&= \int_0^1 v(\phi_t(x)) \exp\left( \lambda \left[ \int_0^t \left\{ 2 \int_0^1 \rho(\phi_s(x),y) \nu_s(dy) \right\} - t\right]\right) \mu_0(dx).
\end{aligned}
\end{equation}
\sloppy{In Lemma \ref{lem:linearwellposedness}, we characterize well-posedness of the the associated linear problem by showing that there exists a unique a flow of measures $\mu^{\nu} = \{\mu_t^{\nu}\}_{t \in [0,T]} \in C\left([0,T];\mc{M}([0,1]\right) \cap C^1\left([0,T];\left(C^1\right)^*\right)$ that satisfies the linear PDE of Equation \eqref{eq:PDEhlinear}, so the mapping $H(\{\nu_t\})$ is well-defined. This approach also allows us to obtain an explicit representation formula for the solution to Equation \eqref{eq:PDEhlinear} using the method of characteristics.}

\begin{lemma} \label{lem:linearwellposedness}
\sloppy{Suppose that the group-level victory probability is given by $\rho(x,y) \in C^1\left([0,1]^2\right)$ %
and that the individual-level replication advantage for defectors is given by  $\pi(x) \in C^1\left([0,1]\right)$. Given $T > 0$, an arbitrary initial probability measure $\mu_0(dx)$, and a flow of measures $\nu := \{ \nu_{t} \}_{t \in [0,T]} \in C\left([0,T];\mc{M}([0,1]\right)$, there exists a unique flow of measures $\mu_t^{\nu}(dx) \in C\left([0,T];\mc{M}([0,1]\right) \cap C^1\left([0,T];\left(C^1\right)^*\right)$  that satisfies Equation \eqref{eq:PDEhlinear} for each test function $v(x) \in C^1([0.1])$ and for all $t \in [0,T]$. Furthermore, we can obtain the following representation formula for the flow of measures $\mu_t^{\nu}(dx)$ solving Equation \eqref{eq:PDEhlinear} for all $t \in [0,T]$:}
\begin{equation} \label{eq:hlinearrepresentation}
\int_0^1 v(x) \mu_t^{\nu}(dx) = \int_0^1 v(\phi_t(x)) \exp\left( \lambda \left[ \int_0^t \left\{ 2 \int_0^1 \rho(\phi_s(x),y) \nu_s(dy) \right\} - t\right]\right) \mu_0(dx).
\end{equation}
\end{lemma}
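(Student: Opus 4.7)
The plan is to use the method of characteristics to construct an explicit candidate measure and then verify both existence and uniqueness. First, I would observe that the characteristic ODE $\dot{x} = -x(1-x)\pi(x)$ is driven by a $C^1$ vector field on $[0,1]$ that vanishes at the endpoints $0$ and $1$, so it generates a $C^1$ flow $\phi_t : [0,1] \to [0,1]$ preserving the unit interval and invertible for all $t \in \RR$, with $\phi_t(x)$ jointly $C^1$ in $(t,x)$. This joint regularity will be essential for every subsequent differentiation.

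Next, define the candidate flow via the explicit formula
\begin{equation*}
\langle v, \mu^{\nu}_t \rangle := \int_0^1 v(\phi_t(x))\, W^{\nu}_t(x)\, \mu_0(dx), \qquad W^{\nu}_t(x) := \exp\left(\lambda \int_0^t \left[2 \int_0^1 \rho(\phi_s(x), y)\, \nu_s(dy) - 1\right] ds\right).
\end{equation*}
Since $\rho$ is bounded on $[0,1]^2$ and $\nu \in C([0,T]; \cM([0,1]))$ has locally bounded total variation, the exponent is a well-defined function that is jointly continuous in $(t,x)$, so $\mu^{\nu}_t$ is a finite signed Borel measure whose total variation grows at most exponentially in $t$. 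Continuity of $t \mapsto \mu^{\nu}_t$ in the weak-$*$ sense then follows from dominated convergence using continuity of $\phi_{\cdot}(x)$ and $W^{\nu}_{\cdot}(x)$. To verify the weak PDE, I would differentiate $\langle v, \mu^{\nu}_t \rangle$ in $t$ for $v \in C^1([0,1])$ via the product rule, using $\frac{d}{dt}\phi_t(x) = -\phi_t(x)(1-\phi_t(x))\pi(\phi_t(x))$ and $\frac{d}{dt}W^{\nu}_t(x) = \lambda [2\int_0^1 \rho(\phi_t(x),y)\nu_t(dy) - 1]W^{\nu}_t(x)$, then apply the change of variables $z = \phi_t(x)$ (i.e., push forward by $\phi_t$) to match the right-hand side of Equation \eqref{eq:PDEhlinear}. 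This same computation establishes the $C^1([0,T]; (C^1)^*)$ regularity of the flow.

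The more delicate part is uniqueness. Suppose $\tilde{\mu}^{\nu}$ is a second solution and set $\zeta_t := \mu^{\nu}_t - \tilde{\mu}^{\nu}_t$, so $\zeta_t$ satisfies the same linear weak PDE with $\zeta_0 = 0$. I would employ a duality argument: for fixed $T' \in (0,T]$ and $v \in C^1([0,1])$, construct $\psi: [0,T'] \times [0,1] \to \RR$ solving the adjoint backward problem
\begin{equation*}
\partial_s \psi - x(1-x)\pi(x)\, \partial_x \psi + \lambda \left[2\int_0^1 \rho(x,y)\,\nu_s(dy) - 1\right]\psi = 0, \qquad \psi(T',x) = v(x),
\end{equation*}
whose closed-form solution along characteristics is
\begin{equation*}
\psi(s,x) = v(\phi_{T'-s}(x)) \exp\left(\lambda \int_s^{T'} \left[2\int_0^1 \rho(\phi_{r-s}(x),y)\,\nu_r(dy) - 1\right] dr\right).
\end{equation*}
Using $\psi(s,\cdot)$ as a time-dependent test function in the weak PDE for $\zeta$ and invoking an integration-by-parts in $s$, the adjoint equation for $\psi$ forces every term in the integrand to cancel, yielding $\langle v, \zeta_{T'} \rangle = \langle \psi(T',\cdot), \zeta_{T'} \rangle = \langle \psi(0,\cdot), \zeta_0 \rangle = 0$. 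Since $v \in C^1([0,1])$ was arbitrary and $C^1([0,1])$ is dense in $C([0,1])$, this forces $\zeta_{T'} \equiv 0$.

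The main obstacle I anticipate is ensuring enough regularity of the dual test function $\psi(s,x)$ to justify the duality step, specifically obtaining $\partial_x\psi \in C^0([0,T']\times[0,1])$ and Lipschitz dependence in $s$ under only the continuity-in-time hypothesis on $\nu$. The candidate formula for $\psi$ inherits $C^1$ dependence in $x$ from $\phi_{\cdot}$ and $\rho$, while the $s$-dependence of the exponent is Lipschitz because $\rho$ is bounded and $\nu \in C([0,T];\cM([0,1]))$; this should be sufficient to validate the integration-by-parts, but if the direct computation proves awkward one could instead mollify $\nu$ in time, solve the regularized adjoint problem, and pass to the limit using the exponential bounds on $W^{\nu}$.
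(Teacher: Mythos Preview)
Your existence argument is essentially identical to the paper's: both define the candidate measure via the explicit push-forward formula with the exponential weight $W^{\nu}_t$, verify weak-$*$ continuity in $t$ by dominated convergence, and check the weak PDE by differentiating under the integral using $\frac{d}{dt}\phi_t(x) = -\phi_t(x)(1-\phi_t(x))\pi(\phi_t(x))$ together with the obvious time derivative of the exponential factor.

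The uniqueness arguments, however, are genuinely different. The paper takes two solutions, integrates the weak formulation for their difference against a fixed $v\in C^1$, and bounds $|\langle v,\mu^{\nu}_t-\tilde{\mu}^{\nu}_t\rangle|$ by a constant (depending on $\|v\|_\infty$ and $\|v'\|_\infty$) times $\int_0^t \|\mu^{\nu}_s-\tilde{\mu}^{\nu}_s\|_{TV}\,ds$, then invokes Gr{\"o}nwall. Your route instead constructs the adjoint test function $\psi(s,x)=v(\phi_{T'-s}(x))\exp\big(\lambda\int_s^{T'}[2\int\rho(\phi_{r-s}(x),y)\nu_r(dy)-1]\,dr\big)$ along backward characteristics and pairs it with the difference measure, so that the transport and source terms cancel identically and one reads off $\langle v,\zeta_{T'}\rangle=0$ directly. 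Your regularity concern is the right one and is met here: since $\phi$ is a $C^1$ flow and $\rho\in C^1([0,1]^2)$, the formula gives $\psi(s,\cdot)\in C^1([0,1])$, while $s\mapsto\psi(s,x)$ is $C^1$ because the integrand in the exponent is continuous in $(r,s,x)$ and differentiable in $s$ via the chain rule on $\phi_{r-s}(x)$. The duality approach is slightly longer to set up but sidesteps the bookkeeping of matching norms on the two sides of a Gr{\"o}nwall inequality (the paper's estimate carries a $\|v'\|_\infty$ factor that must be handled with some care when passing to a supremum over test functions). Conversely, the paper's Gr{\"o}nwall argument is more elementary once the norm issue is managed, requiring no construction of an auxiliary backward solution.
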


Next, we use a fixed point theorem to show how we can use solutions to the linear PDE of Equation \eqref{eq:PDEhlinear} to demonstrate the existence and uniqueness of a solution to the full nonlinear of Equation \eqref{eq:PDEmeasure} for the multilevel dynamics with pairwise between-group competition. The approach we use builds off of prior work on establishing well-posedness of measure-valued solutions for hyperbolic PDE models in mathematical biology \cite{canizo2013measure,canizo2019asymptotic,ackleh2020well}, and similar approaches have also been applied for related PDE models with more regular classes of solutions \cite{dawidowicz1986existence,eftimie2009weakly,qu2023modeling}. This fixed point argument allows us to obtain an implicit representation formula for the solution $\mu_t$ to the full nonlinear model from Equation \eqref{eq:PDEmeasure}, which we can use to study dynamical properties of solutions.  

\begin{theorem} \label{thm:nonlinearwellposedness}
Suppose that the group-level victory probability %
is given by $\rho(x,y) \in C^1\left([0,1]^2\right)$ for all $(x,y) \in [0,1]^2$,
and that the individual-level replication advantage for defectors is given by  $\pi(x) \in C^1\left([0,1]\right)$. For any $T > 0$ and any initial probability measure $\mu_0(dx)$, there exists a unique flow of measures ${\mu_t(dx)}_{t \in [0,T]}  \in C\left([0,T];\mc{M}([0,1])\right) \cap C^1\left([0,T];\left(C^1\right)^*\right)$  that satisfies Equation \eqref{eq:PDEmeasure} for each test function $v(x) \in C^1([0.1])$ and for all $t \in [0,T]$.The solution satisfies the implicit representation formula
\begin{equation} \label{eq:mutimplicit}
\int_0^1 v(x) \mu_t(dx) = \int_0^1 v(\phi_t(x)) \exp\left( 2 \lambda\int_0^t  \int_0^1 \rho(x,y) \mu_s(dy) ds - \lambda t \right) \mu_0(dx)
\end{equation}
for each $t \in [0,T]$. Furthermore, the measure $\mu_t$ for the  solution $\mu = \{ \mu_t \}_{t \in [0,T]}$ is a probability measure for each $t \in [0,T]$. 
\end{theorem}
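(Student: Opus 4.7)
The plan is to execute a Banach fixed point argument built on the iteration map $H$ defined just above the theorem via Lemma \ref{lem:linearwellposedness}. For any input flow $\nu \in C([0,T]; \mc{M}([0,1]))$, the representation formula \eqref{eq:hlinearrepresentation} produces a new flow $H(\nu)$, and a fixed point $\mu = H(\mu)$ is exactly a solution of the implicit representation formula \eqref{eq:mutimplicit}; differentiating in $t$ against a $C^1$ test function will recover the weak formulation \eqref{eq:PDEmeasure}. I would work on the closed ball $B_R := \{\nu : \sup_{t \in [0,T_0]} \|\nu_t\|_{TV} \leq R\}$ inside the Banach space of flows with the TV-based norm from the excerpt, choosing some $R > 1$ so that any probability measure solution is admissible.

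For the core estimate, given $\nu, \tilde{\nu} \in B_R$ and a test function $v$ with $\|v\|_\infty \leq 1$, I would subtract the two instances of \eqref{eq:hlinearrepresentation} and apply the mean value inequality $|e^a - e^b| \leq e^{\max(|a|,|b|)}|a-b|$. Since $0 \leq \rho \leq 1$ the exponents are bounded by $\lambda T_0(2R+1)$ in absolute value, and the difference of exponents is controlled by $2\lambda \int_0^t \|\nu_s - \tilde\nu_s\|_{TV}\, ds$ because $\bigl|\int_0^1 \rho(\phi_s(x),y)[\nu_s - \tilde\nu_s](dy)\bigr| \leq \|\nu_s - \tilde\nu_s\|_{TV}$. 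Taking the supremum over $v$ and over $t \in [0,T_0]$ yields
\begin{equation*}
\|H(\nu) - H(\tilde{\nu})\|_{C([0,T_0]; \mc{M}[0,1])} \leq 2\lambda T_0 \, e^{\lambda T_0(2R+1)} \, \|\nu - \tilde{\nu}\|_{C([0,T_0]; \mc{M}[0,1])}.
\end{equation*}
An analogous bound $\|H(\nu)_t\|_{TV} \leq e^{\lambda T_0(2R+1)}$ shows $H(B_R) \subset B_R$ provided $T_0$ is small, and for $T_0$ sufficiently small the prefactor above drops below $1$. Banach's fixed point theorem then delivers a unique fixed point $\mu$ on $[0,T_0]$, which we identify as the desired weak solution by chain-rule differentiation of the representation formula against $v \in C^1$.

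To show each $\mu_t$ is a probability measure and to extend the local solution to the arbitrary prescribed $T$, I would test \eqref{eq:PDEmeasure} against $v(x) \equiv 1$, which kills the transport term and produces
\begin{equation*}
m'(t) = 2\lambda \iint \rho(x,y)\, \mu_t(dy)\mu_t(dx) - \lambda m(t), \qquad m(t) := \mu_t([0,1]).
\end{equation*}
The symmetry $\rho(x,y) + \rho(y,x) = 1$ together with the relabeling $(x,y) \leftrightarrow (y,x)$ gives $2\iint \rho\, d\mu_t d\mu_t = m(t)^2$, so $m'(t) = \lambda m(t)(m(t)-1)$. With $m(0)=1$, uniqueness for this scalar ODE forces $m(t) \equiv 1$, and nonnegativity of $\mu_t$ follows from the positive exponential and the nonnegativity of $\mu_0$ in \eqref{eq:mutimplicit}. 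This uniform mass bound lets me restart the fixed point argument from $\mu_{T_0}$ on $[T_0, 2T_0]$ using the same $R$ and $T_0$, and after finitely many iterations I reach the terminal time $T$.

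The main technical obstacle will be pinning down the equivalence between fixed points of $H$ and weak solutions of \eqref{eq:PDEmeasure}, rather than the contraction estimate itself. The forward implication (fixed point $\Rightarrow$ weak solution) amounts to differentiating \eqref{eq:mutimplicit} in $t$, using the chain rule on both the characteristic $\phi_t(x)$ and the exponential factor, and then invoking the $C^1$ regularity of $\pi$ and $\rho$; the backward implication (weak solution $\Rightarrow$ satisfies \eqref{eq:mutimplicit}) requires a Duhamel-type computation along characteristics, in which one treats any weak solution $\mu$ as the input $\nu$ to the linear problem and appeals to the uniqueness half of Lemma \ref{lem:linearwellposedness}. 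The regularity statements $\mu \in C([0,T]; \mc{M}([0,1]))$ and $\mu \in C^1([0,T]; (C^1)^*)$ then follow from continuity and differentiability of the integrand in \eqref{eq:mutimplicit} with respect to $t$.
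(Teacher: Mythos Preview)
Your approach is correct and follows the same overall strategy as the paper---a Banach fixed point argument for the map $H$ built from Lemma~\ref{lem:linearwellposedness}---but there is a genuine technical difference in the metric you place on the space of flows. You work with the total variation supremum norm $\sup_t\|\cdot\|_{TV}$ as introduced in the main text; the paper's actual proof instead equips the TV-ball $\mc{X}_R$ with the bounded Lipschitz supremum norm $\sup_t\|\cdot\|_{BL}$, invoking a completeness result of Gwiazda et al.\ for that setup. The BL choice forces the paper to bound $\int_0^1\rho(\phi_s(x),y)\,[\nu_s-\tilde\nu_s](dy)$ by $(\|\rho\|_\infty+\|\partial_y\rho\|_\infty)\,\|\nu_s-\tilde\nu_s\|_{BL}$, and this is exactly where the $C^1$ regularity of $\rho$ in its second argument enters. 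Your TV-based contraction is more elementary, needing only $0\le\rho\le1$. Your mass-conservation argument via the scalar ODE $m'=\lambda m(m-1)$ is also a little cleaner than the paper's, which rewrites $2\int\rho\,\mu_t(dy)-1$ as $\int[\rho(x,y)-\rho(y,x)]\,\mu_t(dy)$ and thereby tacitly uses $\int\mu_t(dy)=1$ before that has been established. Finally, your explicit treatment of the backward implication (weak solution $\Rightarrow$ fixed point of $H$) via the uniqueness half of Lemma~\ref{lem:linearwellposedness} is something the paper leaves implicit.
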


\subsection{Preservation of Infimum and Supremum H{\"o}lder Exponent} \label{sec:holderpreserve}

In this section, we describe how the tail behavior of the measure $\mu_t(dx)$ near the all-cooperator equilibrium $x=1$ is preserved in time under the multilevel dynamics of Equation \eqref{eq:PDEmeasure} with pairwise group-level competition. To do this, we use the quantities known as the infimum and supremum H{\"o}lder exponent near $x=1$, which have previously been used to characterize the long-time behavior of PDE models for multilevel selection with frequency-independent group-level competition \cite{cooney2022long}.

\begin{definition} \label{def:infimumHolder}
The infimum \holder exponent $\overline{\theta}$ near $x=1$ satisfies \begin{equation} \label{eq:infholderexponent} \overline{\theta} := \sup \left\{ \Theta \geq 0: \ds\liminf_{x \to 0}\frac{\mu_t([1-x,1])}{x^{\Theta}} = 0 \right\},\end{equation} 
and the associated infimum H{\"o}lder constant $C_{\overline{\theta}}$ is given by
\begin{equation}
    \ds\liminf_{x \to 0} \frac{\mu_t([1-x,1])}{x^{\overline{\theta}}} = C_{\overline{\theta}}.
\end{equation}
\end{definition}
\begin{definition} \label{def:supremumHolder}
The supremum \holder exponent $\underline{\theta}$ near $x=1$ satisfies \begin{equation} \label{eq:supholderexponent} \underline{\theta} := \sup \left\{ \Theta \geq 0: \ds\limsup_{x \to 0}\frac{\mu_t([1-x,1])}{x^{\Theta}} = 0 \right\}, \end{equation}
and the associated supremum H{\"o}lder constant $C_{\underline{\theta}}$ is given by
\begin{equation}
    \ds\limsup_{x \to 0} \frac{\mu_t([1-x,1])}{x^{\underline{\theta}}} = C_{\underline{\theta}}.
\end{equation}
\end{definition}
Intuitively, these two quantities provide upper and lower bounds for the extent to which the strategic distribution of groups $\mu(dx)$ features groups near the all-cooperator composition at $x = 1$. For the two-level replicator equation, it was shown that the supremum H{\"o}lder exponent of the initial population distribution $\mu_0(dx)$ near $x=1$ appeared in the threshold selection strength required to sustain long-term cooperation, while having an initial distribution whose infimum and supremum H{\"o}lder exponents disagreed could result in long-time oscillatory behavior of the multilevel dynamics \cite{cooney2022long}. 

If we further have that a measure $\mu(dx)$ has an infimum and supremum H{\"o}lder exponents and constants near $x=1$ that agree, then we can say that the initial measure has H{\"o}lder exponent $\theta = \overline{\theta} = \underline{\theta}$ and associated H{\"o}lder constant $C_{\theta} := C_{\overline{\theta}} = C_{\underline{\theta}}$ near $x=1$. The H{"o}lder exponent of a measure $\mu(dx)$ near $x=1$ can be described with the following characterization.

\begin{definition}
The measure $\mu(dx)$ has H{\"o}lder exponent $\theta$ with associated H{\"o}lder constant $C_{\theta} \in \RR \cup {\infty}$ near $x=1$ if $\mu(dx)$ has the following tail behavior
\begin{equation}
\lim_{y \to 1} \frac{\mu\left( \left[1-y,1\right]\right)}{y^{\Theta}} = \left\{
    \begin{array}{lr}
      0 & : \Theta < \theta \\
      C_{\theta} & \Theta = \theta \\
      \infty & \Theta > \theta. 
     \end{array}
  \right.
\end{equation}
\end{definition}

We can check from the definition that the family of measures of the form $\mu^{\theta}(dx) = \theta (1-x)^{\theta -1} dx$ have H{\"o}lder exponent $\theta$ with associated H{\"o}lder constant $C_{\theta} = 1$. The measure in this family with $\theta = 1$ is the uniform measure $\mu^1(dx) = dx$, which we will use as our initial population distribution for our numerical simulations of the multilevel dynamics. In prior work on two-level replicator equations, it was shown that the family of density steady states could be parameterized by the H{\"o}lder exponent near $x = 1$ \cite{luo2017scaling,cooney2019replicator,cooney2020analysis,cooney2022long}, so in Section \ref{sec:steadytheta} of the appendix we will also look to characterize necessary conditions for having a steady-state density with H{\"o}lder exponent $\theta$ near $x=1$ for our model of pairwise group-level competition and a generalized multilevel PD scenario. 

In Proposition \ref{prop:Holderpreserve}, we show that the infimum and supremum H{\"o}lder exponents of the population distribution $\mu_t(dx)$ are preserved under the multilevel dynamics described by Equation \eqref{eq:PDEmeasure}. This result suggests that it may be possible to approach understanding the dynamics of our model of multilevel selection with pairwise group conflict in a similar manner to existing results for two-level replicator equations that rely on characterizing the behavior of solutions in terms of the infimum or supremum H{\"o}lder exponents of the initial measure $\mu_0$ near $x=1$. We present the proof of this proposition in Section \ref{sec:infsupHolderpreserved} of the appendix.

\begin{proposition} \label{prop:Holderpreserve} Suppose the initial population is given by measure $\mu_0(dx)$ with infimum H{\"o}lder exponent $\overline{\theta}$ and supremum H{\"o}lder exponent $\underline{\theta}$ near $x = 1$. Then, for all times $t \geq 0$, the measure $\mu_t(dx)$ solving Equation \eqref{eq:PDEmeasure} has infimum and supremum H{\"o}lder exponents near $x = 1$ given by $\overline{\theta}_t = \overline{\theta}$ and $\underline{\theta}_t = \underline{\theta}$. \end{proposition}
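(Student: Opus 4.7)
The plan is to exploit the implicit representation formula from Theorem \ref{thm:nonlinearwellposedness} to transport tail information about $\mu_t$ near $x=1$ back to $\mu_0$ via the characteristic flow. First, I would extend the representation formula from $C^1$ test functions to bounded Borel test functions by a routine approximation argument (uniform approximation of continuous functions by $C^1$ functions, followed by the monotone class theorem). Applied to the indicator $v(x) = \mathbf{1}_{[1-y,1]}(x)$, this yields
\begin{equation*}
\mu_t([1-y,1]) = \int_{\{x \, : \, \phi_t(x) \geq 1-y\}} W(t,x)\, \mu_0(dx), \quad W(t,x) := \exp\!\left( 2\lambda \int_0^t \int_0^1 \rho(\phi_s(x),u)\,\mu_s(du)\,ds - \lambda t \right).
\end{equation*}
Since $x=1$ is a fixed point of the characteristic ODE \eqref{eq:characteristics}, we have $\phi_t(1)=1$, and $\phi_t$ is continuous and strictly monotone on a neighborhood of $x=1$, so $\{x:\phi_t(x)\geq 1-y\} = [\phi_t^{-1}(1-y),1]$ for small $y$.

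The next step is to analyze the characteristic flow near the equilibrium $x=1$. Setting $u=1-x$, the ODE becomes $\dot u = u(1-u)\pi(1-u)$, which linearizes to $\dot u \approx \pi(1)\, u$ at $u=0$. Writing $\psi_t(y) := 1 - \phi_t^{-1}(1-y)$, this yields $\psi_t(y)/y \to e^{-\pi(1) t}$ as $y \to 0^+$; in the degenerate case $\pi(1)=0$, a higher-order expansion of $\pi$ near $x=1$ (valid since $\pi \in C^1$) shows that $\psi_t(y)/y \to 1$. In all cases there is a positive constant $c_t$ with
\begin{equation*}
\psi_t(y) = c_t\, y\, (1+o(1)) \qquad \text{as } y \to 0^+.
\end{equation*}

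I would then observe that the weight $W(t,x)$ is continuous and strictly positive in $x \in [0,1]$ (by continuity of $\rho$, continuity of the characteristic flow, and the weak-$*$ continuity of $s\mapsto \mu_s$), so there exist $0 < W_* \leq W^* < \infty$ with $W_* \leq W(t,x) \leq W^*$ for $x$ in a fixed neighborhood of $1$. Combining this with the two preceding steps produces
\begin{equation*}
W_* \, \mu_0\bigl([1-\psi_t(y),1]\bigr) \leq \mu_t\bigl([1-y,1]\bigr) \leq W^* \, \mu_0\bigl([1-\psi_t(y),1]\bigr)
\end{equation*}
for all sufficiently small $y>0$. Since $\psi_t$ is a homeomorphism of a neighborhood of $0$ fixing the origin with $\psi_t(y) \asymp y$, the substitution $\tilde y = \psi_t(y)$ shows, for any $\Theta > 0$, that $\liminf_{y\to 0} y^{-\Theta}\mu_t([1-y,1]) = 0$ if and only if $\liminf_{\tilde y\to 0} \tilde y^{-\Theta}\mu_0([1-\tilde y,1]) = 0$, with the analogous equivalence for $\limsup$. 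Taking the supremum over admissible $\Theta$ yields $\overline\theta_t = \overline\theta$ and $\underline\theta_t = \underline\theta$.

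The main obstacle is the careful analysis of the characteristic near the equilibrium $x=1$: one must verify that $\psi_t(y)/y$ tends to a strictly positive limit, which requires separate treatment of the generic case $\pi(1)\neq 0$ (direct linearization) and the degenerate case $\pi(1)=0$ (higher-order expansion of the ODE). The remaining technical steps -- extending the representation formula to indicator test functions and bounding the exponential weight uniformly in a neighborhood of $x=1$ -- are routine consequences of the regularity of $\rho$ and $\pi$ together with the representation formula.
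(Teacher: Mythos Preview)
Your argument is correct and follows essentially the same route as the paper: transport $\mu_t([1-y,1])$ to a $\mu_0$-integral over $[\phi_t^{-1}(1-y),1]$ via the implicit representation formula, sandwich the exponential weight between positive constants, and exploit $1-\phi_t^{-1}(1-y)\asymp y$ to carry the H\"older exponents back and forth. The one technical difference is that you obtain the characteristic asymptotic $\psi_t(y)\sim c_t\, y$ by linearizing the flow at the equilibrium $x=1$ (handling the degenerate case $\pi(1)=0$ separately), whereas the paper invokes the explicit formula of Lemma~\ref{lem:backwardcharacteristics}, and you bound the weight $W$ by continuity rather than by the cruder estimate $e^{-\lambda t}\le W\le e^{\lambda t}$ coming from $0\le\rho\le 1$. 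Your treatment is arguably a bit more general, since Lemma~\ref{lem:backwardcharacteristics} is stated under the extra hypothesis $\pi>0$ on all of $[0,1]$, which the proposition itself does not assume.
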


\begin{remark}
The infimum and supremum H{\"o}lder exponents are also preserved in time for the solution to the two-level replicator equation models that have been used to study multilevel selection with frequency-independent group-level competition \cite{cooney2020analysis,cooney2022long}. For the two-level replicator equation, it is also possible to show that the H{\"o}lder exponent itself is preserved in time, which served as an initial suggestion that, out of the infinitely many density steady state solutions for the two-level replicator equation, the long-time behavior of multilevel selection would converge to the unique steady state with the H{\"o}lder exponent for the initial strategic composition of the population. The proof of the preservation of the infimum and supremum H{\"o}lder exponents under the dynamics of Equation \eqref{eq:PDEmeasure} relies only on the fact that $\rho(x,y)$ is a probability, so a more careful study of our representation formula for measure-valued solutions may reveal more precise estimates that will allow us to show that a population with initial H{\"o}lder exponent $\theta$ near $x=1$ will continue to have the same well-defined H{\"o}lder exponent for all positive time. 
\end{remark}

The proof of Proposition \ref{prop:Holderpreserve} relies on the following lemma providing an expression for the impact of within-group selection dynamics on the trajectories of the characteristic curves. This calculation was originally performed for a two-level replicator equation in \cite[Lemma 3]{cooney2022long}, but the result carries through directly for our model with pairwise group-level competition because the advection term is identical for the two models. 

 \begin{lemma}[Originally {\cite[Lemma 3]{cooney2022long}}]%
 \label{lem:backwardcharacteristics}
Suppose that $\pi(x) \in C^1[0,1]$ and $\pi(x) > 0$ for $x \in [0,1]$, and let $\phi_t^{-1}$ be the backward characteristic curve solving \eqref{eq:backwardschar}. For $0<x\leq 1$, we have that
\begin{equation}\label{zest}
\begin{aligned}
\exp(\pi(1)t)(1-\phi_t^{-1}(x))&=(1-x)\exp\left(\int_x^{\phi_t^{-1}(x)} \frac{Q(s)ds}{s\pi(s)}\right),\\
Q(s)&=\frac{\pi(1)-\pi(s)}{1-s}+\pi(s).
\end{aligned}
\end{equation}
Furthermore, we note from our assumptions that $\pi(x) \in C^1([0,1])$ and $\pi(x) > 0$ for $x \in [0,1]$ that the integrand $\tfrac{Q(s)}{s \pi(s)}$ is bounded for $ \in [x,1]$ for any $x > 0$.
\end{lemma}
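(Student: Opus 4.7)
The plan is to exploit the fact that the backward characteristic ODE is separable and reduce the identity to a direct integration computation. Writing $y(t) := \phi_t^{-1}(x)$, the ODE \eqref{eq:backwardschar} reads $dy/dt = y(1-y)\pi(y)$ with $y(0)=x$, and since $\pi > 0$ on $[0,1]$ the denominator never vanishes on the trajectory. Separating variables and integrating from $0$ to $t$ yields
\begin{equation*}
t \;=\; \int_x^{\phi_t^{-1}(x)} \frac{ds}{s(1-s)\pi(s)}.
\end{equation*}

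The key step is to engineer a partial-fraction-style decomposition of the integrand that pulls out exactly the logarithmic term $\tfrac{1}{\pi(1)}\ln\tfrac{1-x}{1-\phi_t^{-1}(x)}$ which, after multiplying by $\pi(1)$ and exponentiating, produces the factor $e^{\pi(1)t}(1-\phi_t^{-1}(x))$ on the left of \eqref{zest}. Guided by this, I would write
\begin{equation*}
\frac{1}{s(1-s)\pi(s)} \;=\; \frac{1}{\pi(1)(1-s)} \;+\; \left[\frac{1}{s(1-s)\pi(s)} - \frac{1}{\pi(1)(1-s)}\right],
\end{equation*}
and simplify the bracketed remainder by a direct calculation: combine over the common denominator $\pi(1)s(1-s)\pi(s)$, use the algebraic identity $\pi(1)-s\pi(s) = [\pi(1)-\pi(s)] + (1-s)\pi(s)$, and factor to obtain
\begin{equation*}
\frac{1}{s(1-s)\pi(s)} - \frac{1}{\pi(1)(1-s)} \;=\; \frac{1}{\pi(1) s\pi(s)}\Bigl[\tfrac{\pi(1)-\pi(s)}{1-s} + \pi(s)\Bigr] \;=\; \frac{Q(s)}{\pi(1)\,s\pi(s)}.
\end{equation*}
Multiplying through by $\pi(1)$, integrating from $x$ to $\phi_t^{-1}(x)$, and noting $\int \tfrac{ds}{1-s} = -\ln(1-s)$ gives
\begin{equation*}
\pi(1)\,t \;=\; \ln\frac{1-x}{1-\phi_t^{-1}(x)} \;+\; \int_x^{\phi_t^{-1}(x)} \frac{Q(s)}{s\pi(s)}\,ds,
\end{equation*}
and exponentiating then rearranging produces exactly \eqref{zest}.

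For the boundedness claim, I would argue that since $\pi \in C^1([0,1])$, the mean value theorem (or a first-order Taylor expansion of $\pi$ at $1$) shows that $\frac{\pi(1)-\pi(s)}{1-s}$ extends continuously to $s=1$ with limiting value $-\pi'(1)$. Hence $Q \in C([0,1])$. Combined with $\pi(s) > 0$ on $[0,1]$ and $s \geq x > 0$, the denominator $s\pi(s)$ is bounded below by a positive constant on $[x,1]$, so $Q(s)/(s\pi(s))$ is continuous, hence bounded, on $[x,1]$.

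There is no real obstacle here; the argument is a single separation-of-variables computation. The only subtle point is spotting the right decomposition in the second step — but that choice is essentially forced, since the left-hand side of \eqref{zest} contains the factor $e^{\pi(1)t}$, which after taking a logarithm must arise from extracting a $\tfrac{1}{\pi(1)(1-s)}$ term from the integrand. Once that extraction is made, the algebraic identity $\pi(1)-s\pi(s) = [\pi(1)-\pi(s)] + (1-s)\pi(s)$ lines up the remaining terms precisely as $Q(s)/(s\pi(s))$.
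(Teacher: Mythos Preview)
Your argument is correct. The paper does not supply a proof of this lemma but simply cites \cite[Lemma 3]{cooney2022long}, noting that the computation there carries over verbatim because the advection term is the same; your separation-of-variables derivation with the partial-fraction-style extraction of $\tfrac{1}{\pi(1)(1-s)}$ is exactly the standard route and matches what that reference does.
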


\section{Dynamical Behavior of the PDE Model for the Generalized PD Scenario}
\label{sec:PDbehavior}

In this section, we illustrate some of the dynamical behaviors exhibited by our PDE model for multilevel selection with pairwise between-group competition when the replication rates at each level reflect a generalized Prisoners' Dilemma scenarios. In Section \ref{sec:PDdelta}, we show for multilevel PD scenarios that the population can converge to a delta-function concentrated at the all-defector equation $x=0$ when the strength of between-group competition is sufficiently weak (corresponding to $\lambda$ close enough to $0$). In Section \ref{sec:PDsteadygroupsuccess}, we provide necessarry conditions for the existence of a steady-state density supporting positive levels of cooperation in the population and develop a measure of the collective success of such steady-state populations when engaged in group-level conflict with all-cooperator groups. We complement these analytical explorations of the PDE model with numerical simulations for example scenarios of Prisoners' Dilemma, Hawk-Dove, and Stag-Hunt games, which we present in Section \ref{sec:numerics}.

\subsection{Convergence to Delta-Function at All-Defector Composition} \label{sec:PDdelta}

In Proposition \ref{prop:convergencetoalldefection}, we show that defectors take over the entire population when between-group competition is sufficiently weak. Mathematically, this corresponds to showing that the measure $\mu_t(dx)$ converges to a delta-function $\delta(x)$ concentrated at the all-defector equilibrium when $\lambda$ is sufficiently close to $0$. In particular, we show this concentration upon the all-defector equilibrium in the sense of weak convergence, showing that, for any test function $v(x) \in C\left([0,1]\right)$, the solution $\mu_t(dx)$ to the measure-valued multilevel dynamics of Equation \eqref{eq:PDEmeasure} satisfy $\int_0^1 v(x) \mu_t(dx) \to \int_0^1 v(x) \delta(x) = v(0)$ as $t \to \infty$, and we denote this weak convergence to a delta-measure at $x =0$ by $\mu_t(dx) \rightharpoonup \delta(x)$ as $t \to \infty$. 

\begin{proposition} \label{prop:convergencetoalldefection}
Consider a within-group replication rate satisfying $\pi(x) > 0$ for $x \in [0,1]$ and a group-level victory probability $\rho(x,y) \in C^1([0,1]^2)$. Suppose the initial measure $\mu_0(dx)$ has supremum H{\"o}lder exponent $\overline{\theta} > 0$ near $x = 1$. If $\lambda \left( 2 ||\rho||_{L^{\infty}([0,1]^2)} - 1\right) < \overline{\theta} \pi(1)$, then the solution $\mu_t(dx)$ to Equation \eqref{eq:PDEmeasure} satisfies $\mu_t(dx) \rightharpoonup \delta(x)$ as $t \to \infty$. 
\end{proposition}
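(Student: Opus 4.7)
The plan is to translate weak convergence $\mu_t \rightharpoonup \delta_0$ into a quantitative decay estimate for $\mu_t([\delta, 1])$ for each fixed $\delta > 0$. Since $\mu_t$ is a probability measure on $[0,1]$ by Theorem \ref{thm:nonlinearwellposedness}, weak convergence to the point mass at $x=0$ follows once I verify that $\mu_t([\delta, 1]) \to 0$ as $t \to \infty$ for every $\delta > 0$: for any continuous test function $v$ and any $\varepsilon > 0$, one can choose $\delta$ so that $|v(x) - v(0)| \leq \varepsilon$ on $[0, \delta]$, and the bound $\left|\int v \, d\mu_t - v(0)\right| \leq \varepsilon + 2||v||_\infty \mu_t([\delta, 1])$ then yields the conclusion.

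To estimate $\mu_t([\delta, 1])$, I would apply the implicit representation formula from Theorem \ref{thm:nonlinearwellposedness} (equivalently with $\rho(\phi_s(x), y)$ in the exponent, as in Lemma \ref{lem:linearwellposedness}) to a $C^1$ cutoff $v_\delta$ satisfying $\mathbf{1}_{[\delta, 1]} \leq v_\delta \leq \mathbf{1}_{[\delta/2, 1]}$. Bounding the exponential factor crudely by $\exp\bigl(\lambda t (2||\rho||_{L^\infty} - 1)\bigr)$ and using that the characteristic flow $\phi_t$ is strictly decreasing in $t$ (since $\pi > 0$ on $[0,1]$), I arrive at
\begin{equation*}
\mu_t([\delta, 1]) \leq \exp\bigl(\lambda t (2||\rho||_{L^\infty} - 1)\bigr) \, \mu_0\bigl([\phi_t^{-1}(\delta/2), 1]\bigr).
\end{equation*}
This reduces the problem to understanding how the initial mass near $x = 1$ interacts with the geometric rate at which $\phi_t^{-1}(\delta/2) \to 1$.

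The asymptotics of $\phi_t^{-1}(\delta/2)$ come directly from Lemma \ref{lem:backwardcharacteristics}: since $\pi \in C^1([0,1])$ is strictly positive, the integrand $Q(s)/(s \pi(s))$ is bounded on $[\delta/2, 1]$, so $1 - \phi_t^{-1}(\delta/2) \leq C_\delta e^{-\pi(1) t}$ for a constant $C_\delta$ depending only on $\delta$ and $\pi$. I would then choose $\Theta$ with $\lambda(2||\rho||_{L^\infty} - 1)/\pi(1) < \Theta < \overline{\theta}$, which is possible by the hypothesis. Since $\Theta < \overline{\theta}$ and $\overline{\theta}$ is the supremum H\"older exponent (characterized via $\limsup$), one has $\mu_0([1-y, 1]) \leq y^\Theta$ for all sufficiently small $y > 0$, hence for $y = 1 - \phi_t^{-1}(\delta/2)$ once $t$ is large enough. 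Combining these bounds gives
\begin{equation*}
\mu_t([\delta, 1]) \leq C_\delta^\Theta \exp\bigl(t[\lambda(2||\rho||_{L^\infty} - 1) - \Theta \pi(1)]\bigr),
\end{equation*}
whose exponent is strictly negative by construction, so $\mu_t([\delta, 1])$ decays exponentially to $0$.

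The main technical obstacle is the passage from indicator functions to admissible $C^1$ test functions in the representation formula, handled cleanly by the sandwich $\mathbf{1}_{[\delta, 1]} \leq v_\delta \leq \mathbf{1}_{[\delta/2, 1]}$. A secondary subtlety is that the H\"older-type bound derived from the supremum H\"older exponent is only asymptotic near $y=0$, but this suffices because $1 - \phi_t^{-1}(\delta/2)$ is automatically small for large $t$; a variant using the infimum H\"older exponent (defined via $\liminf$) would instead require a more delicate argument tracking a specific subsequence of scales. Everything else reduces to careful bookkeeping of the constants and exponents.
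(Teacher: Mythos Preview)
Your proposal is correct and follows essentially the same route as the paper's proof: reduce weak convergence to decay of $\mu_t([\delta,1])$, bound the exponential factor in the implicit representation formula by $\exp\bigl(\lambda t(2\|\rho\|_{L^\infty}-1)\bigr)$, invoke Lemma~\ref{lem:backwardcharacteristics} to get $1-\phi_t^{-1}(\delta)\le C_\delta e^{-\pi(1)t}$, and then choose $\Theta$ strictly between $\lambda(2\|\rho\|_{L^\infty}-1)/\pi(1)$ and $\overline{\theta}$ so that the H\"older bound on $\mu_0$ near $x=1$ yields exponential decay. The only difference is cosmetic: the paper applies the representation formula directly to the indicator $\mathbf{1}_{[\delta,1]}$ (legitimate since Equation~\eqref{eq:mutimplicit} defines $\mu_t$ as a weighted pushforward of $\mu_0$ and hence extends to bounded Borel test functions), whereas you insert a $C^1$ cutoff $v_\delta$ sandwiched between $\mathbf{1}_{[\delta,1]}$ and $\mathbf{1}_{[\delta/2,1]}$, which is a slightly more careful way to stay within the class of test functions for which the formula is explicitly stated.
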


\begin{remark}
The condition $\lambda \left( 2 ||\rho||_{L^{\infty}([0,1]^2)} - 1 \right) < \overline{\theta} \pi(1)$ in Proposition \ref{prop:convergencetoalldefection} can be rewritten to see that we require that the between-group selection strength $\lambda$ to satisfy
\begin{equation}
\lambda < \frac{\overline{\theta} \pi(1)}{2 ||\rho||_{L^{\infty}([0,1]^2)} - 1}.
\end{equation}
\begin{equation} \label{eq:deltathresholdtugofwar}
\lambda < \frac{\overline{\theta} \pi(1)}{\max_{(x,y) \in [0,1]^2} \left[\rho(x,y) - \rho(y,x) \right]}
\end{equation}
We expect this that threshold for convergence to a delta-function at the all-defector is not optimal. Using calculations of properties of potential density steady states in Section \ref{sec:PDsteadygroupsuccess} and numerical simulations in Section \ref{sec:numerics}, we identify a conjectured threshold quantity for the maximal between-group selection strength $\lambda$ for which cooperation will go extinct under the dynamics of multilevel selection. Namely, we expect that a population whose initial measure has supremum H{\"o}lder exponent $\overline{\theta}$ near $x = 1$ will converge to a delta-function $\delta(x)$ at the all-defector equilibrium provided that
\begin{equation}
\lambda \leq \lambda^*_{PD}(\overline{\theta}) := \frac{\overline{\theta} \pi(1)}{\rho(1,0)- \rho(0,1)}.
\end{equation}
This coincides with the sufficient condition we found in Proposition \ref{prop:convergencetoalldefection} when $\rho(x,y)$ is increasing in $\rho(1,0) = \max_{(x,y) \in [0,1]^2} \rho(x,y)$, but we expect this result to hold for a much more general class of group-level victory probability functions $\rho(x,y)$. 
\end{remark}

To prove Proposition \ref{prop:convergencetoalldefection}, we adopt the strategy used in the proof of \cite[Theorem 3]{cooney2022long} for the case of convergence to the all-defector state in a model of multilevel selection with a linear between-group replication term. 

\begin{proof}[Proof of Proposition \ref{prop:convergencetoalldefection}]
To show weak convergence to a delta function $\delta(x)$ at the all-defector equilibrium, we consider an arbitrary test function $v(x) \in C([0,1])$ and look to show that $\int_0^1 \mu_t(dx) \to \int_0^1 v(x) \delta(x) = v(0)$. Using the assumption that $v(x)$ is continuous and the fact that the solution $\mu_t(dx)$ to Equation \eqref{eq:multilevelPDEtworho} is a probability measure, we can see that, for any $\epsilon > 0$, there is a $\delta > 0$ such that
\begin{equation}
\begin{aligned}
\bigg| \int_0^1 v(x) \mu_t(dx) - v(0) \bigg| &= \bigg| \int_0^1 \left( v(x) - v(0) \right) \mu_t(dx) \bigg|  \\ &\leq \int_0^1 |v(x) - v(0)| \mu_t(dx) \\ & \leq \int_{0}^{\delta} |v(x) - v(0) | \mu_t(dx) + \int_{\delta}^1 | v(x) - v(0)| \mu_t(dx) \\  & \leq \epsilon + 2 ||v||_{\infty} \int_{\delta}^1\mu_t(dx)
\end{aligned}
\end{equation}
Using the implicit representation formula for $\mu_t(dx)$, we can further note that
\begin{equation}
\begin{aligned}
\int_{\delta}^1 \mu_t(dx) &= \int_{\phi_t^{-1}(\delta)}^1 \exp\left( 2 \int_0^t  \int_0^1 \rho(x,y) \mu_s(dy) ds - t \right) \mu_0(dx) \\
& \leq \exp\left( \lambda \left[ 2 ||\rho||_{L^{\infty}([0,1]^2)} - 1  \right] t\right) \int_{\phi_t^{-1}(\delta)}^1 \mu_0(dx) \\
&= \exp\left( \lambda \left[ 2 ||\rho||_{L^{\infty}([0,1]^2)} - 1  \right] t\right) \mu_0\left( \left[\phi_t^{-1}(\delta),1\right] \right)
\end{aligned}
\end{equation}
Furthermore, we can use Lemma \ref{lem:backwardcharacteristics} to say there is a positive constant $C < \infty$ such that
\begin{equation}
\begin{aligned}
\mu_0\left([\phi_t^{-1}(\delta),1] \right) &= \mu_0\left(\left[1 - \left( 1 - \phi_t^{-1}(\delta) \right), 1 \right]  \right) \\
&= \mu_0\left( \left[ 1 - e^{- \pi(1) t} \left(1-\delta\right) \exp\left(\int_{\delta}^{\phi_t^{-1}(\delta)} \frac{Q(s)ds}{s\pi(s)}\right) \right]\right).
\end{aligned}
\end{equation}
Using the assumption that the initial measure $\mu_0(dx)$ has a supremum H{\"o}lder exponent $\overline{\theta} > 0$ near $x=1$, we know that, for any $\Theta < \overline{\theta}$, there is a positive constant $C_{\Theta}$ such that 
\begin{equation}
\begin{aligned}
\mu_0\left([\phi_t^{-1}(\delta),1] \right) & \leq C_{\Theta} \exp\left( - \Theta \pi(1) t \right) \underbrace{\left(1 - \delta\right)^{\Theta} \exp\left(\Theta \int_{\delta}^{\phi_t^{-1}(\delta)} \frac{Q(s)ds}{s\pi(s)}\right)}_{:= B(\Theta)}.
\end{aligned}
\end{equation}
We can combine our previous bounds to deduce that
\begin{equation}
\bigg| \int_0^1 v(x) \mu_t(dx) - v(0) \bigg| \leq \epsilon + 2 ||v||_{\infty} C_{\Theta} B(\Theta) \exp\left( \left\{ \lambda \left[2 ||\rho||_{L^{\infty}([0,1]^2)} - 1\right]  - \Theta \pi(1) \right\} t \right).
\end{equation}
Because we have assumed the strict inequality $ \lambda \left[2 ||\rho||_{L^{\infty}([0,1]^2)} - 1\right] < \overline{\theta} \pi(1)$, we know that we can choose $\Theta < \overline{\theta}$ sufficiently close to $\overline{\theta}$ to satisfy 
\[ \lambda \left[2 ||\rho||_{L^{\infty}([0,1]^2)} - 1\right] < \Theta \pi(1)  \]
and 
\[2 ||v||_{\infty} C_{\Theta} B(\Theta) \exp\left( \left\{ \lambda \left[2 ||\rho||_{L^{\infty}([0,1]^2)} - 1\right]  - \Theta \pi(1) \right\} t \right) \to 0 \: \: \mathrm{as} \: \: t \to \infty.  \]
This allows us to conclude that there is sufficiently large $T$ such that, for any test function $v(x) \in C(0,1])$ and any $t > T$, 
\begin{equation}
\bigg| \int_0^1 v(x) \mu_t(dx) - v(0) \bigg| \leq 2 \epsilon,
\end{equation}
and therefore the solution $\mu_t(dx) \rightharpoonup \delta(x)$. 
\end{proof}

\subsection{Necessary Conditions for Steady-State Date Density Supporting Cooperation}
\label{sec:PDsteadygroupsuccess}

We now consider the properties of possible density steady state solutions to Equation \eqref{eq:multilevelPDEtworho} for the case of a PD game. We study the behavior of the collective group-level success for any differentiable steady state density that places nonzero density on the all-cooperator composition $f(x)$ (corresponding to case of a steady state population with H{\"o}lder exponent $\theta = 1$ near $x=1$). This choice of H{\"o}lder exponent is motivated by the comparisons we will make to finite volume numerical simulations in Section \ref{sec:numerics}, in which we represent our numerical solutions as piecewise constant densities that remain nonzero in the limit as $x \to 1$. We also briefly mention generalizations of these formulas on the collective success and threshold selection strength to the case of density steady states with arbitrary H{\"o}lder exponent $\theta > 0$ near $x=1$, delaying the details for these calculations to Section \ref{sec:steadytheta}.

In Proposition \ref{prop:PDsteadybounded}, we characterize the average group-level victory probability of $\int_0^1 \rho(y,1) f(y) dy$ against the all-cooperator group for any density steady state $f(x)$ of the multilevel dynamics that is continuously differentiable on $[0,1]$ and satisfies $f(1) \ne 0$. We then compare this expression to the group-level victory probability $\rho(0,1)$ of an all-defector group when engaged in group-level competition with an all-cooperator group, allowing us to find a threshold strength of between-group selection $\lambda$ for which a density steady state produces a better collective outcome than the all-defector group.
}
\begin{proposition} \label{prop:PDsteadybounded}
Consider a group-level victory probability $\rho(x,y) \in C^1\left(\left[0,1\right]^2 \right)$ and a relative individual-level advantage of defectors $\pi(x) \in C^1([0,1])$ satisfying $\pi(x) > 0$ for all $x \in [0,1]$. Suppose that Equation \eqref{eq:multilevelPDEtworho} has a steady state solution $f(x)$ that is a probability density $f(x) \in C([0,1])$ whose behavior near $x=1$ satisfies $\lim_{x \to 1} f(x) = L_1 \ne 0$. Then the average group-level victory probability of the steady state population in pairwise competition with the all-cooperator group must satisfy
\begin{equation} \label{eq:rhoy1statement}
\int_0^1 \rho(y,1) f(y) dy = \frac{1}{2} - \frac{\pi(1)}{2 \lambda}. 
\end{equation}
Furthermore, this group-level success $\int_0^1 \rho(y,1) f(y) dy$ for the steady state $f(x)$ will exceed the group-level victory probability $\rho(0,1)$ of the all-defector group against the all-cooperator group when the $\lambda$ exceeds the threshold value
\begin{equation}
\lambda > \lambda^*_{PD} = \frac{\pi(1)}{\rho(1,0) - \rho(0,1)}.
\end{equation}
\end{proposition}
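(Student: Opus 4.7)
The plan is to exploit the steady-state ODE implied by \eqref{eq:multilevelPDEtworho} in a boundary layer near $x = 1$ in order to extract the nonlocal quantity $R(1) := \int_0^1 \rho(1,y) f(y)\,dy$ in closed form. The target integral is related to $R(1)$ through the antisymmetry $\rho(y,1) = 1 - \rho(1,y)$, which gives
\begin{equation*}
\int_0^1 \rho(y,1) f(y)\,dy \;=\; 1 - R(1),
\end{equation*}
so the first claim reduces to showing $R(1) = \tfrac{1}{2} + \tfrac{\pi(1)}{2\lambda}$.

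First I would set $\partial_t f \equiv 0$ in \eqref{eq:multilevelPDEtworho} to obtain the steady-state identity
\begin{equation*}
\frac{d}{dx}\bigl[x(1-x)\pi(x) f(x)\bigr] \;=\; -\lambda f(x)\bigl[2R(x) - 1\bigr].
\end{equation*}
Since $\rho \in C^1([0,1]^2)$ and $f$ is a continuous probability density, $R$ is continuous on $[0,1]$, so the right-hand side is continuous and the flux $x(1-x)\pi(x)f(x)$ is genuinely $C^1$, which justifies pointwise integration. Integrating from $1-\epsilon$ to $1$ and using that the flux vanishes at $x=1$ yields
\begin{equation*}
\epsilon(1-\epsilon)\pi(1-\epsilon) f(1-\epsilon) \;=\; \lambda \int_{1-\epsilon}^1 f(x)\bigl[2R(x) - 1\bigr] dx.
\end{equation*}
Dividing by $\epsilon$ and letting $\epsilon \to 0^+$, the continuity of $f$, $\pi$, and $R$ together with $\lim_{x \to 1} f(x) = L_1 \neq 0$ produce the identity $\pi(1) L_1 = \lambda L_1[2R(1) - 1]$; cancelling $L_1 \neq 0$ gives $R(1) = \tfrac{1}{2} + \tfrac{\pi(1)}{2\lambda}$, and hence \eqref{eq:rhoy1statement}.

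For the second assertion, the symmetry $\rho(0,1) + \rho(1,0) = 1$ lets me rewrite $\tfrac{1}{2} - \rho(0,1) = \tfrac{1}{2}[\rho(1,0) - \rho(0,1)]$; substituting \eqref{eq:rhoy1statement} into the inequality $\int_0^1 \rho(y,1) f(y)\,dy > \rho(0,1)$ and rearranging then gives the threshold $\lambda > \pi(1)/[\rho(1,0) - \rho(0,1)]$. The most delicate point is the boundary evaluation at $x=1$: because $f$ is only assumed continuous, not differentiable, one cannot simply plug $x=1$ into the steady-state ODE. The rescue is the observation that continuity of $R$, inherited from $\rho \in C^1$, forces the advective flux $x(1-x)\pi(x)f(x)$ to be $C^1$ regardless of the a priori regularity of $f$, legitimizing the integration argument; alternatively, one could test the weak formulation \eqref{eq:PDEmeasure} of the steady state against an approximate indicator of $[1-\epsilon,1]$ and take $\epsilon \to 0$ to obtain the same identity directly.
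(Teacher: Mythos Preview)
Your argument is correct and in fact slightly cleaner than the paper's. The paper expands the advection term to isolate $f'(x)$, obtaining
\[
f'(x) = -\frac{f(x)}{x(1-x)\pi(x)}\Bigl[\lambda - 2\lambda\!\int_0^1\!\rho(x,y)f(y)\,dy - (1-2x)\pi(x)\Bigr] - \frac{\pi'(x)}{\pi(x)}f(x),
\]
and then argues that the bracketed factor must vanish as $x\to 1$ in order for $f'(x)$ to remain finite and hence for $f$ to have a finite nonzero limit at $x=1$. This is a pointwise singularity-cancellation argument, and it tacitly uses differentiability of $f$ even though the hypothesis only gives $f\in C([0,1])$. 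Your route, by contrast, stays in the flux form, integrates over $[1-\epsilon,1]$, and extracts the same identity from the $\epsilon^{-1}$ rescaling of the integral; you never need $f'$ to exist, because the continuity of $R(x)$ (inherited from $\rho\in C^1$) makes the flux $x(1-x)\pi(x)f(x)$ a $C^1$ function automatically. This is exactly the regularity issue you flag at the end, and your integration argument resolves it without appeal to the alternative weak-formulation test you mention. Both proofs reach the same algebraic conclusion and handle the threshold part identically; the advantage of yours is that it matches the stated hypotheses precisely, while the paper's version is marginally more heuristic at the boundary.
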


\begin{remark}
By taking the limit of the expression in Equation \eqref{eq:rhoy1statement} for average group-level success as $\lambda \to \infty$, we see that see that
\begin{equation}
\lim_{\lambda \to \infty} \rho(y,1) f(y) dy = \frac{1}{2},
\end{equation}
so the groups sampled from the steady-state density have a fifty-fifty chance of defeating an all-cooperator group in a pairwise group conflict in the limit of infinitely strong between-group competition. This is the direct analogue of the shadow of lower-level selection for the case of pairwise between-group competition, as the steady-state population in the limit of strong between-group competition only consists of groups that are equal to the all-cooperator group under pairwise competition. 
\end{remark}

\begin{proof}[Proof of Proposition \ref{prop:PDsteadybounded}]
A density $f(x)$ is a steady state solution to Equation \eqref{eq:multilevelPDEtworho} must satisfy
\begin{equation}
- \dsddx{}{x} \left( x (1-x) \pi(x) f(x) \right) = \lambda f(x) \left[ 2 \int_0^1 \rho(x,y) f(y) dy - 1 \right].
\end{equation}
This ODE for the steady state density can be rewritten as
\begin{equation} \label{eq:steadyfprime}
f'(x) = - \left(\frac{f(x)}{x (1-x) \pi(x)} \right) \left[ \lambda - 2 \lambda  \int_0^1 \rho(x,y) f(y) dy - (1-2x) \pi(x) \right] - \left(\frac{\pi'(x)}{\pi(x)} \right) f(x)
\end{equation}
For the PD case of the multilevel dynamics, we assume that $\pi(x) > 0$, so the second term is bounded as $x \to 1$. If we want that the steady state density remains bounded as $x \to 1$, then we want $f'(x)$ to approach a finite limit as $x \to 1$. We can then note that the limit of the first term on the righthand side of Equation \eqref{eq:steadyfprime} will be finite and nonzero only if
\begin{equation}
\lim_{x \to 1} \left[ \lambda - 2 \lambda  \int_0^1 \rho(x,y) f(y) dy - (1-2x) \pi(x) \right]  = 0.
\end{equation}
We can evaluate the limit on the lefthand side to see that $f(x)$ can be a steady state density of the multilevel dynamics that remains bounded in the limit as $x \to 1$ only if
\begin{equation}
\lambda - 2 \lambda \int_0^1 \rho(1,y) f(y) dy + \pi(1) = 0.    
\end{equation}
This can be rearranged to see that the average group-level reproduction rate of the all-cooperator must satisfy
\begin{equation} \label{eq:rho1xsteadystate}
\int_0^1 \rho(1,y) f(y) dy = \frac{1}{2} + \frac{\pi(1)}{2 \lambda}. 
\end{equation}
Using the fact that $\rho(1,y) = 1 - \rho(y,1)$, we may also express that the average group-level victory probability $\int_0^1 \rho(y,1) f(y) dy$ for a population at steady state $f(x)$ against an all-cooperator group may be written as
\begin{equation} \label{eq:rhoy1steadystate}
\int_0^1 \rho(y,1) f(y) dy = \frac{1}{2} - \frac{\pi(1)}{2 \lambda}.
\end{equation}

Furthermore, if $f(y)$ is a steady state probability density of Equation \eqref{eq:multilevelPDEtworho} that supports a positive level of cooperation, we would expect that the group-level success for groups in the steady-population should exceed the collective success of the all-defector group (as the within-group dynamics acting alone would result in the extinction of cooperation). We can try to quantify this relative success of the steady state population by comparing the average group-level victory probabilities $\int_0^1 \rho(y,1) f(y) dy$ and $\rho(0,1)$ for both the steady state population and the all-defector group when engaged in a pairwise competition with the all-cooperator group. In particular, we look to determine the conditions under which $\int_0^1 \rho(y,1) f(y) dy > \rho(0,1)$, allowing us to see the conditions under which pairwise between-group competition may help to promote the evolution of cooperation.

We can use Equation \eqref{eq:rhoy1steadystate} to write the condition $\int_0^1 \rho(y,1) f(y) dy > \rho(0,1)$ can be written in the form
\begin{equation}
\int_0^1 \rho(y,1) f(y) dy =  \frac{1}{2} - \frac{\pi(1)}{2 \lambda} >  \rho(0,1),
\end{equation}
which will only be satisfied when
\begin{equation}
\lambda > \frac{\pi(1)}{1 - 2 \rho(0,1)}. 
\end{equation}
Noting that $\rho(1,0) = 1 - \rho(0,1)$, we may further write this condition as
\begin{equation}
\lambda > \frac{\pi(1)}{\rho(1,0) - \rho(0,1)},
\end{equation}
and we have established our desired necessary conditions for the existence of a steady-state density $f(x)$ that remains bounded in the limit as $x \to 1$ and for which the collective victory probability of the steady-state population exceeds that of an all-defector group when pitted in group-level conflict with an all-cooperator group. 
\end{proof}

We further show in the appendix that, if there exists a steady state density $f(x)$ with H{\"o}lder exponent $\theta$ near $x=1$, then the average group-level victory probability at steady state against an all-cooperator group will be given by
\begin{equation}
\int_0^1 \rho(y,1) f(y) dy = \frac{1}{2} - \frac{\theta \pi(1)}{2 \lambda},
\end{equation}
and we have that the such a density steady state will have an expected group-level success against an all-cooperator group that exceeds the corresponding group-level success of an all-defector group $\rho(0,1)$ provided that $\lambda$ exceeds the following threshold quantity 
\begin{equation} \label{eq:lambdastarPDtheta}
\lambda > \lambda^*_{PD}(\theta) := \frac{\theta \pi(1)}{\rho(1,0) - \rho(0,1)}.
\end{equation}
We conjecture that this quantity $\lambda^*_{PD}(\theta)$ serves as a threshold determining the existence of integrable steady state densities with H{\"o}lder exponent $\theta$ near $x=1$, as well as a threshold quantity for determining the long-time survival of cooperation under the dynamics of Equation \eqref{eq:multilevelPDEtworho} starting from an initial population with H{\"o}lder exponent $\theta$ near $x=1$. These conjectures are based on the similarities seen between existing work on PDE models with multilevel selection with frequency-independent between-group competition \cite{luo2017scaling,cooney2019replicator,cooney2022long} and the threshold behavior we see when varying $\lambda$ in our numerical simulations in Section \ref{sec:numerics}.

\begin{remark} \label{rem:PDadditive}
For the case of an additively separable group-level victory probability of the form
\begin{equation}
\rho(x,y) = \frac{1}{2} + \frac{1}{2} \left[ \mc{G}(x) - \mc{G}(y) \right],
\end{equation}
the threshold between-group selection strength $\lambda^*_{PD}$ from Equation \eqref{eq:lambdastarPDtheta} reduces to 
\begin{equation}
\lambda^*_{PD}(\theta) = \frac{\theta \pi(1)}{\mc{G}(1) - \mc{G}(0)},
\end{equation}
which is the threshold selection strength for the survival of cooperation for the PDE model of multilevel selection with when frequency-independent group-level reproduction takes place at rate $\mc{G}(x)$ \cite{cooney2022long}. 
\end{remark}

\section{Dynamical Behavior for Generalized HD and SH Games}
\label{sec:HDandSHbehavior}

We will now extend our analysis of PDE models of multilevel selection to consider cases beyond the generalized PD scenario. We now consider how multilevel selection with pairwise group-level competition can impact the evolution of cooperative behaviors for cases in which individual-level and group-level replication rates reflect evolutionary tensions inspired by HD or SH games. In Section \ref{sec:HDsteady}, we propose a conjectured threshold selection strength required to sustain a steady state density for a generalized HD scenario. In Section \ref{sec:SHdelta}, we consider the dynamics for a generalization of the Stag-Hunt game, and we show for a wide class of density-valued initial conditions that the population concentrates upon a delta-function at the all-cooperator equilibrium in the presence of any between-group competition (corresponding to $\lambda > 0$). We will then compare these analytical results and conjectures with numerical solutions to the multilevel dynamics for the HD game and SH game in Sections \ref{sec:HDnumerics} and \ref{sec:SHnumerics}, respectively.

\subsection{Properties of Density Steady States for Generalized HD Scenario}

\label{sec:HDsteady}

For the HD game, the within-group dynamics push for convergence towards an intermediate equilibrium level of cooperation $x_{eq}$, while between-group selection supports groups featuring compositions with higher average payoff than that achieved by the group with $x_{eq}$ cooperators. Therefore we could potentially expect that the population will support steady state densities $f(x)$ with cooperation in excess of $x_{eq}$ when the average success of an all-cooperator success at steady state $\int_0^1 \rho(1,y) f(y) dy$ is less than the group-level victory probability $\rho(1,x_{eq})$ of the all-cooperator group over the equilibrium group with $x_{eq}$ cooperators. Using Equation \eqref{eq:rho1xsteadystate}, we see that this condition $\int_0^1 \rho(1,x_{eq}) f(y) dy > \rho(1,x_{eq})$ will occur provided that
\begin{equation}
\int_0^1 \rho(1,y) f(y) dy = \frac{1}{2} + \frac{\pi(1)}{2 \lambda} > \rho(1,x_{eq}),
\end{equation}
which can be rearranged to obtain the condition
\begin{equation}
\lambda >  \frac{\pi(1)}{2 \rho(1,x_{eq}) - 1}.
\end{equation}
Using the fact that $\rho(1,x_{eq}) = 1 - \rho(x_{eq},1)$, we may further write this condition in terms of a critical group-level selection strength $\lambda^*_{HD}$
\begin{equation}
\lambda > \lambda^*_{HD} > \frac{\pi(1)}{\rho(1,x_{eq}) - \rho(x_{eq},1)}.
\end{equation}
Using similar reasoning to the case of the multilevel PD dynamics, we may conjecture that there is a threshold selection strength
\begin{equation} \label{eq:lambdastarHDtheta}
\lambda^*_{HD}(\theta) := \frac{\theta \pi(1)}{\rho(1,x_{eq}) - \rho(x_{eq},1)}
\end{equation}

\begin{remark}
For the case of additively separable group-victory probabilities of the form
\begin{equation}
\frac{1}{2} + \frac{1}{2} \left( \mc{G}(x) - \mc{G}(y) \right),
\end{equation}
we see that this threshold $\lambda^*_{HD}$ takes the form
\begin{equation}
\lambda^*_{HD} = \frac{\pi(1)}{\mc{G}(1) - \mc{G}(x_{eq})},
\end{equation}
which is the threshold selection strength required to obtain cooperation above the level of $x_{eq}$ under multilevel selection with frequency-independent group-level competition for the HD case \cite[Theorem 6]{cooney2022pde}.  
\end{remark}

\subsection{Convergence to Delta-Function at All-Cooperator Equilibrium for SH and PDel Games} \label{sec:SHdelta}

In this section, we study the multilevel dynamics with pairwise between-group competition with replication rates arising from payoffs in the SH and PDel games. Unlike the PD and HD games, the all-cooperator equilibrium is locally stable under within-group competition for these two games. For a class of within-group relative replication rates $\pi(x)$ and group victory probabilities $\rho(x,y)$ generalizing the dynamics of the SH game, we show that the population will end up concentrating upon a delta-function at the all-cooperator composition provided that there is any between-group competition ($\lambda > 0$) and that there is a positive initial probability of groups with a fraction of cooperators above the level seen in the basin of attraction for the all-cooperator equilibrium under the within-group replicator dynamics. We will prove our result using assumptions made to generalize the scenario for multilevel selection with pairwise group conflict motivated by payoffs from the SH game, but analogous assumptions on the group-level victory probability $\rho(x,y)$ and the local stability of the all-cooperator equilibrium are sufficient to prove an analogous concentration result for the PDel game.

\begin{proposition} \label{prop:SHdelta}
\sloppy{Consider a within-group relative reproduction rate $\pi(x) \in C^2([0,1])$ with an equilibrium $x_{eq} \in (0,1)$ such that $\pi(x) < 0$ for $x \in (x_{eq},1]$, and consider a group-level victory function $\rho(x,y) \in C^1([0,1]^2)$ %
such  that there is a fraction of cooperation $z_{min} > x_{eq}$ for which the following two conditions are satisfied:
\begin{itemize}
    \item $\rho(x,y) > \rho(w,y)$ for any $z > z_{min} > w$ and for all $y \in [0,1]$
    \item $\rho(x,y) > \frac{1}{2} > \rho(y,x)$ for any $x,y$ satisfying $x > z_{min} > y$.
\end{itemize} Suppose the initial population $\mu_0(dx) = f_0(x) dx$ has density $f_0(x)$ satisfying $\int_{x_{eq}}^1 f_0(x) > 0$. If $\lambda > 0$, then $\mu_t(dx) = f(t,x) dx \rightharpoonup \delta(x-1)$ as $t \to \infty$.} 
\end{proposition}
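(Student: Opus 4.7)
The plan is to construct a monotone Lyapunov functional out of the mass below the threshold $z_{\min}$, drive it to zero via a contradiction argument, and then use the representation formula of Theorem \ref{thm:nonlinearwellposedness} together with the transport toward $x = 1$ on $(x_{eq}, 1]$ to conclude weak convergence of $\mu_t$ to $\delta(x-1)$.

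I would first show that $N(t) := \mu_t([0, z_{\min}])$ is non-increasing. Evaluating Equation \eqref{eq:PDEmeasure} against a smooth approximation of $\mathbb{1}_{[0, z_{\min}]}$ and passing to the limit, the transport contribution becomes $z_{\min}(1 - z_{\min}) \pi(z_{\min}) f(t, z_{\min})$, which is nonpositive since $\pi(z_{\min}) < 0$. For the nonlocal contribution, Fubini together with the antisymmetry $(2\rho(y, z) - 1) = -(2\rho(z, y) - 1)$ cancels the double integral over $[0, z_{\min}]^2$, leaving
\begin{equation*}
\lambda \int_0^{z_{\min}} \int_{z_{\min}}^1 (2\rho(y, z) - 1) \, \mu_t(dz) \, \mu_t(dy) \leq 0,
\end{equation*}
where the inequality uses the hypothesis $\rho(y, z) < 1/2$ for $y < z_{\min} < z$. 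Thus $\dot{N}(t) \leq 0$ and $N(t) \searrow N_\infty \geq 0$.

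To show $N_\infty = 0$, the plan is to argue by contradiction. For $\eta \in (0, z_{\min} - x_{eq})$, compactness and continuity of $\rho$ on $[0, z_{\min}-\eta] \times [z_{\min}+\eta, 1]$ produce $\alpha_\eta > 0$ with $2\rho(y, z) - 1 \leq -2\alpha_\eta$ there, refining the estimate to
\begin{equation*}
\dot{N}(t) \leq -2\lambda \alpha_\eta \, \mu_t([0, z_{\min} - \eta]) \, \mu_t([z_{\min} + \eta, 1]).
\end{equation*}
Assuming $N_\infty > 0$, the goal is to show that both factors are eventually bounded below by positive constants, forcing $\dot{N}(t) \leq -c < 0$ and contradicting $N \geq 0$. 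The lower bound $\mu_t([0, z_{\min}-\eta]) \geq N_\infty / 2$ should follow from $N(t) \geq N_\infty$ together with the strictly positive transport speed on the compact strip $[z_{\min}-\eta, z_{\min}]$, which prevents mass from accumulating in this transit region. The main obstacle is the companion bound $\mu_t([z_{\min}+\eta, 1]) \geq c_1 > 0$: although the hypothesis $\int_{x_{eq}}^1 f_0 \, dx > 0$ together with $\phi_t(x_0) \to 1$ for $x_0 > x_{eq}$ ensures that characteristics carry positive mass past $z_{\min}+\eta$ in finite time, the exponential weight $e^{E_t(x_0)}$ in the representation formula \eqref{eq:mutimplicit} is not obviously bounded below, so the argument must combine the monotonicity $1 - N(t) \geq 1 - N(0)$ (eventually positive once transport has pushed initial mass above $z_{\min}$) with a similar no-accumulation argument on $[z_{\min}, z_{\min}+\eta]$.

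Finally, once $N(t) \to 0$ is established, I would split the representation formula \eqref{eq:mutimplicit} for any $v \in C([0, 1])$ as
\begin{equation*}
\int_0^1 v(x) \, \mu_t(dx) = \int_0^{x_{eq}} v(\phi_t(x_0)) e^{E_t(x_0)} \mu_0(dx_0) + \int_{x_{eq}}^1 v(\phi_t(x_0)) e^{E_t(x_0)} \mu_0(dx_0).
\end{equation*}
The first piece is bounded by $\|v\|_\infty \mu_t([0, x_{eq}]) \leq \|v\|_\infty N(t) \to 0$. For the second piece, the pointwise convergence $\phi_t(x_0) \to 1$ for $x_0 \in (x_{eq}, 1]$, the mass identity $\int_0^1 e^{E_t(x_0)} \mu_0(dx_0) = 1$, and a tightness argument controlling the behavior near $x_{eq}$ combine to yield convergence to $v(1)$, so $\mu_t \rightharpoonup \delta(x-1)$ as $t \to \infty$.
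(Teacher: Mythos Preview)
Your overall strategy (monotone tail functional plus a contradiction argument using the antisymmetry of $\rho(x,y)-\rho(y,x)$) is exactly the paper's, but two design choices in the paper make the argument close cleanly where yours stalls. First, the paper works with the \emph{upper} mass $P_{\mathcal I_z}(t)=\mu_t([z,1])$ for \emph{every} $z>z_{\min}$, not just a single threshold. For the contradiction step it takes two levels $z<z'$: assuming $P_{\mathcal I_z}^*<1$ gives $1-P_{\mathcal I_z}(t)\ge\epsilon>0$, and then the same computation you did yields
\[
\dsddt{}P_{\mathcal I_{z'}}(t)\ \ge\ \lambda C\,(1-P_{\mathcal I_z}(t))\,P_{\mathcal I_{z'}}(t)\ \ge\ \lambda C\epsilon\,P_{\mathcal I_{z'}}(t),
\]
so Gronwall forces $P_{\mathcal I_{z'}}(t)\to\infty$, a contradiction. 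The point is that the factor you called the ``main obstacle'' --- a uniform lower bound on $\mu_t([z_{\min}+\eta,1])$ --- never has to be established separately: it sits inside the differential inequality itself, and the contradiction hypothesis supplies the \emph{other} factor. Your plan to extract that lower bound from the representation formula is genuinely difficult, because the weight $e^{E_t(x_0)}\ge e^{-\lambda t}$ is not uniformly bounded below and the no-accumulation argument on the strip $[z_{\min},z_{\min}+\eta]$ would require quantitative control you do not yet have.

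Second, because the paper runs this argument for every $z\in(z_{\min},1)$, it obtains $\mu_t([z,1])\to 1$ for $z$ arbitrarily close to $1$, which gives $\mu_t\rightharpoonup\delta(x-1)$ immediately. Your route --- showing only $\mu_t([0,z_{\min}])\to 0$ and then invoking the representation formula and a tightness argument near $x_{eq}$ to upgrade this to concentration at $1$ --- is doable but noticeably heavier; in particular the sketch in your last paragraph does not control $\mu_t((z_{\min},z))$ for $z<1$, which is exactly what remains. If you simply run your monotonicity computation for $N_z(t)=\mu_t([0,z])$ with variable $z>z_{\min}$ and adopt the two-threshold Gronwall trick, both difficulties disappear.
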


To prove this, we adopt the strategy used to prove convergence of density-valued solutions to a delta-function at the all-cooperator equilibrium for multilevel selection in SH games with frequency-independent group-level competition \cite[Proposition 8]{cooney2020analysis}. The main changes in our approach are adapting our estimates to incorporate the role of pairwise between-group competition and formulating the minimal assumptions on the group-level reproduction rate $\rho(x,y)$ to reflect a generalization of the dynamics of an SH game that will result in sufficient group-level success of the all-cooperator group. 

\begin{proof}[Proof of Proposition \ref{prop:SHdelta}]
To show concentration of our solution $f(t,x)$ to the multilevel dynamics of Equation \eqref{eq:multilevelPDEtworho}, we will look to characterize the dynamics of the probability of groups contained in the interval $\mc{I}_z := [z,1]$. We then can define this probability as the following function of time
\begin{equation}
P_{\mathcal{I}_z}(t) := \int_z^1 f(t,x) dx,
\end{equation}
and long-time convergence to a delta-function $\delta(x-1)$ at the all-cooperator equilibrium would correspond to having $P_{\mathcal{I}_z}(t) \to 1$ as $t \to \infty$ for all $z$ close enough to $1$. 

For any $z > \min(z_{min},x_{eq})$, we can use our assumption that $\pi(z) < 0$ to compute that
\begin{equation}
\begin{aligned}
\dsdel{}{t} \int_z^1 f(t,x) dx &= x(1-x) \pi(x) f(t,x) \bigg|_{x=z}^{x=1} + \lambda \int_z^1 f(t,x) \left[ \int_0^1 \left( \rho(x,y) - \rho(y,x) \right) f(t,y) dy \right]dx \\
&= -z(1-z) \pi(z) f(t,z) + \lambda \int_z^1 f(t,x) \left[ \int_0^1 \left( \rho(x,y) - \rho(y,x) \right) f(t,y) dy \right]dx  \\
& \geq \lambda \int_z^1 f(t,x) \left[ \int_0^1 \left( \rho(x,y) - \rho(y,x) \right) f(t,y) dy \right]dx  \\
& = \lambda \int_z^1 \int_z^1 \left[ \rho(x,y) - \rho(y,x) \right] f(t,x) f(t,y) dx dy \\ &+ \lambda \int_z^1 \int_0^z \left[ \rho(x,y) - \rho(y,x) \right] f(t,x) f(t,y) dx dy.
\end{aligned}
\end{equation}
Noting that $\rho(x,y) - \rho(y,x)$ is anti-symmetric about the line $y = x$, we can deduce that the integral 
\begin{equation}
\lambda \int_z^1 \int_z^1 \left[ \rho(x,y) - \rho(y,x) \right] f(t,x) f(t,y) dx dy = 0 
\end{equation}
because the domain $[z,1] \times [z,1]$ is symmetric about this line. We can combine this with the assumption that $\rho(x,y) > \rho(y,x)$ for $x > z > y$ to deduce that 
\begin{equation}
\dsdel{}{t} \int_z^1 f(t,x) dx \geq \lambda \int_z^1 \int_0^z \left[ \rho(x,y) - \rho(y,x) \right] f(t,x) f(t,y) dx dy \geq 0,
\end{equation}
where the last inequality only holds strictly if $\int_z^1 f(t,x) dx = 0$ or $\int_z^1 f(t,x) dx = 1$. We have therefore shown that $P_{\mathcal{I}_z}(t)$ is a non-decreasing sequence. In addition, we can use the fact that $f(t,x)$ is a probability density to deduce that $P_{\mathcal{I}_z}(t) \leq 1$, so $P_{\mathcal{I}_z}(t)$ is a non-decreasing sequence that is bounded above, and we can conclude that there is a limit $P^*_{\mc{I}_z}$ such that $P_{\mathcal{I}_z}(t) \to P^*_{\mc{I}_z}$ as $t \to \infty$. 

Next, we need to show that this limit satisfies $ P^*_{\mc{I}_z} = 1$ for any $z \in [0,1)$. We assume for contradiction that there is a $z > \max(z_{min},x_{eq})$ such that $\lim_{t \to \infty} P_{\mathcal{I}_z}(t) \to P^*_{\mathcal{I}_z} < 1$ as $t \to \infty$. Then we see that, for any $z' > z$, that $P_{\mc{I}_{z'}}$ satisfies
\begin{equation}
\begin{aligned}
\dsddt{} P_{\mc{I}_{z'}}(t) &\geq \lambda \int_{z'}^1 \int_{0}^{z'} \left[ \rho(x,y) - \rho(y,x) \right] f(t,y) f(t,x) dy dx \\
&= \lambda \int_{z'}^1 \int_{0}^{z} \left[ \rho(x,y) - \rho(y,x) \right] f(t,y) f(t,x) dy dx  \\ &+ \lambda \int_{z'}^{1} \int_{z}^{z'} \left[ \rho(x,y) - \rho(y,x) \right] f(t,y) f(t,x) dy dx 
\end{aligned}
\end{equation}
From our assumption that $\rho(x,y) > \rho(y,x)$ for $x > y > z_{min}$, we can further deduce that 
\begin{equation}
\dsddt{} P_{\mc{I}_{z'}}(t) \geq \lambda \int_{z'}^1 \int_{0}^{z} \left[ \rho(x,y) - \rho(y,x) \right] f(t,y) f(t,x) dy dx. 
\end{equation}
From the assumption that $\rho(x,y)$ is increasing in $x$ for $x > z_{min}$ and that $\rho(x,y) > \rho(y,x)$ for $x > z_{min} > y$, there is a constant $C > 0$ such that $\rho(x,y) - \rho(y,x) > C$ for all $(x,y) \in [z',1] \times [0,z]$, and we can deduce that
\begin{equation}
\begin{aligned}
\dsddt{} P_{\mc{I}_{z'}}(t) &\geq \lambda C \int_{z'}^1 \int_{0}^{z}  f(t,y) f(t,x) dy dx \\
& \geq \int_{z'}^1 f(t,x) \left( \int_0^z f(t,y) dy \right) dx \\
& \geq \lambda C \left( 1 -  P_{\mc{I}_{z}}(t) \right) \int_{z'}^1 f(t,x) dx,
\end{aligned}
\end{equation}
and we can conclude that $ P_{\mc{I}_{z'}}(t)$ satisfies the differential inequality
\begin{equation}
 P_{\mc{I}_{z'}}(t) \geq \lambda C P_{\mc{I}_{z'}}(t) \left( 1 - P_{\mc{I}_{z}}(t) \right) 
\end{equation}
As we have assumed that $P_{\mc{I}_{z}}(t) \to P_{\mc{I}_{z}}^* < 1$ as $t \to \infty$, we know that there exists $\epsilon > 0$ such that $1 - P_{\mc{I}_{z}}(t) > \epsilon$ for all $t \geq 0$, so we further have the bound
\begin{equation}
\dsddt{} P_{\mc{I}_{z'}}(t) \geq \lambda \epsilon C P_{\mc{I}_{z'}}(t),
\end{equation}
and we can solve this differential inequality to see that
\begin{equation}
P_{\mc{I}_{z'}}(t) \geq P_{\mc{I}_{z'}}(0) e^{\lambda \epsilon C t}.
\end{equation}
We therefore see that $P_{\mc{I}_{z'}}(t) \to \infty$ as $t \to \infty$, contradicting the definition of $P_{\mc{I}_{z'}}(t)$ as a probability. Therefore we can conclude that $P_{\mc{I}_z} \to 1$ as $t \to \infty$ for all $z \in [0,1]$. 
\end{proof}

\section{Numerical Simulations of Long-Time Behavior} \label{sec:numerics}

In this section, we further investigate the long-time behavior of solutions to Equation \eqref{eq:multilevelPDEtworho} by exploring numerical simulations starting from uniform initial strategy distributions $\mu_t(dx) = dx$. All simulations described in this section are numerical solutions to upwind finite-volume for the game-theoretic examples introduced in Section \ref{sec:gametheory}. For these simulations, we use the group-level Fermi victory probability depending on the hyperbolic tangent of the difference in average payoffs between groups 
\begin{equation} \label{eq:Fermireminder}
\rho(x,y) = \frac{1}{2} \left[1 + \tanh \left(s \left\{G(x) - G(y) \right\} \right) \right],
\end{equation}
which means that the net group-level success in pairwise group-level competition for an $x$-cooperator group engaged in conflict with a $y$-cooperator group can be described by
\begin{equation}
\begin{aligned}
\rho(x,y) - \rho(y,x) &=  \left( \frac{1}{2} \left[1 + \tanh \left(s \left\{G(x) - G(y) \right\} \right) \right] \right) - \left(  \frac{1}{2} \left[1 + \tanh \left(s \left\{G(y) - G(x) \right\} \right) \right] \right) \\ &= \tanh\left( s \left[ G(x) - G(y) \right] \right)
\end{aligned}
\end{equation}
Using this formula and our discussion in Sections \ref{sec:PDsteadygroupsuccess} and \ref{sec:HDsteady}, we have conjectured that multilevel competition following the group-level Fermi rule can support steady-state cooperation if the relative strength $\lambda$ exceeds the threshold 
\begin{equation}
\lambda^*_{PD}(\theta) = \frac{\theta \pi(1)}{\rho(1,0) - \rho(0,1)} = \frac{\theta \pi(1)}{\tanh\left( s \left[G(1) - G(0) \right]\right)}
\end{equation}
for a generalized PD scenario, while we conjecture that a steady-state featuring cooperation above the within-group equilibrium level $x_{eq}$ can occur in a generalized HD scenario provided that $\lambda$ exceeds
\begin{equation}
\lambda^*_{HD}(\theta) = \frac{\theta \pi(1)}{\rho(1,x_{eq}) - \rho(x_{eq},1)} =  \frac{\theta \pi(1)}{\tanh\left( s \left[G(1) - G(x_{eq}) \right]\right)}.
\end{equation}

For our simulations, we will consider individual-level and group-level competition using replication rates based on the net individual-level advantage of defectors $\pi(x)$ and the average payoff of group members $G(x)$ given by
\begin{equation} \label{eq:pigame}
\pi(x) = - \left( \beta + \alpha x \right)
\end{equation}
and 
\begin{equation} \label{eq:Ggame}
G(x) = P + \gamma x + \alpha x^2
\end{equation}
as described in Section \ref{sec:gametheory}. We present numerical simulations for the group-level Fermi victory probability in the case of example games with payoff parameters corresponding to the PD game (Section \ref{sec:PDnumerics}), the HD game (Section \ref{sec:HDnumerics}), and the SH game (Section \ref{sec:SHnumerics}).
We describe the scheme used for our finite volume simulations in Section \ref{sec:numericalscheme} of the appendix, and we also present additional simulations of the multilevel dynamics for PD scenarios with different group-level update rules in Section \ref{sec:AdditionalPDSimulations}.

\subsection{PD Game}
\label{sec:PDnumerics}

We first consider numerical solutions for the multilevel dynamics of Equation \eqref{eq:multilevelPDEtworho} for the case of a PD game. For PD games, we can expect that cooperation will survive under the multilevel dynamics with pairwise group-level competition when the relative threshold value $\lambda^*_{PD}(1)$ from Equation \eqref{eq:lambdastarPDtheta} for H{\"o}lder exponent $\theta = 1$ near $x=1$. We can combine our expression for the threshold $\lambda^*_{PD}(1)$ with our expressions from Equations \eqref{eq:Fermireminder}, \eqref{eq:pigame}, and \eqref{eq:Ggame} to see that this can be written in terms our payoff parameters as
\begin{equation}
\lambda > \lambda^*_{PD}(1) := \frac{\pi(1)}{\rho(1,0) - \rho(0,1)} = \frac{- \left( \beta + \alpha \right)}{\tanh\left( s \left[ G(1) - G(0) \right] \right)} = \frac{-(\beta + \alpha)}{\tanh\left( s \left[\gamma + \alpha \right]\right)}.
\end{equation}
We will now consider numerical solutions of our PDE model, studying how the payoff parameters and strength of group-level selection $\lambda$ support the long-time dynamics of multilevel selection with pairwise group conflict competition following the Fermi group-level update rule.

In Figure \ref{fig:PDtrajectories}, we illustrate time-dependent solutions to our PDE model for multilevel selection starting from a uniform initial condition. In Figure \ref{fig:PDtrajectories}(left), we present a case with weak between-group competition characterized by $\lambda < \lambda^*_{PD}(1)$, showing that all groups in the population concentrate upon a the all-defector composition as time progresses in the simulation. In Figure \ref{fig:PDtrajectories} (right), we provide an example of a numerical solution for the case of $\lambda > \lambda^*_{PD}(1)$, showing that the solutions appear to tend towards a steady-state density supporting groups with all possible levels of cooperation. We will now explore the behavior of such steady state densities, examining how the long-time support for cooperation depends on the strength of group-level selection and the shape of the average payoff $G(x)$ for group members. 

\begin{figure}[!ht]
    \centering
    \includegraphics[width = 0.45\textwidth]{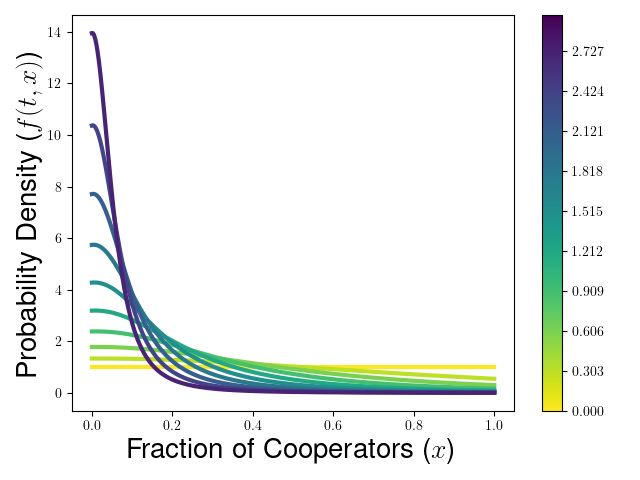}
       \includegraphics[width = 0.45\textwidth]{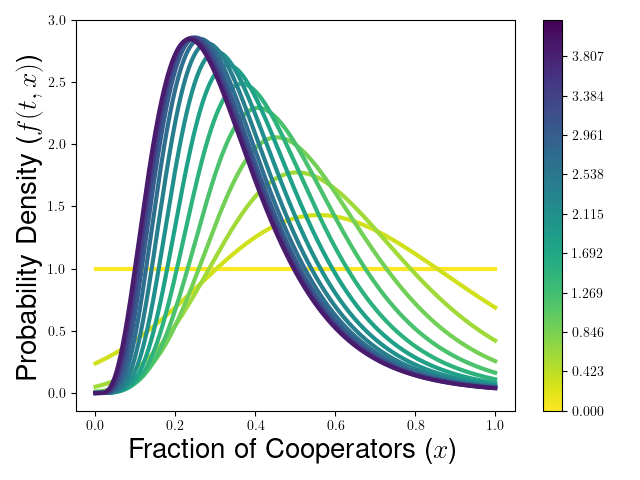}
    \caption{Snapshots in time to solutions for multilevel dynamics for PD game with average group payoff maximized by 75 percent cooperation for relative strength of group-level competition $\lambda = 0.01$ (left) and $\lambda = 14$ (right). The color of the densities corresponds to the time at which the density was achieved as a numerical solution to the PDE, with yellow describing early times and later times plotted in blue (see the color bars for actual time values $t$). The simulations were run for 1000 time-steps (left) and 1400 time-steps (right) with a step-size of $\Delta t = 0.003$, starting from a uniform initial density of group compositions. The game-theoretic parameters were fixed to $\gamma = 1.5$, $\alpha = -1$, $\beta = -1$, and $P = 1$, and the group-level competition followed the Fermi victory probability with payoff sensitivity parameter $s = 1$.}
    \label{fig:PDtrajectories}
\end{figure}

In Figure \ref{fig:PDsteadyghostcompare}, we provide illustrations of the steady state densities achieved by the multilevel PD dynamics for different values of the relative strength of between-group selection $\lambda$. Starting with the same initial condition, we see that the steady state densities feature increasing levels of cooperation as $\lambda$ increases both in the case in which average payoff of group members $G(x)$ is maximized at 100 percent cooperation (Figure \ref{fig:PDsteadyghostcompare}, left) and in a case in which average payoff of group members $G(x)$ is maximized by 75 percent cooperators (\ref{fig:PDsteadyghostcompare}, right). For the case in which full-cooperation is collectively optimal, we see that the steady state densities are capable of supporting a substantial portion of groups with compositions close to the optimal level of cooperation when between-group competition is sufficiently strong. For the case in which an intermediate level of cooperation maximizes average payoff, we see that the steady state densities appear to concentrate upon a composition of 50 percent cooperators, which is notably less than the optimal composition of 75 percent cooperators that maximizes average payoff and the probability of group-level victory. This discrepancy in the steady state behavior between the case in which average payoff is maximized by the all-cooperator group and by a group featuring 75-percent cooperation is analogous to the behavior seen for the long-time behavior for the case of multilevel selection following a two-level replicator equation \cite[Figure 1]{cooney2022long}, suggesting that multilevel dynamics with pairwise group-level competition also feature a long shadow cast by the individual-level incentive to defect.

\begin{figure}[!ht]
    \centering
    \includegraphics[width = 0.45\textwidth]{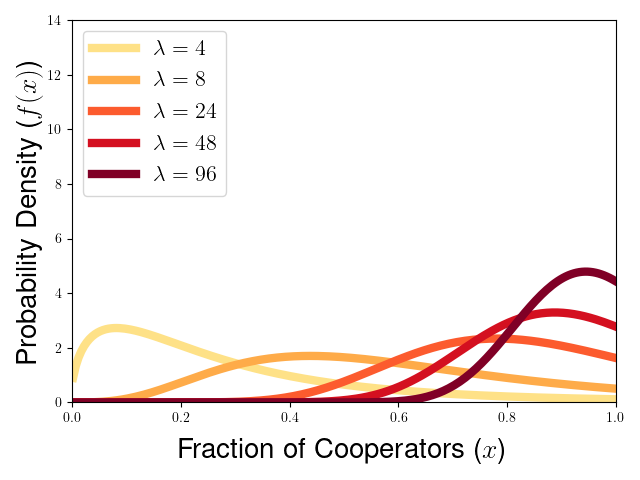}
       \includegraphics[width = 0.45\textwidth]{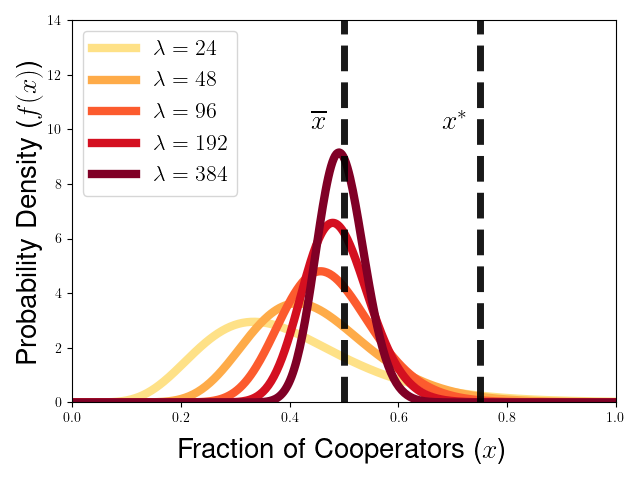}
    \caption{Comparison of densities achieved starting from uniform initial distribution after 9,600 time steps with step-size $\Delta t = 0.003$ for different values of between-group selection strength $\lambda$ for PD games in which average payoff is maximized by the all-cooperator group $x^* = 1$ (left) or is maximized by a group with 75 percent cooperators ($x^* = \frac{3}{4}$, right). In the right panel, the vertical dashed lines corresponds to the fraction of cooperation $x^* = \frac{3}{4}$ that maximizes the average payoff of group members for the PD game and the fraction of cooperation $\ol{x} = 0.5$ achieving the same average payoff as the payoff in an all-cooperator group (i.e. satisfying $G(\ol{x}) = G(1)$). For these simulations, the game-theoretic parameters considered were $\gamma = 2$ (left) and $\gamma = 1.5$ (right), with $\alpha = \beta = -1$ for both panels, and group-level conflict followed the Fermi update rule with sensitivity parameter $s = 1$.}
    \label{fig:PDsteadyghostcompare}
\end{figure}

We can also explore the long-time behavior of our numerical simulations as a function of the strength of between-group selection $\lambda$. In Figure \ref{fig:PDgroupsuccess}, we explore the impact of the relative strength $\lambda$ of between-group selection on the collective outcomes achieved for a PD game in which average payoff is maximized by groups with 75 percent cooperation. We plot the average group-level success at steady state $\int_0^1 x \rho(x,1) f(x,t) dx$ (Figure \ref{fig:PDgroupsuccess}) achieved from the states achieved after 9600 time steps of our finite volume simulations, and we compare this to the conjectured collective success we would expect in the long-time outcome based on our exploration in Section \ref{sec:PDsteadygroupsuccess}. In particular, we can use our conjecture that the population converges to a delta-function at the all-defector equilibrium if $\lambda \leq \lambda^*_{PD}(1)$ and that the population converges to a steady-state density $f^{\lambda}_{1}(x)$ with collective success $\int_0^1 \rho(1,x) f^{\lambda}_{1}(x) dx$ given by Equation \eqref{eq:rho1xsteadystate} to conjecture that the long-time collective success against the all-cooperator group is given by 
\begin{equation}\label{eq:PDlongtimesuccessconjecture}
\begin{aligned}
  \lim_{t \to \infty} \int_0^1 \rho(x,1) f(t,x) dx &= \left\{
    \begin{array}{cr}
      \rho(0,1) & : \lambda < \lambda^*_{PD}(1) \vspace{3mm} \\
      \ds\frac{1}{2} - \ds\frac{\pi(1)}{2 \lambda} & : \lambda \geq \lambda^*_{PD}(1)
    \end{array}
  \right.\\ \\
  &=  \left\{\begin{array}{cr}
      \frac{1}{2} + \frac{1}{2} \tanh\left( s \left[G(0) - G(1) \right] \right) & : \lambda < \lambda^*_{PD}(1) \vspace{3mm} \\
      \ds\frac{1}{2} + \ds\frac{\alpha + \beta}{2 \lambda} & : \lambda \geq \lambda^*_{PD}(1)
    \end{array}
    \right.
  .
  \end{aligned}
\end{equation}
From Figure \ref{fig:PDgroupsuccess}, we see that there is good agreement between the conjectured formula from Equation \eqref{eq:PDlongtimesuccessconjecture} for the long-time collective outcome and the numerically computed collective outcome after 9,600 time-steps. In particular, we see that the numerical solution appears to suggest that $\lambda^*_{PD}(1)$ is a critical value at which the long-time population can outperform the success achieved by an all-defector group, while the long-time population's average victory probability against the all-cooperator group approaches $\frac{1}{2}$ in the limit of large $\lambda$.

\begin{figure}[!ht]
    \centering
    \includegraphics[width = 0.6\textwidth]{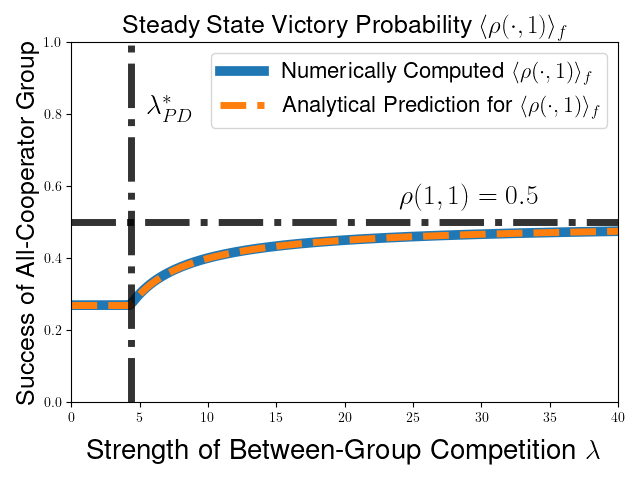}
    \caption{Comparison of the group-level success measured from numerical simulations and the predicted value for a density steady state, plotted as a function of $\lambda$ for a PD game in which average group payoff is maximized by a group fraction $x^* = \frac{3}{4}$ cooperators. We plot both the numerical computation $\int_0^1 \rho(y,1) f(t,y) dt$ of the population against the all-cooperator group (solid blue curve) with the predicted formula from Equation \eqref{eq:rhoy1steadystate} of the collective success $\int_0^1 \rho(y,1) f(y) dy$ for a bounded density steady state (dashes orange curve). The black vertical dash-dotted line gives the predicted threshold selection strength $\lambda_{PD}(1)$ from Equation \eqref{eq:lambdastarPDtheta} for the case of H{\"o}lder exponent $\theta = 1$. The payoff parameters were $\gamma = 1.5$, $\alpha = \beta = -1$, and $P = 1$, and group-level competition followed the Fermi update rule with sensitivity parameter $s = 1$.}
    \label{fig:PDgroupsuccess}
\end{figure}

In Figure \ref{fig:PDaveragecooperators}, we plot the average level of cooperation $\int_0^1 x f(t,x) dx$ achieved after 9,600 time-steps for various strengths of the between-group competition $\lambda$ for the case of a PD game in which average payoff is maximized by the intermediate level of cooperation $x^* = 0.75$ and such that the composition $\ol{x} = 0.5$ satisfies $G(\ol{x}) = G(1)$, featuring the same average payoff as the all-cooperator group. We see that the average level of cooperation starts out at $0$ for $\lambda$ sufficiently close to $0$, and that the level of cooperation first takes positive values when $\lambda$ increases past the conjectured threshold selection strength $\lambda^*_{PD}(1)$. The average level of cooperation then increases with $\lambda$, approaching the value of $\ol{x} < x^*$ for the case of strong relative levels of group-level competition. This provides further illustration of a shadow cast by lower-level selection for our PDE model with pairwise group-level competition, as the long-time behavior produces less cooperation than is optimal for the group, and the population concentrates upon the level of cooperation that has the same payoff as the all-cooperator group (with corresponding group-level victory probability satisfying $\rho(x^*,1) = \frac{1}{2}$).  

\begin{figure}[!ht]
    \centering
    \includegraphics[width = 0.6\textwidth]{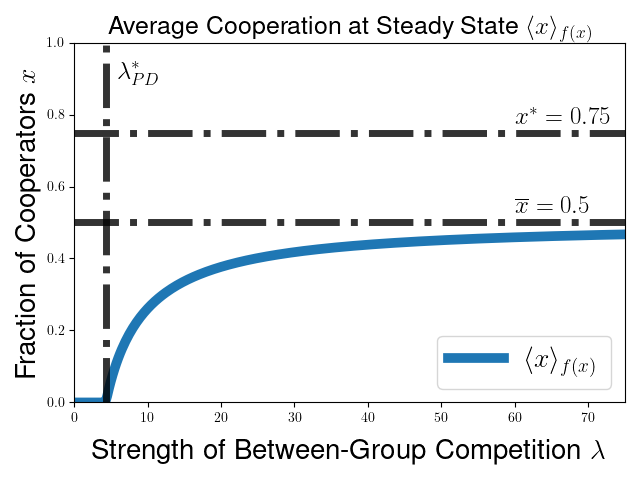}
    \caption{Plot of the average fraction of cooperators $\int_0^1 x f(t,x) dx$ for the numerical simulation after 9,600 time-steps of step-size 0.01 (blue curve). The vertical dash-dotted line corresponds to the predicted threshold strength of between-group selection $\lambda^*_{PD}(1)$ from Equation \eqref{eq:lambdastarPDtheta}. The upper horizontal dash-dotted line corresponds to the level of cooperation $x^* = 0.75$ that maximizes average payoff of group members, while the lower horizontal dash-dotted line corresponds to the level of cooperation $\overline{x} = 0.5$ at which the average payoff of group members is equal to the average payoff $G(1)$ of the all-cooperator group. The game-theoretic parameters were fixed at $\gamma = 1.5$, $\beta = -1$, $\alpha = - 1$, and $P = 1$ for all simulations, and the sensitivity parameter for the group-level Fermi update rule was set to $s = 1$. }
    \label{fig:PDaveragecooperators}
\end{figure}

\subsection{HD Game}
\label{sec:HDnumerics}

Next, we present the results of similar numerical simulations of the PDE model when the within-group and between-group competition depends on payoffs from an HD game. From our discussion from Section \ref{sec:HDsteady}, we expect that the population will converge to a delta-function at the within-group Hawk-Dove equilibrium $x_{eq} = \tfrac{\beta}{-\alpha}$ for low strength $\lambda$ of group-level competition, while we anticipate that our simulations multilevel dynamics can support a long-time steady state density featuring increased levels of cooperation when $\lambda$ exceeds the conjectured threshold quantity $\lambda^*_{HD}(1)$. 
In particular, we can use our expressions for $\lambda^*_{HD}(1)$, $G(x)$, $\pi(x)$, and $x_{eq}$ to see that we expect a density steady-state featuring groups with fractions of cooperation exceeding $x_{eq}$ provided that
\begin{equation}
\lambda > \lambda^*_{HD}(1) := \frac{\pi(1)}{2 \rho(1,0) - 1} = \frac{-\left(\beta + \alpha\right)}{ \tanh\left(s \left[ G(1) - G(x_{eq}) \right] \right)} = \frac{-\left(\beta + \alpha\right)}{\tanh\left( \frac{s \left( \gamma + \alpha - \beta\right) \left(  \beta + \alpha \right)}{\alpha} \right)}.
\end{equation}

We now present numerical trajectories for solutions to the multilevel dynamics of Equation \eqref{eq:multilevelPDEfirstform} for an HD game starting from an initial uniform distribution of group compositions and two different relative strengths of between-group selection $\lambda$. For the case of $\lambda = 0.1$, we see that the population concentrates around the equilibrium value $x_{eq} = \frac{1}{2}$ for the within-group dynamics for the HD game (Figure \ref{fig:HDtrajectories}, left), while we see that the numerical solutions appear to approach a steady state density featuring levels of cooperation exceeding $x_{eq}$ for the case of $\lambda = 15$ (Figure \ref{fig:HDtrajectories}, right). For the case of convergence to a steady state supporting additional cooperation, we see that there appear to be no groups featuring less than a fraction $x_{eq}$ of cooperators after 9,600 time steps, but the groups present at this time feature a mix of different fractions of cooperation ranging between $x = x_{eq}$ and $x = 1$. This behavior is consistent with the dynamical behavioral seen for the HD case of the previously studied PDE model of multilevel selection with frequency-independent group-level competition (see \cite[Proposition 4.2]{cooney2019replicator} and \cite[Theorem 6]{cooney2022long}). 
\begin{figure}[!ht]
    \centering
    \includegraphics[width = 0.45\textwidth]{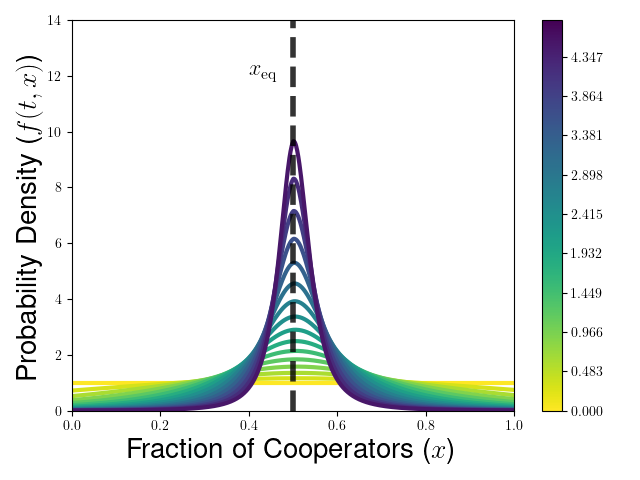}
       \includegraphics[width = 0.45\textwidth]{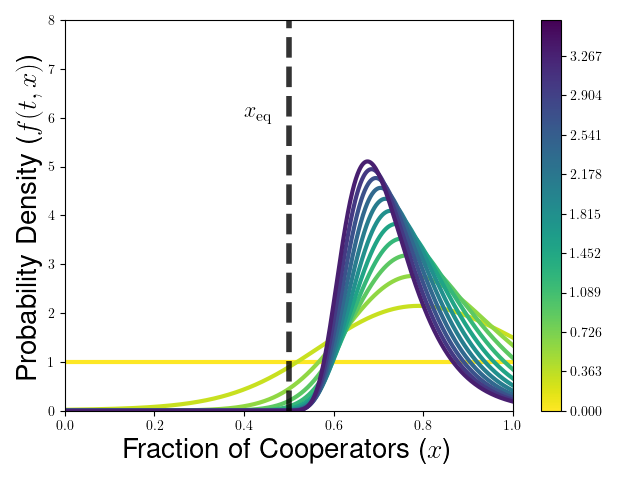}
    \caption{Snapshots in time to solutions for multilevel dynamics for HD game starting from a uniform initial distribution of group compositions (with the initial condition plotted in yellow). For the HD game under consideration, the equilibrium to the within-group dynamics was $x_{eq} = 0.5$ and the fraction of cooperation maximizing the average payoff of group members was given by $x^* = 0.875$.  Simulations were run for $1600$ (left) and $1200$ (right) time steps with step size of $\Delta t = 0.003$ and the colors of the densities plotted describe times as displayed in the colorbar. The game-theoretic parameters were set to $\gamma = 3.5$, $\alpha = -2$, $\beta = 1$, and $P = 1$ for each panel, and the group-level Fermi update rule was considered with sensitivity $s = 1$, and the relative strength of between-group selection was $\lambda = 0.1$ (left) and $\lambda = 15$ (right) for the two panels.}
    \label{fig:HDtrajectories}
\end{figure}

In Figure \ref{fig:HDsteadyghostcompare}, we plot the solutions for multilevel HD dynamics obtained after many time-steps for different values of the between-group selection strength $\lambda$. We compare a case in which the average payoff of group members $G(x)$ is maximized by the all-cooperator group $x^* = 1$ (Figure \ref{fig:HDsteadyghostcompare}) and a case in which average payoff $G(x)$ is maximized by a group featuring 87.5\% cooperation. For the case in which full-cooperation is collectively optimal, we see that the long-time behavior features high fractions of cooperation and hasfull-cooperation as its modal outcome even for between-group selection strengths of $\lambda = 16$. For the case of the intermediate collective outcome, we see that the long-time distribution of groups concentrates upon a level of cooperation lower than the fraction of cooperation $x^* = 0.875$ that optimizes average payoff $G(x)$, even for a relatively large strength of between group-selection $\lambda = 384$. This suggests that the multilevel dynamics with pairwise group-level competition also features a shadow of lower-level selection, similar to the behavior seen for the case of the Hawk-Dove game under the two-level replicator equation model \cite{cooney2019replicator,cooney2020analysis,cooney2022pde}. 

\begin{figure}[!ht]
    \centering
    \includegraphics[width = 0.45\textwidth]{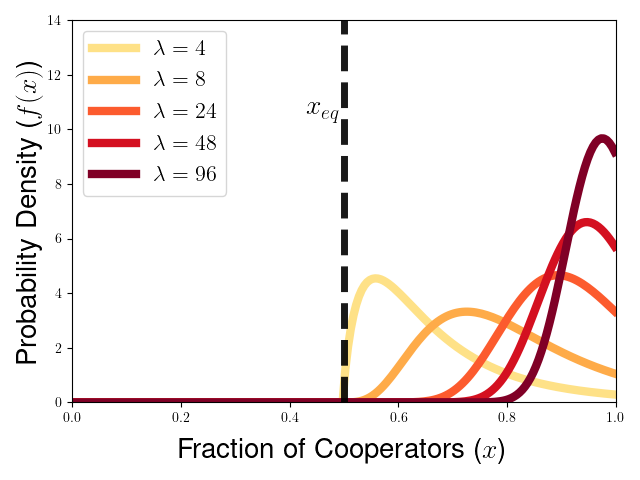}
       \includegraphics[width = 0.45\textwidth]{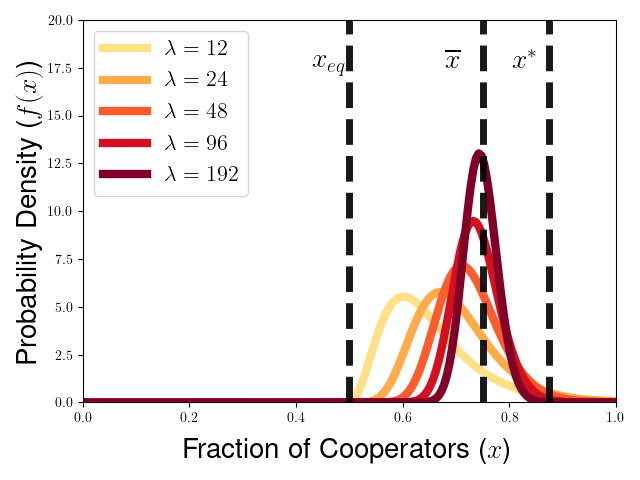}
    \caption{Comparison of densities achieved after 9,600 time steps of step-size $\Delta t = 0.003$ starting from uniform initial distribution for different values of between-group selection strength $\lambda$ for HD games in which average payoff is maximized by the all-cooperator group $x^* = 1$ (left) or is maximized by a group with 87.5 percent cooperators ($x^* = \frac{7}{8}$, right). The vertical dashed lines correspond to the equilibrium fraction of cooperators $x_{eq} = 0.5$ under individual-level selection alone, the fraction of cooperators $x^*$ maximizing average payoff, and the fraction of cooperators $\ol{x}$ with the same average payoff as the all-cooperator groups (i.e. satisfying $G(\ol{x}) = G(1)$). The game-theoretic parameters were given by $\gamma = 4$ (left) and $\gamma = 3.5$, $\alpha = -2$, $\beta = 1$, and $P = 1$, and the simulations were run with the group-level Fermi update rule with sensitivity parameter $s = 1$.}
    \label{fig:HDsteadyghostcompare}
\end{figure}

We can then compare our analytical predictions for the average group-level replication rate at steady state and the threshold selection strength $\lambda^*_{HD}$ to the values of these quantities computed from numerical solutions of the PDE model starting from a uniform initial distribution of strategies. We show the average group-level success $\int_0^1 \rho(y,1) f(t,y) dy$ relative to the all-cooperator group (Figure \ref{fig:HDquantitieslambda}) and the average fraction of cooperation $\int_0^1 y f(t,y) dy$ (Figure \ref{fig:HDaveragecooperators}) for the numerical solutions to the multilevel dynamics, which are plotted as a function of $\lambda$.
We compare the numerically computed values of the collective success against the all-cooperator group with the following conjectured formula obtained from our discussion in Section \ref{sec:HDsteady}:
\begin{equation}\label{eq:HDlongtimesuccessconjecture}
\begin{aligned}
  \lim_{t \to \infty} \int_0^1 \rho(x,1) f(t,x) dx &= \left\{
    \begin{array}{cr}
      \rho(x_{eq},1) & : \lambda < \lambda^*_{HD}(1) \vspace{3mm} \\
      \ds\frac{1}{2} -  \ds\frac{\pi(1)}{2 \lambda} & : \lambda \geq \lambda^*_{HD}(1)
    \end{array}
  \right.\\ \\
  &=  \left\{\begin{array}{cr}
     \frac{1}{2} + \frac{1}{2} \tanh\left( s \left[G\left( \frac{\beta}{-\alpha} - G(1)\right) \right] \right) & : \lambda < \lambda^*_{HD}(1) \vspace{3mm} \\
      \ds\frac{1}{2} + \ds\frac{\beta + \alpha}{2 \lambda} & : \lambda \geq \lambda^*_{HD}(1)
    \end{array}
    \right.
  .
  \end{aligned}
\end{equation}
We see that there is good agreement with the numerically calculated value of the average success of the population $\int_0^1 \rho(t,x)dx$ after 9,600 time-steps and our analytical prediction for the collective success that is necessary to achieve a bounded steady state for the multilevel dynamics. We also see that the average fraction of cooperators starts out a the equilibrium level $x_{eq}$ for $\lambda$ sufficiently small, before increasing once $\lambda$ exceeds our conjectured threshold value $\lambda^*_{HD}(1)$. This average fraction of cooperation approaches the values $\overline{x} = 0.75$ as $\lambda$ becomes large, which is less than the fraction of cooperation $x^*$ that maximizes average payoff for the HD game under consideration. The average level of cooperation instead converges to the level of cooperation that achieves the same collective payoff $G(\ol{x}) = G(1)$ of the all-cooperator group, providing more numerical support for the idea that our PDE model with pairwise group-level competition appears to feature a shadow of lower-level selection for the multilevel HD scenario.

\begin{figure}[!ht]
    \centering
    \includegraphics[width = 0.6\textwidth]{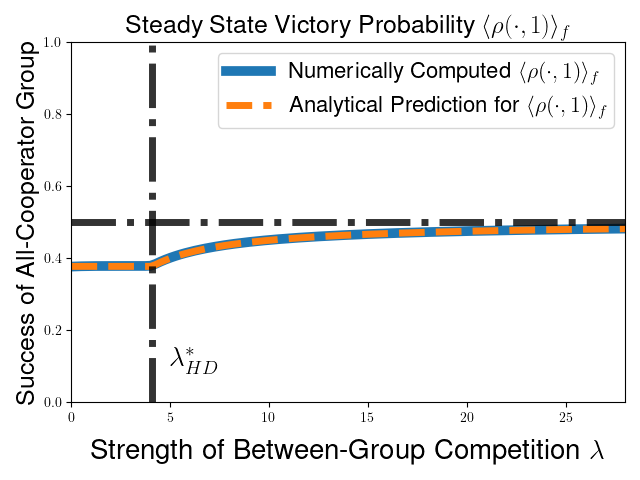}
    \caption{Comparison of the group-level success $\int_0^1 \rho(y,1) f(t,y) dt$ of the population against the all-cooperator group (solid blue curve) with the predicted formula from Equation \eqref{eq:rhoy1steadystate} of the collective success $\int_0^1 \rho(y,1) f(y) dy$ in conflict with an all-cooperator group for a bounded density steady state (dashes orange curve). The black vertical dash-dotted line gives the predicted threshold selection strength $\lambda^*_{HD}(1)$ from Equation \eqref{eq:lambdastarHDtheta} for the case of H{\"o}lder exponent $\theta = 1$. The game-theoretic parameters were set to $\gamma = 3.5$, $\alpha = -2$ $\beta = 1$, and $P = 1$ for each panel, and the group-level Fermi update rule was considered with sensitivity $s = 1$.}
    \label{fig:HDquantitieslambda}
\end{figure}

\begin{figure}[!ht]
    \centering
     \includegraphics[width = 0.6\textwidth]{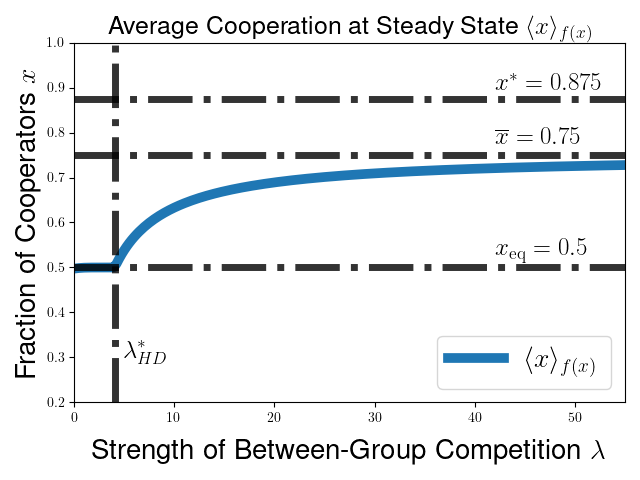}
    \caption{Plot of the average fraction of cooperators $\int_0^1 x f(t,x) dx$ for the numerical simulation after 9,600 time-steps of step-size 0.01 (blue curve), presented as a function of the relative strength of group-level competition $\lambda$. The vertical dash-dotted line corresponds to the predicted threshold strength of between-group selection $\lambda^*_{PD}(1)$ from Equation \eqref{eq:lambdastarPDtheta}. The upper horizontal dash-dotted line corresponds to the level of cooperation $x^* = 0.875$ that maximizes average payoff of group members, the middle horizontal dash-dotted line corresponds to the level of cooperation $\overline{x} = 0.75$ at which the average payoff of group members is equal to the average payoff $G(1)$ of the all-cooperator group, and the lower horizontal dash-dotted line corresponds to the equilibrium point $x_{eq} = 0.5$ for the within-group dynamics. The payoff parameters were given by $\gamma = 3.5$, $\beta = 1$, and $\alpha = -2$, and group-level conflict used the Fermi update rule with sensitivity parameter $s = 1$.  }
    \label{fig:HDaveragecooperators}
\end{figure}

\subsection{SH Game}
\label{sec:SHnumerics}

We can also run finite volume simulations with replication rates arising from the payoffs of the SH game. In Figure \ref{fig:SHtrajectories}, we illustrate numerically computed solutions to Equation \eqref{eq:PDEmeasure} starting from a uniform initial distribution of strategies both in the absence of between-group competition when $\lambda = 0$ (Figure \ref{fig:SHtrajectories}, left) and in a case with a positive strength of between-group competition with $\lambda = 1$ (Figure \ref{fig:SHtrajectories}). For the case of no between-group competition, we see from the left panel of Figure \ref{fig:SHtrajectories} that each group moves towards the closest stable equilibrium of the within-group dynamics, producing a long-time behavior with half of the groups each ending up at the all-cooperator and all-defector composition. For the case of positive strength of between-group competition $\lambda > 0$, we see that portions of the population initially move towards each of the two stable within-group equilibria, before concentrating upon the collectively beneficial all-cooperator equilibrium after several thousand time-steps. This provides a numerical illustration of the result from Proposition \ref{prop:SHdelta} predicting that the population will concentrate at the all-cooperator equilibrium in the presence of group-level competition and any groups located above the within-group equilibrium $x_{eq}$, and is reminiscent of the tug-of-war between stable within-group equilibria suggested in the work of Richerson and Boyd on group selection between alternative stable evolutionarily stable strategies \cite{boyd1990group}.

\begin{figure}[!ht]
    \centering
    \includegraphics[width = 0.45\textwidth]{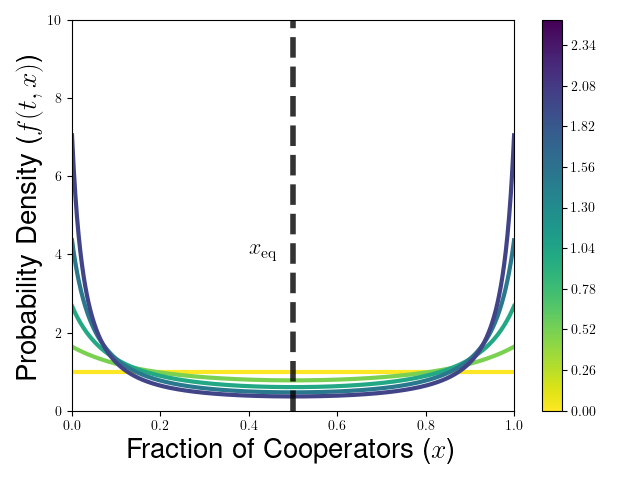}
       \includegraphics[width = 0.45\textwidth]{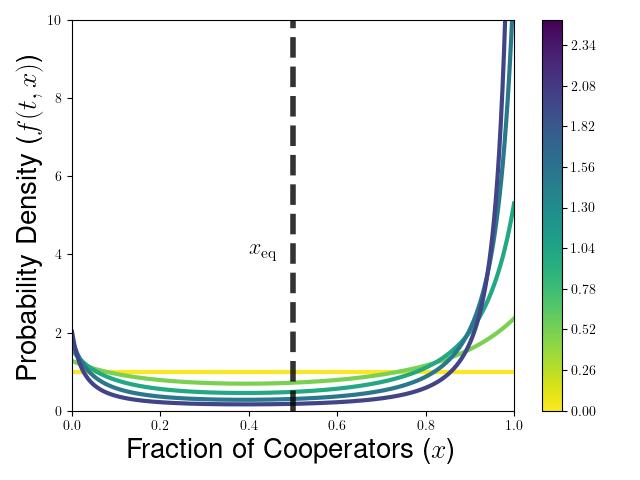}
    \caption{Snapshots in time to solutions for multilevel dynamics for the SH game for cases for between-group selection strength $\lambda = 0$ (left) and $\lambda = 1$ (right), starting with a uniform distribution of strategies in the group-structured population plotted in yellow. The color of each density corresponds to the time displayed in the color bar. The vertical dashed line represents the within-group equilibrium $x_{eq} = 0.5$. The simulations were run for 250 time-steps with step-size of $\Delta t = 0.01$, the payoff parameters were set to $gamma = 0$, $\alpha = 2$, $\beta = 0$, and $P = 2$ for each panel, and group-level competition used the Fermi update rule with sensitivity parameter $s = 1$. }
    \label{fig:SHtrajectories}
\end{figure}

\section{Discussion}
\label{sec:discussion}

In this paper, we explored dynamical and steady-state behavior of a PDE model of multilevel selection as a two-level birth-death process with pairwise between-group competition. By establishing well-posedness and obtaining an implicit representation formula for measure-valued solutions for our model, we provide a first step in the process of characterizing the dynamics induced by multilevel dynamics with frequency-dependent group-level conflict. We have applied the implicit representation formula to demonstrate the preservation of the tail-behavior of solutions of the multilevel dynamics, and to demonstrate that the population can converge to a delta-function at the all-defector equilibrium for multilevel PD dynamics when individual-level selection dominates pairwise group-level competition. Furthermore, our characterization of possible steady state densities for multilevel PD and HD scenarios provides us with conjectured analytical expressions for steady-state collective success and a threshold selection strength to sustain long-time cooperation. Our numerical simulations of our PDE model provide support for these conjectured analytical expressions, and the combination of analytical and numerical results provide a preliminary analysis motivating future exploration for modeling multilevel selection with pairwise group-level competition. 

We used both numerical simulations and a heuristic calculation for the group-level victory probability for the all-cooperator group to suggest that the collective outcome is limited by the collective success of the all-cooperator group in our model of multilevel selection. This extends the behavior called the ``shadow of lower-level selection'' seen in previous PDE models of multilevel selection to our model with pairwise between-group selection, indicating that this behavior may hold for a broader class of models with different forms of competition taking place between groups. The form of the conjectured threshold level of selection strength $\lambda^*_{PD}(\theta)$ required to sustain steady-state cooperation suggests that the ability to sustain long-time cooperation depends on a tug-of-war between the individual incentive to defect $\pi(1) = \pi_D(1) - \pi_C(1)$ in an all-cooperator group and the collective incentive $\rho(1,0) - \rho(0,1)$ to achieve full-cooperation over full-defection. In the multilevel HD scenario, our conjectured formula for the threshold selection strength $\lambda^*_{HD}$ similarly reflected a conflict between the individual incentive to defect $\pi(1)$ and the collective advantage $\rho(1,x_{eq}) - \rho(x_{eq},1)$ of the all-cooperator group over a group at the within-group HD equilibrium $x_{eq}$ under pairwise group conflict.  As noted in Remark \ref{rem:PDadditive}, these threshold formulas directly generalize the formulas obtained from the case of additively separable group-level victory probabilities which produce multilevel dynamics following two-level replicator equations, suggesting that this tug-of-war between individual and group incentives at within-group equilibrium points may potentially be a common feature of deterministic models of multilevel selection. 

In  addition, we have shown in the case of a generalized Stag-Hunt scenario that multilevel selection with pairwise between-group competition will result in convergence to a population featuring only cooperators provided that there is any positive strength of between-group selection. This result provides an analysis of the behavior suggested by Richerson and Boyd for multilevel selection for scenarios in which within-group dynamics feature alternative stable strategies \cite{boyd1990group}. While Richerson and Boyd focus on the case of a separation of time-scales between the levels of selection (in which within-group dynamics quickly concentrate upon one of two stable strategies and then group-level selection operates upon nearly homogeneous groups on a slower time-scale), our result on the Stag-Hunt game shows that group-level competition will promote coordination on the cooperative equilibrium even when within-group and between-group competition occur on an overlapping time-scale. This suggests that pairwise group-level competition is relatively robust in its ability to promote the selection of the socially optimal within-group equilibria in the case of bistable individual-level dynamics, which may indicate that this form of group-level competition is particular effective when working in concert with mechanisms like reciprocal altruism \cite{trivers1971evolution,nowak1993strategy}, altruistic punishment \cite{boyd2003evolution,bowles2012punishment}, and social norms of indirect reciprocity \cite{boyd1989evolution,nowak1998dynamics,brandt2004logic,ohtsuki2006leading,pacheco2006stern} for which punishment of defectors can help to stabilize the all-cooperator equilibrium under individual-level dynamics. 

By introducing a PDE model of multilevel selection with frequency-dependent group-level competition, we have incorporated a greater degree of nonlinearity in the term describing group-level replication. Unlike the case of prior PDE models of multilevel selection, it does not appear to characterize the long-time behavior of solutions by using a strategy of obtaining explicit solutions to a simplified linear PDE by considering a growing population of groups by ignoring the replacement of groups due to group-level conflict \cite{cooney2022long}. By characterizing necessary conditions for the existence of density steady states supporting cooperation, we have a first step in understanding the extent to which pairwise group-level competition can allow for the support of the evolution of cooperation in multilevel selection when individual and group replication depends on the payoffs of Prisoner Dilemma and Hawk-Dove games. Our conjectured threshold selection strengths $\lambda^*_{PD}$ and $\lambda^*_{HD}$ for the persistence of density steady state generalizes the case of additively separable group-level victory probabilities that can be studied using a two-level replicator equation, and provides us with a basis for understanding of how the assumptions about group-level competition can impact the long-time support for collectively beneficial outcomes.

In addition, by studying the behavior of our PDE model of multilevel selection for sample two-player, two-strategy games, we have set baseline expectations for the role that pairwise group-level competition can play when working in concert with other within-group mechanisms that can help to support cooperative behavior or otherwise alter the strategies that can be supported by within-group selection alone. Synergistic effects between within-group mechanisms and group-level competition have already been explored in the context of two-level replicator equations \cite{cooney2022assortment,cooney2023evolutionary}, and recent experimental and simulation work has demonstrated that group selection and direct reciprocity can together promote cooperation for game-theoretic scenarios in which neither mechanism could sustain cooperation on its own \cite{efferson2024super}. Exploration of stochastic models have already explored how pairwise group-level conflicts can help to promote cooperative behavior in concert with costly punishment of defection \cite{boyd1990group} or social institutions that help to mediate individual-level advantages to defect \cite{bowles2003co}, so a natural next step is to study the extent to which such synergies between group-level conflict and within-group mechanisms can be seen in PDE models of multilevel selection. A first step in this direction has been explored in the context of altruistic punishment, in which it was shown that numerical simulations of two-level replicator equations and our PDE model with pairwise group competition displayed qualitative differences in how long-time average payoff depended on the cost of punishing defectors \cite{cooney2024exploring}.

While we have extended the existing modeling framework for PDE models of multilevel selection to consider pairwise group-level competition, our analysis in this paper has been limited to the assumption that victory in pairwise group-level conflict has been a function of the average payoffs achieved in the two competing groups. However, it is also natural to explore multilevel selection in which group-level competition depends on quantities other than the average individual-level success of group members \cite{simpson2024levels}, and the notion of pairwise conflict between groups could potentially motivate group-level victory probabilities that are more general functions of the strategic compositions of the competing groups. Other researchers have used alternative frameworks for studying multilevel selection with group-level competition in which individual strategies consist of a relative emphasis on investment in within-group cooperation and participation in group-level conflict \cite{henriques2019acculturation,choi2007coevolution,reeve2007emergence,boza2010beneficial,tverskoi2021dynamics}, and the literature on hierachical social dilemmas and multilevel selection with group-level game-theoretic interactions provides further motivation for considering how different forms of group-level frequency dependence can shape cooperative behavior via multilevel selection \cite{simon2010dynamical,fujimoto2017hierarchical,simon2024evolutionary}. This related work suggests that it may be helpful to consider versions of our PDE model with individual-level net replication rates $\pi(x)$ and group-level victory probabilities $\rho(x,y)$ motivated from a more general class of individual and group incentives for cooperation, as well as  to consider generalizations of our PDE model to incorporate group-level replication rates depending on the payoffs of games played between groups.

 While our model of pairwise between-group competition incorporates a feature of some existing simulation models of multilevel selection with frequency-dependent group-level competition, there are still many simplifying assumptions that we have made in order to formulate and analyze our PDE model of multilevel selection. In particular, we have assumed that the group size and number of groups remains constant, and we assume that group-level replication events consist of the winning group in a pairwise competition producing an exact copy of itself to replace the losing group in the conflict. Recent work on other  incorporated group-level competition based on differences in fractions of cooperative individuals has been used in PDE models of multilevel selection with non-constant group size, density-dependent within-group dynamics, and group-level fission events \cite{simon2010dynamical,markvoort2014computer,simon2016group,simon2024evolutionary,lerch2024flexible}, so a natural follow-up question is to explore whether behaviors observed in this paper, such as the shadow of lower-level selection, would also occur if we included more detailed group-level events in our PDE model. In particular, existing work on PDE models of multilevel selection featuring more realistic group-level events typically considers within-group games or group-level competition featuring all-cooperation as the collectively optimal strategy distribution, so it would be interesting to study such models with pairwise group-level competition based on more general classes of games that most favor intermediate levels of cooperation at the individual or group levels.

Finally, there also many future directions for further study of the PDE model we have explored in this paper. In particular, our derivation of conjectured formulas for the collective success at steady state and the threshold selection strength were calculated based on necessary conditions for the existence of steady state densities with given H{\"o}lder exponent near the all-cooperator equilibrium, but we have not yet established the existence of such steady states. While numerical simulations appear to suggest that the population can converge to a steady distribution of group compositions whose collective success matches our conjectured analytical expression, our understanding of the multilevel dynamics with pairwise group-level competition would be improved if we could demonstrate existence of steady state densities and prove convergence to a steady state density in the way that has been established for two-level replicator equations \cite{luo2017scaling,cooney2019replicator,cooney2022long}. Further efforts to demonstrate conditions for extinction or persistence of cooperation depending on the initial condition and strength of group-level competition would allow a greater sense of the tug-of-war between the individual-level incentive to defect and the collective incentive to achieve group-level victory. With these goals for future work, we see that there are still many mathematical questions to explore related to more general PDE models of multilevel selection featuring frequency-dependent competition at the group level.

\renewcommand{\abstractname}{Acknowledgments}
\begin{abstract} 
DBC was supported as a Simons Postdoctoral Fellow in Mathematical Biology through a Math + X Grant awarded to the University of Pennsylvania.  The authors would like to thank Yoichiro Mori,  Joshua Plotkin, and Erol Akçay for many helpful discussions.

\end{abstract}

\renewcommand{\abstractname}{Statement on Code Availability}
\begin{abstract} 
\sloppy{All code and simulation outputs used to generate figures are archived on Github (\href{https://github.com/dbcooney/Multilevel-Pairwise-Group-Paper-Code}{https://github.com/dbcooney/Multilevel-Pairwise-Group-Paper-Code}) and licensed for reuse, with appropriate
attribution/citation, under a BSD 3-Clause Revised License.}
\end{abstract}

\bibliographystyle{unsrt}
\bibliography{references}

\appendix
\appendixpage
\addtocontents{toc}{\protect\setcounter{tocdepth}{1}}

In the appendix, we provide additional proofs of results stated in the main text and additional information on the numerical simulations discussed in the main text. In Section \ref{sec:steadytheta}, we provide a derivation of our conjectured threshold selection strength to sustain a density steady state featuring cooperation. In Section \ref{sec:numericsappendix}, we present additional information about our finite volume simulations for our PDE model of multilevel selection, deriving the finite volume approximation for the effect of pairwise group-level competition (Section \ref{sec:numericalscheme}) and presenting additional numerical simulations of the multilevel dynamics for PD scenarios with the pairwise-local and Tullock group-level victory probabillities (\ref{sec:AdditionalPDSimulations}). We provide the proof of well-posedness for the measure-valued formulation of our PDE model in Section \ref{sec:proofwellposedness}, and we prove the preservation of the infimum and supremum H{\"o}lder exponents near $x = 1$ for our measure-valued solutions in Section \ref{sec:infsupHolderpreserved}. 

\section{Deriving Conjectured Threshold Selection Strength and Collective Success Formulas for Density Steady States}
\label{sec:steadytheta}

In this section, we will use heuristic calculations to obtain formulas for the group-level victory probability for steady state densities of our PDE model with pairwise between-group competition. We will derive an implicit formula that steady state solutions to Equation \eqref{eq:multilevelPDEtworho} must satisfy, and we will use this implicit representation to determine the average group-level success that will achieved by a steady state density with given H{\"o}lder exponent $\theta$ near $x=1$. 

\begin{proposition}
Suppose that the group-level victory probability satisfies $\rho(x,y) \in C^1\left([0,1]^2\right)$ and the individual-level level advantage to defect satisfies $\rho(x) \in C^1\left([0,1]\right)$ and $\rho(x) > 0$ for $x \in [0,1]$. If Equation \eqref{eq:multilevelPDEtworho} has a density steady state solution $f(x)$ with H{\"o}lder exponent $\theta$ near $x = 1$, then the average group-level victory probability achieved by the steady state population in pairwise competition satisfies
\begin{equation}
\int_0^1 \rho(y,1) f(y) dy = \frac{1}{2} - \frac{\theta \pi(1)}{2 \lambda}.
\end{equation}
\end{proposition}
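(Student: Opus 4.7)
The plan is to generalize the boundedness argument used in the proof of Proposition \ref{prop:PDsteadybounded} so that it is sensitive to an arbitrary H\"older exponent $\theta$ near $x=1$, rather than just the case $\theta = 1$. The guiding idea is that, for a density steady state, the strategic composition $f(x)$ with H\"older exponent $\theta$ near $x = 1$ should satisfy $f(x) \sim c\, (1-x)^{\theta - 1}$ as $x \to 1$, since the cumulative tail $\int_{1-y}^1 f(s)\, ds$ is forced to behave like $y^{\theta}$. I would then substitute this tail ansatz into the stationary form of Equation \eqref{eq:multilevelPDEtworho} and match the leading-order terms on the two sides of the ODE near $x = 1$.

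First, I would rewrite the steady-state equation as
\begin{equation}
\frac{d}{dx}\bigl[x(1-x)\pi(x) f(x)\bigr] = -\lambda f(x)\bigl[2\,G_f(x) - 1\bigr], \qquad G_f(x) := \int_0^1 \rho(x,y) f(y)\, dy,
\end{equation}
and change variables via $u = 1-x$ to center the analysis at the all-cooperator equilibrium. Writing $G_f(x) = G_f(1) + R(x)$ with $R(x) \to 0$ as $x \to 1$ (which follows from the assumed $C^1$ regularity of $\rho$ together with the fact that $f$ integrates to one), the left-hand side asymptotically becomes $\pi(1)\,c\theta\,u^{\theta - 1}$ after differentiating $(1-u)u\,\pi(1-u)\,f(1-u) \sim c\,\pi(1)\,u^{\theta}$, while the right-hand side is $\lambda\,c\,[2G_f(1) - 1]\,u^{\theta - 1}$ plus a lower-order contribution coming from $R(1-u)$. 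Matching coefficients of $u^{\theta-1}$ then yields the identity $\theta \pi(1) = \lambda\bigl[2 G_f(1) - 1\bigr]$, which I rearrange using $\rho(y,1) = 1 - \rho(1,y)$ to obtain the claimed expression
\begin{equation}
\int_0^1 \rho(y,1) f(y)\, dy = \tfrac{1}{2} - \tfrac{\theta \pi(1)}{2\lambda}.
\end{equation}

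The main technical obstacle is justifying the asymptotic ansatz $f(x) \sim c\,(1-x)^{\theta-1}$ from the mere definition of the H\"older exponent, which a priori only constrains the cumulative measure. The cleanest way I would handle this is to view the steady-state equation as a first-order linear ODE for $f$ and integrate it explicitly, obtaining a representation of the form $f(x) = K\,[x(1-x)\pi(x)]^{-1}\exp\bigl(-\lambda \int^x [2G_f(s) - 1]/[s(1-s)\pi(s)]\, ds\bigr)$, where the exponential factor must scale like $(1-x)^{\theta}$ as $x \to 1$ in order for the cumulative tail $\int_{1-y}^1 f(s)\, ds$ to have the prescribed growth $y^{\theta}$. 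Reading off the singular part of the integrand $(1-s)^{-1}$ near $s = 1$ then forces $\lambda\bigl[2G_f(1)-1\bigr] = \theta\pi(1)$, which is the same constraint. I expect the subtlety will be in ruling out the possibility that $f$ has persistent oscillatory corrections that cause the naive pointwise ansatz to fail; this is handled automatically by working with the integrated (exponential) form, since the logarithmic singularity of the integrand at $s = 1$ is what pins down the exponent regardless of such corrections.
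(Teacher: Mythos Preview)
Your proposal is correct and follows essentially the same approach as the paper: both integrate the steady-state ODE to obtain an implicit representation $f(x) \propto [x(1-x)\pi(x)]^{-1}\exp\bigl(-\lambda\int^x [2G_f(s)-1]/[s(1-s)\pi(s)]\,ds\bigr)$, then extract the exponent of the $(1-x)$ factor from the simple pole of the integrand at $s=1$ and identify it with the H{\"o}lder exponent $\theta$. The paper carries out the partial-fraction decomposition more explicitly and verifies that the residual term $C(x)/\pi(x)$ is bounded on $[0,1]$ (so the exponential correction does not affect the exponent), but this is precisely the ``logarithmic singularity pins down the exponent regardless of corrections'' step you anticipated.
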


\begin{proof}

We seek a probability density $f(x)$ that is a steady state solution to the multilevel dynamics of Equation \eqref{eq:multilevelPDEtworho}. Such a steady state solution must satisfy the following ODE
\begin{equation}
0 = - \dsddx{}{x} \left[ x(1-x) \pi(x) f(x) \right] + \lambda f(x) \left[ 2 \int_0^1 \rho(x,y) f(y) dy - 1 \right],
\end{equation}
and we can use separation of variables to rewrite the steady state ODE as
\begin{equation}
\frac{1}{f(x)} f'(x) =  \left[ \frac{- \lambda \left(2 \int_0^1 \rho(x,y) f(y) dy - 1 \right)}{x(1-x) \pi(x)} \right] -  \left[ \frac{\dsddx{}{x} \left(x (1-x) \pi(x) \right)}{x(1-x) \pi(x)} \right].
\end{equation}
We can then perform a partial fraction expansion to rewrite the righthand side as
\begin{equation} \label{eq:fprimesteady}
\begin{aligned}
\frac{1}{f(x)} f'(x) = -\frac{\frac{\lambda}{\pi(0)} \left[ 2 \int_0^1 \rho(0,y) f(y) dy  - 1 \right]}{x} - \frac{\frac{\lambda}{\pi(1)} \left[ 2 \int_0^1 \rho(1,y) f(y) dy - 1 \right]}{1-x} \\
+ \frac{\lambda C(x)}{\pi(x)} - \dsddx{}{x} \log\left[ x (1-x) \pi(x) \right],
\end{aligned}
\end{equation}
where $C(x)$ is given by
\begin{equation}
\begin{aligned}
C(x) = \frac{1}{x(1-x)} \left[ 1 - 2 \int_0^1 \rho(x,y) f(y) dy \right. & \left. + \left(\frac{(1-x) \pi(x)}{\pi(0)}\right) \left( 2 \int_0^1 \rho(0,y) f(y) dy - 1 \right) \right. \\ &\left. + \left( \frac{x \pi(x)}{\pi(1)} \right) \left( 2 \int_0^1 \rho(1,y) f(y) dy - 1 \right) \right]
\end{aligned}
\end{equation}
Using the shorthand notation 
\begin{equation}
H_f(x) := 2  \int_0^1 \rho(x,y) f(y) dy - 1,
\end{equation}
we can rewrite our expression for $C(x)$ as
\begin{equation} \label{eq:Cofxexpanded}
\begin{aligned}
C(x) &= \frac{1}{x(1-x)} \left[ -H(x) + \frac{(1-x) \pi(x) H(0)}{\pi(0)} + \frac{x \pi(x) H(1)}{\pi(1)} \right] \\
&= \left( \frac{H(1)}{\pi(1)} \right) \left( \frac{\pi(x) - \pi(1)}{1-x} \right) + \left( \frac{H(0)}{\pi(0)} \right) \left( \frac{\pi(x) - \pi(0)}{x} \right) \\
&+ \frac{H(1) - H(x)}{1-x} + \frac{H(0) - H(x)}{x}.
\end{aligned}
\end{equation}

Under the assumptions that $\rho(x,y)$ is continuously differentiable with respect to $x$ for $x \in [0,1]$ and that $f(x)$ is a probability density, we may use the dominated convergence theorem to deduce that $H(x)$ is a continuously differentiable function for $x \in [0,1]$ whose derivative is given by
\begin{equation}
H_f'(x) = 2 \int_0^1 \dsdel{\rho(x,y)}{x} f(y) dy.
\end{equation}
We can then use the fact that $H(x) \in C^1([0,1])$ and the expression for $C(x)$ from Equation \eqref{eq:Cofxexpanded} to deduce that $C(x)$ is a bounded function for $x \in [0,1]$.

With this information, we may take the antiderivative of both sides of Equation \eqref{eq:fprimesteady} to obtain the expression
that
\begin{equation}
\begin{aligned}
\log\left( f(x) \right) &= - \left( \frac{\lambda}{\pi(0)} \right) \left[2 \int_0^1 \rho(0,y) f(y) dy - 1 \right] \log(x) + \left( \frac{\lambda}{\pi(1)} \right) \left[ 2 \int_0^1 \rho(1,y) f(y) dy - 1 \right] \log(1-x) \\
&- \int_x^1 \frac{\lambda C(s)}{\pi(s)} ds - \log\left( x(1-x) \pi(x) \right) + c
\end{aligned}
\end{equation}
for an arbitrary constant $c$, and we can further exponentiate both sides to obtain the following implicit formula for a potential steady-state density $f(x)$
\begin{equation}
f(x) = \frac{1}{Z_f} x^{- \left( \frac{\lambda}{\pi(0)} \right) H(0) - 1} \left( 1 - x \right)^{\left( \frac{\lambda}{\pi(1)} \right) H(1) - 1}  \pi(x)^{-1} \exp\left( - \lambda \int_x^1 \frac{C(s)}{\pi(s)} ds \right),
\end{equation}
where $Z_f$ is a normalizing constant that guarantees that $f(x)$ is a probability density. We may may then seek to find the H{\"o}lder exponent of $f(x) dx$ near $x = 1$ by computing the following limit
\begin{equation}
\begin{aligned}
 & \lim_{x \to 0} \frac{\int_{1-x}^1 f(y) dy}{x^{\Theta}} \\  &= \lim_{x \to 0} \frac{f(1-x)}{\Theta x^{\Theta - 1}} \\
 &= \frac{1}{\Theta Z_f} \lim_{x \to 0} \left[ x^{\left( \frac{\lambda}{\pi(1)} \right) H(1)  - \Theta - 1}  \left( 1 - x \right)^{- \left( \frac{\lambda}{\pi(0)} \right) H(0) - 1} \pi(1-x)^{-1} \exp\left( - \lambda \int_{1-x}^1 \frac{C(s)}{\pi(s)} ds \right) \right],
 \end{aligned}
\end{equation}
and we can use the fact that $\frac{C(x)}{\pi(x)}$ is bounded for $x \in [0,1]$ to deduce that
\begin{equation}
 \lim_{x \to 1} \frac{\int_{1-x}^1 f(y) dy}{x^{\Theta}} = \left\{
    \begin{array}{cr}
      0 & : \Theta <  \left( \frac{\lambda}{\pi(1)} \right) H(1) \\
     \ds\frac{1}{\lambda Z_f H(1)} & \Theta =  \left( \frac{\lambda}{\pi(1)} \right) H(1)  \\
      \infty & \Theta >  \left( \frac{\lambda}{\pi(1)} \right) H(0).
     \end{array}
  \right.
\end{equation}
From the definition of the H{\"o}lder exponent near $x=1$, we see that a steady state $f(x)$ satisfying this implicit representation formula will have H{\"o}lder exponent $\theta$ near $x=1$ that is given by
\begin{equation}
\theta =  \left( \frac{\lambda}{\pi(1)} \right) H(1),
\end{equation}
and we can use the definition of $H(x)$ to further write our H{\"o}lder exponent as
\begin{equation}
\theta = \frac{\lambda}{\pi(1)} \left( 2 \int_0^1 \rho(1,y) f(y) dy - 1 \right).
\end{equation}
We can rearrange this expression write the average group-level victory probability for the all-cooperator group against a population at steady state $f(x)$ in terms of the H{\"o}lder exponent as
\begin{equation}
\int_0^1 \rho(1,y) f(y) dy = \frac{1}{2} + \frac{\theta \pi}{\lambda}.
\end{equation}
Similarly, we can use the relation that $\rho(x,1) = 1 - \rho(1,x)$ to write the average group-level victory probability for groups at steady state in pairwise competition with an all-cooperator group as
\begin{equation} \label{eq:rhoy1theta}
\int_0^1 \rho(y,1) f(y) dy = \frac{1}{2} - \frac{\theta \pi(1)}{\lambda}. 
\end{equation}
Taking the limit as $\lambda \to \infty$, we see that
\begin{equation}
\lim_{\lambda \to \infty} \int_0^1 \rho(y,1) f(y) dy = \frac{1}{2}, 
\end{equation}
so the steady state population with density $f(x)$ will have, on average, an equal chance of defeating an all-cooperator group in pairwise conflict in the limit of infinitely-strong between-group competition. 
\end{proof}

Returning to the expression from Equation \eqref{eq:rhoy1theta}, we may ask when we expect the average group-level success $\int_0^1 \rho(y,1) f(y) dy$ against the all-cooperator group at a density steady state $f(x)$ to outperform the probability $\rho(0,1)$ that the all-defector group defeats the all-cooperator group in pairwise competition. If $\int_0^1 \rho(y,1) f(y) dy > \rho(0,1)$ for a steady-state density $f(x)$, we may expect that this population distribution would result in a better collective outcome at steady state than the outcome achieved by a population concentrated upon a delta-function $\delta(x)$ at the all-defector group. Using Equation \eqref{eq:rhoy1theta}, we see that this inequality is satisfied provided that
\begin{equation}
\int_0^1 \rho(y,1) f(y) dy = \frac{1}{2} - \frac{\theta \pi(1)}{2 \lambda} > \rho(0,1),
\end{equation}
and we can rearrange this condition to see that this can be achieved when the strength of between-group selection exceeds the threshold quantity
\begin{equation}
\lambda > \lambda^*_{PD}(\theta) := \frac{\theta \pi(1)}{1 - 2 \rho(0,1)}.
\end{equation}
Using the fact that $\rho(1,0) = 1 - \rho(0,1)$ and our assumption that $\pi(x) := \pi_D(x) - \pi_C(x)$ corresponds to a net individaul-level advantage for defections, we can also rewrite this threshold $\lambda^*_{PD}$ for the steady-state population to outperform the all-defector group in pairwise competition with the all-cooperator group as
\begin{equation}
\lambda^*_{PD}(\theta) = \frac{\theta \pi(1)}{\rho(1,0) - \rho(0,1)} = \frac{\theta \overbrace{\left(\pi_D(1) - \pi_C(1)\right)}^{\substack{\textnormal{Individual incentive} \\ \textnormal{to defect}}}}{\underbrace{\rho(1,0) - \rho(0,1)}_{\substack{\textnormal{Collective incentive} \\  \textnormal{to cooperate}}}},
\end{equation}
suggesting that our necessary condition for the existence of such a steady state featuring cooperation is determined by balancing the tug-of-war between the individual-level incentive to defect in an all-cooperator group ($\pi_D(1) - \pi_C(1)$) and a collective advantage of an all-cooperator group when engaged in pairwise conflict with an all-defector group ($\rho(1,0) - \rho(0,1)$). 

\section{More Details on Numerical Simulations of PDE Model with Pairwise Group-Level Competition}
\label{sec:numericsappendix}

We now provide further discussion of numerical simulations for our PDE model of with pairwise group-level conflict. In Section \ref{sec:numericalscheme}, we present the upwind finite volume scheme that was used for the simulations in Section \ref{sec:numerics}. In Section \ref{sec:AdditionalPDSimulations}, we provide additional simulations for our model of multilevel selection for the cases of group-level conflict following the pairwise local update rule and the Tullock contest function. These simulations suggest that that the good agreement between the conjectured analytical formulas and numerical simulations generalizes from the Fermi group-level update rule studied in the main text to a broader class of possible ways to model frequency-dependent group-level competition. 

\subsection{Upwind Finite Volume Scheme for PDE Model with Pairwise Between-Group Competition}
\label{sec:numericalscheme}

We now summarize the upwind finite volume scheme that we used in Section \ref{sec:numerics} to provide numerical solutions to our PDE model of multilevel selection. This scheme was previously derived to perform numerical simulations for a model with pairwise group-level competition for the case of game-theoretic dynamics featuring altruistic punishment \cite[Section E.1]{cooney2024exploring}.

Our numerical scheme will consider $N+1$ gridpoints $\{x_i\}_{i \in \{0,1,\cdots,N\}} = \{ \frac{i}{N} \}_{i \in \{0,1,\cdots,N\}}$ and we aim to describe the solution $f(t,x)$ to our PDE model for multilevel selection in terms of the average values 
\begin{equation}
f_i(t) := \frac{1}{x_{i+1} - x_i} \int_{x_i}^{x_{i+1}} f(t,y) dy = N \int_{x_i}^{x_{i+1}} f(t,y) dy 
\end{equation}
on the intervals interval $[x_i,x_{i+1})$ 
as a piecewise constant function that takes a constant value on each interval between two gridpoints. and we will represent the numerical solution $f(t,x)$ to the multilevel dynamics of Equation \eqref{eq:multilevelPDEfirstform} as a piecewise constant function of the form
\begin{equation}
f(t,x) = \sum_{i=0}^{N-1} f_i(t) 1_{x \in [x_j,x_{j+1}]}.
\end{equation}
We then study how the constants $f_i(t)$ evolve in time according to the following system of ODEs
\begin{equation}
\begin{aligned}
\dsddt{f_i(t)} &= N x_{i+1} \left( 1 - x_{i+1} \right) \left( \pi_C(x_{i+1}) - \pi_D(x_{i+1}) \right) f_{i+1}^{UW}(t) \\
&- N x_{i} \left( 1 - x_i \right) \left[ \pi_C(x_i) - \pi_D(x_i) \right] f_{i}^{UW}(t) + \lambda f_i(t) \left( \frac{1}{N} \sum_{i = 0}^{N-1} \left[ \rho_{ij} - \rho_{ji} \right] f_j(t) \right).
\end{aligned}
\end{equation}
where 
\begin{equation}
\rho_{ij} := \int_{x_i}^{x_{i+1}} \int_{x_j}^{x_{j+1}} \rho(x,y) dy dx
\end{equation}
describes a discretization of the group-level victory probability $\rho(x,y)$ and
\begin{equation}
f_i^{UW}(t) := \left\{
    \begin{array}{lr}
      f_i(t) & :x_i \left( 1 - x_i \right) \left[ \pi_C(x_i) - \pi_D(x_i) \right] \geq 0\\
      f_{i-1}(t) & : x_i \left( 1 - x_i \right) \left[ \pi_C(x_i) - \pi_D(x_i) \right] < 0
    \end{array}
  \right.
\end{equation}
implements the upwinding convention for describing the advection term based on the sign of the within-group dynamics at the boundary between grid volumes. To run numerical simulations for this PDE, we use the trapezoidal rule to obtain the discretized victory probability $\rho_{ij}$ for a given group-level victory function $\rho(x,y)$, and we start our simulations from the uniform initial conditions given by $f_i(t) = 1$ for each $i \in \{0,1,\cdots,N-1\}$. 

\subsection{Numerical Simulations of Multilevel PD Dynamics with Local and Tullock Victory-Probabilities}
\label{sec:AdditionalPDSimulations}

We now present results for numerical solutions of the multilevel dynamics for local group comparison and Tullock group-level victory probabilities for the case of tension between levels of selection corresponding to the PD scenario. As in Section \ref{sec:PDnumerics}, we consider replication events whose rates depend on the individual advantage $\pi(x)$ for defectors and the average payoff of group members $G(x)$ that arise from playing a PD game within each group, and we highlight similarities seen in our simulations with the behavior shown in Section \ref{sec:PDnumerics} for the case of the group-level Fermi update rule. Before providing simulations of the dynamical behavior of our PDE model for these two group-level update rules, we will illustrate the properties of our conjectured value on the threshold selection strength $\lambda^*_{PD}(\theta)$ for these two ways of modeling group-level competition. 

For the pairwise normalized local group comparison rule, we have a group-level victory probability given by
\begin{equation}
\rho(x,y) = \frac{1}{2} + \frac{1}{2} \left[ \frac{G(x) - G(y)}{|G(x)| + |G(y)|} \right],
\end{equation}
so we expect the multilevel dynamics for the PD scenario to support long-time cooperation when the relatives strength $\lambda$ of between-group competition exceeds the threshold value
\begin{align}
\lambda > \lambda^*_{PD}(\theta) := \frac{\theta \pi(1)}{2 \rho(1.0) - 1} = \frac{-\left(\beta + \alpha\right) \theta}{\left( \frac{G(1) - G(0)}{|G(1)| + |G(0)|} \right)} = - \frac{\left( |P + \gamma + \alpha| + |P| \right)\left(\beta + \alpha \right) \theta}{\gamma + \alpha}.
\end{align}
We note that we can rewrite this threshold quantity in terms of the entries of the payoff matrix for our PD game by noting that $\alpha = R - S - T + P$, $\beta = S-P$, and $\gamma = S + T  - 2P$, allowing us to see that
\begin{equation}
\lambda^*_{PD} = \left( |R| + |P| \right) \left( \frac{T-R}{R - P} \right) \theta
\end{equation}
for the pairwise local update under the PD game. Similarly, we consider the Tullock contest function with group-level victory probability given by
\begin{equation}
\rho(x,y) = \frac{\left(G(x) - G_*\right)^{1/a}}{\left(G(x) - G_*\right)^{1/a} + \left(G(y) - G_*\right)^{1/a}},
\end{equation}
where $G_{*} := \min_{x \in [0,1]} G(x)$. 

We can further see from Lemma \ref{lem:payoffproperties} that the minimum possible average payoff for a PD game has the following characterization
\begin{equation}
\begin{aligned}
G_* &= \left\{
    \begin{array}{cr}
      G(0) & : \gamma \geq 0\\
      G\left( \ds\frac{\gamma}{-2\alpha} \right) & : \gamma < 0
    \end{array}
  \right. \\
  &= \left\{
    \begin{array}{cr}
      P & : \gamma \geq 0\\
      P - \ds\frac{\gamma^2}{4 \alpha} & : \gamma < 0
    \end{array}
  \right. ,
  \end{aligned} 
\end{equation}
and we can therefore use the expressions $\pi(x) = - \beta - \alpha x$ and $G(x) = P + \gamma x + \alpha x^2$ to see that the group-level victory probability $\rho(1,0)$ of an all-cooperator group over an all-defector group is given by
\begin{equation}
\begin{aligned}
\rho(1,0) &= \frac{\left(G(1) - G_*\right)^{1/a}}{\left(G(1) - G_*\right)^{1/a} + \left(G(0) - G_*\right)^{1/a}} =  \left\{
    \begin{array}{cr}
      1 & : \gamma \geq 0\\
      \frac{\left( \gamma + \alpha - \frac{\gamma^2}{4 \alpha}\right)^{1/a}}{\left( \gamma + \alpha - \frac{\gamma^2}{4 \alpha}\right)^{1/a} + \left(\frac{\gamma^2}{4 \alpha} \right)^{1/a}} & : \gamma < 0
    \end{array}
  \right. ,
  \end{aligned}
\end{equation}
and we can consequently deduce that the conjectured threshold selection $\lambda^*_{PD}$ takes the following form for the Tullock group-level victory probability
\begin{equation}
\lambda^*_{PD}(\theta) = \frac{\theta \pi(1)}{2 \rho(1,0) - 1} = 
\left\{
\begin{array}{cr}
      - \left( \beta + \alpha \right) \theta & : \gamma \geq 0\\
        - \left( \beta + \alpha \right)  \left( \frac{\left( \gamma + \alpha - \frac{\gamma^2}{4 \alpha}\right)^{1/a} + \left(\frac{\gamma^2}{4 \alpha} \right)^{1/a}}{\left( \gamma + \alpha - \frac{\gamma^2}{4 \alpha}\right)^{1/a} - \left(\frac{\gamma^2}{4 \alpha} \right)^{1/a}} \right)  \theta & : \gamma < 0. 
    \end{array}
  \right.
\end{equation}

Having established the values of our conjectured selection strength $\lambda^*_{PD}(\theta)$ for the pairwise local and Tullock group-level victory probabilities, we will now explore numerical simulations of our multilevel model for these update rules using the upwind finite volume scheme described in Section \ref{sec:numericalscheme}. In Figure \ref{fig:PDsteadyOtherVictory}, we plot the numerical solutions for the strategic composition of groups after 9,600 time-steps of step size $\Delta t = 0.01$, comparing the densities achieved by these two group-level victory probabilities for various strengths of between-group competition $\lambda$. For both sets of simulations, we consider a PD game for which the average payoff of group members is maximized by a group consisting of 75 percent cooperators, while a group featuring 50 percent cooperators receives the same average payoff as an all-cooperator group. We see that, for both group-level victory probabilities, the steady state densities appear to concentrate upon a level $\ol{x} = 0.5$ that is much lower than the optimal group composition $x^* = 0.75$, even for relatively large strengths of group-level competition. This behavior agrees with the numerical results seen in the case of the Fermi group-level update rule, and suggests that the shadow of lower-level selection may also hold for multilevel dynamics with pairwise group-level competition following the group-level local update rule and the Tullock group-level victory probability. 

\begin{figure}[!ht]
    \centering
    \includegraphics[width=0.48\linewidth]{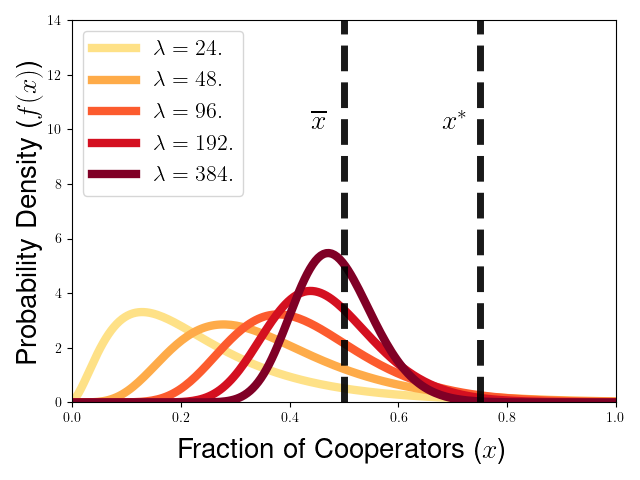}
    \includegraphics[width=0.48\linewidth]{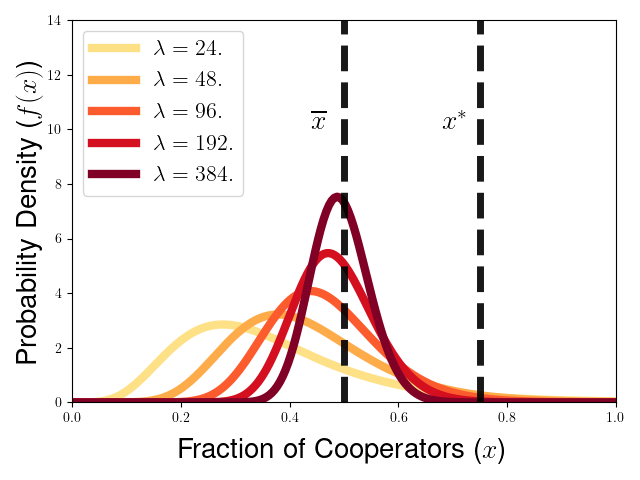}
    \caption{Comparison of densities achieved after 9,600 time steps with step-size $\Delta t = 0.01$ for multilevel PD dynamics with the pairwise normalized local rule (left) and the Tullock group-level victory probability (right) for different relative strengths of group-level competition $\lambda$. The vertical dashed lines describe the fraction of cooperation $\ol{x} = 0.5$ with the same average payoff as the all-cooperator group and the fraction of cooperation $x^* = 0.75$ that maximizes the average payoff $G(x)$ of group members. The simulations were run starting from a uniform initial density, the game-theoretic parameters were fixed at $\gamma = 1.5$, $\alpha = \beta = -1$, and $P = 1$, and the simulations in the right panel used the Tullock group-level victory probability with payoff sensitivity parameter $a = 0.5$.  }
    \label{fig:PDsteadyOtherVictory}
\end{figure}

In Figure \ref{fig:PDgroupsuccessappendix}, we plot the average group-level victory probability against the all-cooperator group after 9,600 time-steps for a PD game with both the local pairwise and Tullock update rules. As in the case of the Fermi group-level victory probability studied in Section \ref{sec:PDnumerics}, we see that there is good agreement between the numerically computed collective success for the population and the piecewise analytical formula $\int_0^1 \rho(y,1) f(y) dy$ obtained for steady state densities in Section \ref{sec:PDsteadygroupsuccess}. In addition, we see that collective success is initially constant and equal to $\rho(0,1)$ for low $\lambda$, before increasing for $\lambda$ exceeding our conjectured threshold selection strength $\lambda^*_{PD}$ for the given group-level victory probability under consideration. We further see that the long-time average collective success against the all-cooperator group approaches $\frac{1}{2}$ in the limit of large $\lambda$, suggesting that the population concentrates upon the composition $\ol{x}$ for which $G(\ol{x}) = G(1)$ as group-level competition becomes arbitrarily strong.

\begin{figure}[!ht]
    \centering
    \includegraphics[width = 0.48\textwidth]{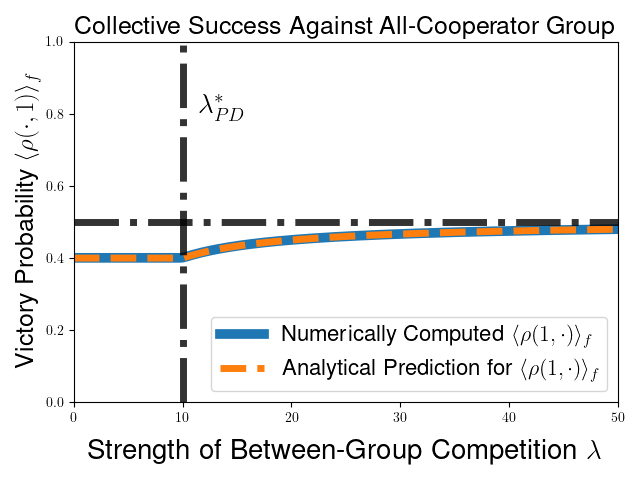}
        \includegraphics[width = 0.48\textwidth]{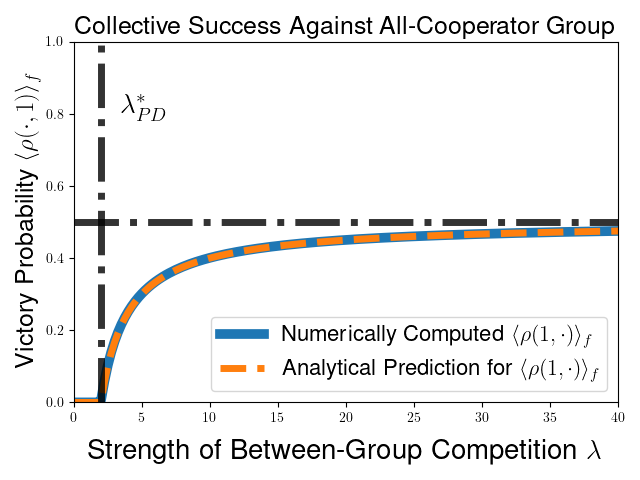}
    \caption{Comparison of the group-level success measured from numerical simulations and the predicted value for a density steady state for the cases of pairwise local (left) and Tullock (right) group-level victory probabilities, plotted as a function of $\lambda$ for a PD game in which average group payoff is maximized by a group fraction $x^* = \frac{3}{4}$ cooperators. We plot both the numerical computation $\int_0^1 \rho(y,1) f(t,y) dt$ of the population against the all-cooperator group (solid blue curve) with the predicted formula from Equation \eqref{eq:rhoy1steadystate} of the collective success $\int_0^1 \rho(y,1) f(y) dy$ for a bounded density steady state (dashes orange curve). The black vertical dash-dotted line gives the predicted threshold selection strength $\lambda_{PD}(1)$ from Equation \eqref{eq:lambdastarPDtheta} for the case of H{\"o}lder exponent $\theta = 1$. Payoff parameters are $\gamma = 1.5$, $\alpha = \beta = -1$, and $P = 1$.}
    \label{fig:PDgroupsuccessappendix}
\end{figure}

Finally, in Figure \ref{fig:PDaveragecooperatorsappendix},  we study the average level of cooperation achieved under the multilevel dynamics with both update rules plotted as a function of the relative strength $\lambda$ of group-level competition. The behavior for the two update rules is similar to that seen for the case of the Fermi group-level update rule, showing that the average fraction of cooperation approaches the composition $\ol{x}$ achieving the same average payoff of the all-cooperator group. This further highlights the fact that multilevel selection with pairwise group-level conflict appears to achieve a level of cooperation substantially less than the collectively optimal outcome even in the limit of infinitely strong between-group competition. This reflects a potential similarity with prior work on PDE models of multilevel selection with frequency-independent group-level competition, which motivates potential future work in exploring the generality of multilevel selection scenarios displaying this limitation upon achieving collectively optimal outcomes.

\begin{figure}[!ht]
    \centering
    \includegraphics[width = 0.48\textwidth]{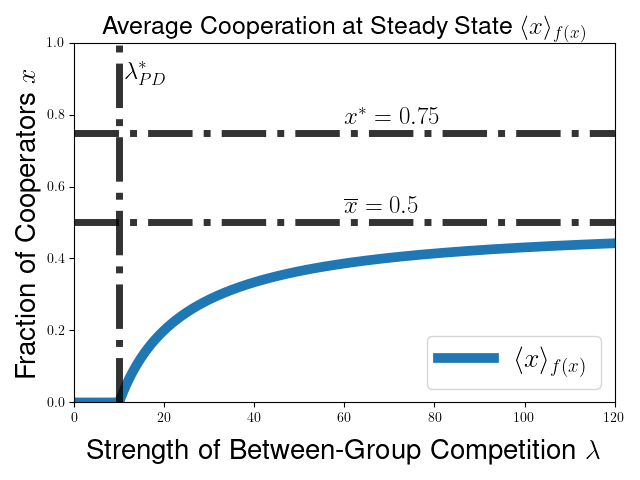}
    \includegraphics[width = 0.48\textwidth]{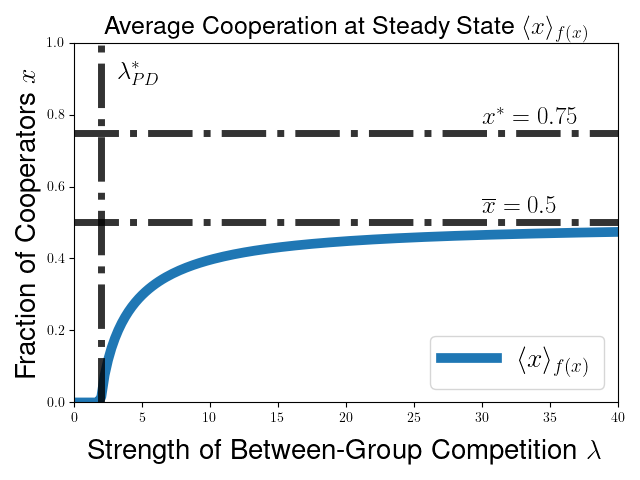}
    \caption{Comparison of the  average payoff $\int_0^1 G(x) f(t,x) dx$ for the numerical simulation after 9,600 time-steps of step-size 0.01 (blue curves) for the case of the pairwise local (left) and Tullock (right) group-level victory probabilities. The vertical dash-dotted line corresponds to the predicted threshold strength of between-group selection $\lambda^*_{PD}(1)$ from Equation \eqref{eq:lambdastarPDtheta}. The upper horizontal dash-dotted line corresponds to the level of cooperation $x^* = 0.75$ that maximizes average payoff of group members, while the lower horizontal dash-dotted line corresponds to the level of cooperation $\overline{x} = 0.5$ at which the average payoff of group members is equal to the average payoff $G(1)$ of the all-cooperator group. The game-theoretic parameters were fixed at $\gamma = 1.5$, $\beta = -1$, and $\alpha = - 1$ for all simulations, and the sensitivity parameter for the Tullock victory probability  was set at $a = 2$. }
    \label{fig:PDaveragecooperatorsappendix}
\end{figure}

\section{Proof of Well-Posedness for Measure-Valued Solutions to Multilevel Dynamics}
\label{sec:proofwellposedness}

In this section, we present the proofs of well-posedness and derivations of representation formulas of solutions for our PDE models for multilevel selection and the associated linear PDEs we use for our fixed point arguments. We present the proof of Lemma describing well-posedness of the associated linear equation in Section \ref{sec:linearwellposednessproof} (Lemma \ref{lem:linearwellposedness}), and we then apply this result and a fixed point argument to prove the well-posedness of the nonlinear model of multilevel selection in Section \ref{sec:nonlinearwellposednessproof} (Theorem \ref{thm:nonlinearwellposedness})

\subsection{Well-Posedness of Measure Solutions for Associated Linear PDE}
\label{sec:linearwellposednessproof}

\begin{proof}[Proof of Lemma \ref{lem:linearwellposedness}]
We start by noting that the flow of measures $\mu_t^{\nu} = \{\mu_t^{\nu}\}_{t \in [0,T]} \in C\left([0,T]; \mc{M}\left([0,1]\right)\right)$. We can confirm this by using our representation formula from Equation \eqref{eq:hlinearrepresentation} for $\mu_{t+h}^{\nu}$ and $\mu_t^{\nu}(dx)$ and considering the difference quotient
\begin{equation}
\begin{aligned}
&\bigg| \int_0^1 v(x) \mu_{t+h}^{\nu}(dx) - \int_0^1 v(x) \mu_t^{\nu}(dx) \bigg| \\ &= \bigg| \int_0^1 v(\phi_t(x)) \left[ e^{\lambda \left[ \int_0^{t+h} \left\{ 2 \int_0^1 \rho(\phi_s(x),y) \nu_s(dy) \right\} ds - t\right]}  - e^{\lambda \left[ \int_0^{t} \left\{ 2 \int_0^1 \rho(\phi_s(x),y) \nu_s(dy) \right\} ds - t\right]}   \right] \mu_0(dx) \bigg|
\end{aligned}
\end{equation}
for $t \in [0,T]$ and any test function $v(x) \in C\left([0,1]\right)$. We can then use our assumption that $\nu = \{ \nu_t\}_{t \in [0,T]} \in C\left( [0,T] ; \mc{M}\left([0,1]\right) \right)$ and the dominated convergence theorem to show that
\begin{equation}
\bigg| \int_0^1 v(x) \mu_{t+h}^{\nu}(dx) - \int_0^1 v(x) \mu_t^{\nu}(dx) \bigg|\to 0 \: \: \mathrm{as} \: \: h \to 0
\end{equation}
for $t \in [0,T]$ and $v(x) \in C\left([0,1]\right)$, allowing us to conclude that $\mu_t^{h} \in C\left([0,T];\mc{M}\left([0,1]\right)\right)$ with continuity in the weak-$*$ topology.

Next, we establish that the flow of measures $\mu = \{\mu_t\}_{t \in [0,T]}$ defined by Equation \eqref{eq:hlinearrepresentation} is a solution to the linear PDE of Equation \eqref{eq:PDEhlinear}. To do this, we consider the representation formula for $\mu_t^{h}$ frpm Equation \eqref{eq:hlinearrepresentation} for any test function $v(x) \in C^1\left([0,1] \right)$, and we apply the dominated convergence theorem to differentiate both sides of Equation \eqref{eq:hlinearrepresentation} with respect to $t$ to see that
\begin{equation}
\begin{aligned}
\dsddt{} \int_0^1 v(x) \mu_t^{\nu}(dx) &= \int_0^1 v'(\phi_t(x)) \dsdel{\phi_t(x)}{t} e^{\lambda \left[\int_0^t \lambda \left\{ 2\int_0^1 \rho(\phi_s(x),y) \nu_s(dy) \right\} ds - t \right]} \mu_0(dx) \\  &+ \lambda \int_0^1 v(\phi_t(x)) \left[ 2 \int_0^1 \rho(\phi_t(x),y) \nu_t(dy) - 1 \right] e^{\lambda \left[\int_0^t \lambda \left\{ 2\int_0^1 \rho(\phi_s(x),y) \nu_s(dy) \right\} ds - t \right]} \mu_0(dx) \\
&=  -\int_0^1 v'(\phi_t(x)) \phi_t(x) \left( 1 - \phi_t(x) \right) \pi(\phi_t(x)) e^{\lambda \left[\int_0^t \lambda \left\{ 2\int_0^1 \rho(\phi_s(x),y) \nu_s(dy) \right\} ds - t \right]} \mu_0(dx) \\
&+ \lambda \int_0^1 v(\phi_t(x)) \left[ 2 \int_0^1 \rho(\phi_t(x),y) \nu_t(dy) - 1 \right]  e^{\lambda \left[\int_0^t \lambda \left\{ 2\int_0^1 \rho(\phi_s(x),y) \nu_s(dy) \right\} ds - t \right]} \mu_0(dx) \\
&= \int_0^1 \left\{-v'(x) \left[x(1-x) \pi(x) \right]  + \lambda v(x) \left[ 2 \int_0^1 \rho(\phi(x,y) \nu_t(dy) - 1 \right]  \right\} \mu_t^{\nu}(dx),
\end{aligned}
\end{equation}
where we obtained the final equality by applying the definition of $\mu_t^h$ as a linear functional acting on the test function defined in curly braces. 

This tells us that the measure from Equation \eqref{eq:hlinearrepresentation} is a weak solution of the linear PDE given by Equation \eqref{eq:PDEhlinear}. 

Now we show that Equation \eqref{eq:PDEhlinear} has a unique weak solution given an initial measure $\mu_0(dx)$ and a given flow of measures $\{\nu_t\}_{t \geq 0}$. To show this, we suppose that there are two solutions $\mu_t^{\nu}(dx)$ and $\tilde{\mu}^{\nu}_t(dx)$  that solve Equation \eqref{eq:PDEhlinear} with initial measure $\mu^{\nu}_0(dx) = \tilde{\mu}_0^{\nu}(dx) = \mu_0(dx)$.

To highlight the fact that the depends of the group-level birth term on both the flow of measures $\{\nu_t\}_{t \geq 0}$ and the fraction of cooperators $x$, we will employ the shorthand notation
\begin{equation} \label{eq:hshorthand}
h_{\nu}(t,x) := 2 \int_0^1 \rho(x,y) \nu_t(dy).
\end{equation}
Using this notation, we may rewrite Equation \eqref{eq:PDEhlinear} in the following form
\begin{equation} \label{eq:PDEhshorthand}
\begin{aligned}
\dsddt{} \int_0^1 v(x) \mu_t^{\nu}(dx) &= -\int_0^1 v'(x) x(1-x) \pi(x) \mu_t^{\nu}(dx) + \lambda \int_0^1 v(x) \left[ h(t,x)  - 1 \right] \mu_t^{\nu}(dx) \\
\mu_0^{\nu}(dx) &= \mu_0(dx). 
\end{aligned}
\end{equation}

Using these assumptions, we can integrate the representation of our linear PDE from Equation \eqref{eq:PDEhshorthand} in time to compute that, for any test function $v(x) \in C^1([0,1])$,
\begin{equation}
\begin{aligned}
\langle v , \mu_t(dx) - \nu_t(dx) \rangle &= \int_0^1 v(x) \mu_t(dx) - \int_0^1 v(x) \nu_t(dx)  \\ 
&-  \cancelto{0}{\int_0^1 v(x) \mu_0(dx) - \int_0^1 v(x) \nu_0(dx)} \\ 
&+ \int_0^t \left[ \int_0^1 v'(x) x (1-x) \pi(x) \mu_s^{\nu}(dx) - \int_0^1 v'(x) x(1-x) \pi(x) \tilde{\mu}^{\nu}_s(dx) \right] ds \\
&+ \lambda \int_0^t \left[\int_0^1  v(x) \left\{ h_{\nu}(t,x) - 1 \right\} \mu_s^{\nu}(dx) - \int_0^1 v(x) \left\{h_{\nu}(t,x) - 1 \right\} \tilde{\mu}^{\nu}_s(dx)  \right] ds \\
&=  \int_0^t \left[ \langle v'(x) x (1-x) \pi(x) , \mu^{\nu}_s \rangle + \lambda \langle v \left[ h_{\nu}(s,x) - 1 \right], \mu^{\nu}_s - \tilde{\mu}^{\nu}_s \rangle  \right] ds \\
\end{aligned}
\end{equation}
This allows us to see that
\begin{equation}
|\langle v , \mu_t^{\nu} - \tilde{\mu}^{\nu}_t \rangle | \leq \int_0^t \left[ \frac{1}{2} ||v'||_{\infty} \pi_{M} + ||v||_{\infty} \left(1 + ||h_{\nu}(s,x)||_{\infty} \right) \right] \sup_{\substack{u \in C(0,1) \\ ||u||_{\infty} = 1 }} \langle u, \mu_s^{\nu} - \tilde{\mu}^{\nu}_s  \rangle ds
\end{equation}
Using Equation \eqref{eq:hshorthand} and our assumption that $\rho(x,y)$ is a probability, we can also estimate that
\begin{equation}
| h(s,x) | = \bigg| 2 \int_0^1 \rho(x,y) \nu_s(dy) \bigg| \leq 2 || \rho(x,y)||_{L^{\infty}([0,1]^2)} \int_0^1 \nu_s(dy) \leq 2 ||\nu_s||_{TV},
\end{equation}
and therefore $||h(s,x)||_{\infty} \leq 2 ||\nu_s||_{TV}$. We can then use this and the fact that
\begin{equation}
\sup_{\substack{u \in C([0,1]) \\ ||u||_{\infty} = 1 }} \langle u , \mu_t - \nu_t \rangle = \sup_{\substack{u \in C^1([0,1])  \\ ||u||_{\infty} = 1 }}  \langle u , \mu_t - \nu_t \rangle = ||\mu_t - \nu_t||_{TV},
\end{equation}
to further deduce that
\begin{equation}
||\mu_t^{\nu} - \tilde{\mu}^{\nu}_t||_{TV} \leq \int_0^t  \left[ \frac{1}{2} ||v'||_{\infty} \pi_{M} + ||v||_{\infty} \left(1 + 2||\nu_s||_{TV} \right) \right] || \mu_s^{\nu} - \tilde{\mu}^{\nu}_s ||_{TV} ds.
\end{equation}
From Gr{\"o}nwall's inequality, we can now deduce that $||\mu_t^{\nu} - \tilde{\mu}^{\nu}_t||_{TV} = 0$ for all $t \geq 0$, and we can therefore conclude that Equation \eqref{eq:PDEhlinear} has a unique weak, measure-valued solution for a given initial measure $\mu_0(dx)$. 
\end{proof}

\subsection{Well-Posedness for Full Nonlinear PDE Model}
\label{sec:nonlinearwellposednessproof}

\begin{proof}[Proof of Theorem \ref{thm:nonlinearwellposedness}]

 It was shown by Gwiadza and coauthors that the space of measures
\begin{equation}
\mc{X}_R := \left\{ \sigma \in \mc{M}\left([0,1]\right) \bigg| ||\sigma||_{TV} \leq R \right\}
\end{equation}
is a complete metric space under the bounded Lipschitz norm \cite{gwiazda2010nonlinear}. This means that the space of flows of measures 
\begin{equation}
\mc{X}_R(T) := \left\{ \mu \in C\left([0,T];\mc{M}([0,1]) \right) \bigg| \mu_0 = \mu^0, \: \:  || \mu_t ||_{TV} \leq R \: \: \textnormal{for all} \: \: t \in [0,T]  \right\},
\end{equation}
is a Banach space under the supremum norm 
\begin{equation}
||\mu||_{C\left([0,T];\mc{M}([0,1]) \right)} := \sup_{t \in [0,T]} || \mu_t ||_{BL}
\end{equation} \cite{ackleh2020well,canizo2013measure}, so we will perform our fixed point argument in $\mc{X}_R(T)$. 

To do this, we now need to show that the map $H(\nu)$ satisfies the requirements to be a contraction mapping on $\mc{X}_R$. We need verify the following two properties of our map $H(\nu)$

\begin{enumerate}[(i)]
\item There exists a sufficiently large $R$ and a sufficiently small time $T_R$ such that $H(\nu): \mc{X}_R \mapsto \mc{X}_R$ for all $t \in [0,T]$. %
\item There exists $\eta \in (0,1)$ such that, for any pair $\nu_t(dx), \tilde{\nu}_t(dx) \in C\left([0,T];\mc{M}([0,1])\right)$,
\begin{equation}
||H(\nu) - H(\tilde{\nu}) ||_{C\left([0,T];\mc{M}([0,1])\right)} \leq \eta ||\nu - \tilde{\nu} ||_{C\left([0,T];\mc{M}([0,1])\right)}.
\end{equation}
\end{enumerate}

We first show that $H(\nu)$ maps $\mc{X}_R$ to itself. To do this, we need to obtain a bound on $||H(\nu)_t||_{TV} = ||\mu^{\nu}_t||_{TV}$ that holds for all $t \in [0,T]$.

Given a test function $v(x) \in C([0,1])$, we use Equation \eqref{eq:hlinearrepresentation} and the assumption that $\mu_0(dx)$ is a probability measure to estimate that
\begin{equation}
\begin{aligned}
\bigg| \int_0^1 v(x) \mu_t^{\nu}(dx) \bigg| &= \bigg| \int_0^1 v(\phi_t(x)) \exp\left(\lambda \int_0^t \left\{ 2 \int_0^1 \rho(\phi_s(x),y) \nu_s(dy)  ds   \right\} ds - t \right)  \mu_0(dx) \bigg| \\
& \leq \int_0^1 \bigg| v(\phi_t(x)) e^{-\lambda t} \bigg| \bigg| \exp\left( 2 \lambda  \int_0^t \int_0^1  \left[\rho(x,y) \nu_s(dy) \right] ds \right) \bigg|  \mu_0(dx) \\
& \leq ||v||_{\infty} \exp\left( 2 \lambda || \rho||_{L^{\infty}\left( [0,1] \right)^2} \int_0^t \int_0^1 \nu_s(dy) ds \right) \mu_0(dx)\\
& \leq ||v||_{\infty} \exp\left( 2 \lambda T \left[\sup_{t \in [0,T]} || \mu_s(dx) ||_{TV} \right] \right) \int_0^1 \mu_0(dx) \\
& \leq ||v||_{\infty} e^{ 2 \lambda R T}.
\end{aligned}
\end{equation}
Because this holds for all test functions $v(x) \in C\left([0,1]\right)$, we may further deduce that
\begin{equation}
|| H(\nu)_t ||_{TV} = ||\mu_t^{\nu}||_{TV} \leq e^{2 \lambda R T}.
\end{equation}
Therefore we see that choosing any $R > 1$ and $T  < \frac{1}{2 \lambda R}$ will allow us to deduce that
\begin{equation}
||H(\nu)_t||_{TV} \leq 1 < R
\end{equation}
for all $t \in [0,T]$, so choosing $T < T_R := \frac{1}{2 \lambda R}$ guarantees that $H(\nu)_t \in \mc{X}_R$ for $t \in [0,T]$, and we can that conclude our mapping $H(\cdot)$ satisfies condition $(i)$.

Next, we look to verify condition $(ii)$, showing that the mapping $H(\nu)$ satisfies a contraction estimate. To do this, we use Equation \eqref{eq:hlinearrepresentation} to write that, for any test function $v(x) \in W^{1,\infty}\left([0,1]\right)$,
\begin{equation}
\begin{aligned}
& \bigg| \int_0^1 v(x) \mu_t^{\nu}(dx) - \int_0^1 v(x) \mu_t^{\tilde{\nu}}(dx) \bigg| \\ & \leq \bigg| \int_0^1 v(\phi_t(x)) \int_0^1 v(x) \exp\left(\lambda \int_0^t \left\{ 2 \int_0^1 \rho(\phi_s(x),y) \nu_s(dy)  ds   \right\} ds - t \right)  \mu_0(dx)  \\ & - \int_0^1 v(\phi_t(x)) \exp\left(\lambda \int_0^t \left\{ 2 \int_0^1 \rho(\phi_s(x),y) \tilde{\nu}_s(dy)  ds   \right\} ds - t \right)  \mu_0(dx) \bigg| \\
& \leq \int_0^1 \bigg| v(\phi_t(x)) e^{-\lambda t} \bigg| \bigg| e^{2 \lambda \int_0^t \left[ \int_0^1 \rho(\phi_s(x),y) \nu_s(dy) \right] ds} - e^{2 \lambda \int_0^t \left[ \int_0^1 \rho(\phi_s(x),y) \tilde{\nu}_s(dy) \right] ds} \bigg| \mu_0(dx)  \\
& \leq ||v||_{\infty} \int_0^1  \bigg| e^{2 \lambda \int_0^t \left[ \int_0^1 \rho(\phi_s(x),y) \nu_s(dy) \right] ds} - e^{2 \lambda \int_0^t \left[ \int_0^1 \rho(\phi_s(x),y) \tilde{\nu}_s(dy) \right] ds} \bigg| \mu_0(dx)
\end{aligned}
\end{equation}
Using our assumptions that $0 \leq \rho(x,y) \leq 1$ and that $\nu_t$ and $\tilde{\nu}_t$ are non-negative measures satisfying $||\nu_t||_{TV}, ||\tilde{\nu}||_{TV} \leq R$ for $t \in [0,T]$, we may estimate that
\begin{equation}
\begin{aligned}
0 &\leq \int_0^t \int_0^1 \rho\left(\phi_s(x),y\right) \nu_s(dy) ds \leq R \\ 0 & \leq  \int_0^t \int_0^1 \rho\left(\phi_s(x),y\right) \tilde{\nu}_s(dy) ds \leq R
\end{aligned}
\end{equation}
for all $t \in [0,T]$. Combining this with the fact that the exponential function $e^a$ is locally Lipschitz for $a \in [0,2 \lambda R T]$ with Lipschitz constant $e^{2 \lambda R T}$, we may estimate that
\begin{equation}
\begin{aligned}
&\bigg| e^{2 \lambda \int_0^t \left[ \int_0^1 \rho(\phi_s(x),y) \nu_s(dy) \right] ds} - e^{2 \lambda \int_0^t \left[ \int_0^1 \rho(\phi_s(x),y) \tilde{\nu}_s(dy) \right] ds} \bigg| \\
&\leq e^{2 \lambda R T} \bigg| \int_0^t \int_0^1 \rho\left(\phi_s(x),y\right) \nu_s(dy) ds - \int_0^t \int_0^1 \rho\left(\phi_s(x),y\right) \tilde{\nu}_s(dy) ds \bigg| \\
&\leq e^{2 \lambda R T} \int_0^t \bigg| \int_0^1 \rho(\phi_s(x),y) \nu_s(dy) -   \int_0^1 \rho(\phi_s(x),y) \tilde{\nu}_s(dy) \bigg| ds \\
& \leq e^{2 \lambda R T}   \left( || \rho ||_{L^{\infty}\left([0,1]^2\right)} + \bigg| \bigg| \dsdel{\rho}{y} \bigg| \bigg|_{L^{\infty}\left([0,1]^2\right)} \right) \int_0^t || \nu_s - \tilde{\nu}_s ||_{BL} ds \\
& \leq e^{2 \lambda R T}   \left( || \rho ||_{L^{\infty}\left([0,1]^2\right)} + \bigg| \bigg| \dsdel{\rho}{y} \bigg| \bigg|_{L^{\infty}\left([0,1]^2\right)} \right) T || \nu - \tilde{\nu} ||_{C\left([0,T] ; \mc{M}\left([0,1] \right) \right)}
\end{aligned}
\end{equation}
We can then apply this estimate, use the fact that $\mu_0(dx)$ is a probability measure, and consider time $T$ satisfying
\begin{equation}T < \min\left\{ \frac{1}{2 \lambda R} , \eta \left(|| \rho ||_{L^{\infty}\left([0,1]^2\right)} + \bigg| \bigg| \dsdel{\rho}{y} \bigg| \bigg|_{L^{\infty}\left([0,1]^2\right)} \right)^{-1} \right\}
\end{equation}
for some $\eta \in (0,1)$, which allows us to deduce that
\begin{equation}
\begin{aligned}
& \bigg| \int_0^1 v(x) \mu_t^{\nu}(dx) - \int_0^1 v(x) \mu_t^{\tilde{\nu}}(dx) \bigg| \\
& \leq ||v||_{\infty} \int_0^1 e^{2 \lambda R T}   \left( || \rho ||_{L^{\infty}\left([0,1]^2\right)} + \bigg| \bigg| \dsdel{\rho}{y} \bigg| \bigg|_{L^{\infty}\left([0,1]^2\right)} \right) T || \nu - \tilde{\nu} ||_{C\left([0,T] ; \mc{M}\left([0,1] \right) \right)} \mu_0(dx) \\
& \leq ||v||_{W^{1,\infty}\left([0,1]\right)} \eta || \nu - \tilde{\nu} ||_{C\left([0,T] ; \mc{M}\left([0,1] \right) \right)}
\end{aligned}
\end{equation}
for all $t \in [0,T]$. Because this estimate holds for all test functions $v(x) \in W^{1,\infty}\left([0,1]\right)$ satisfying $||v||_{W^{1,\infty}\left([0,1]\right)} \leq 1$, we may further deduce that there is an $\eta \in (0,1)$ such that
\begin{equation}
|| H(\nu)_t - H(\tilde{\nu})_t ||_{BL} := || \mu_t^{\nu} - \mu_t^{\tilde{\nu}} ||_{BL} \leq \eta || \nu - \tilde{\nu} ||_{C\left([0,T] ; \mc{M}\left([0,1] \right) \right)}
\end{equation}
for all $t \in [0,T]$. We can then take the maximum of both sides over the interval $[0,T]$ for our choice of $T$ to conclude that 
\begin{equation}
|| H(\nu) - H(\tilde{\nu})||_{C\left([0,T];\mc{M}\left([0,1]\right)\right)} \leq \eta || \nu - \tilde{\nu} ||_{C\left([0,T] ; \mc{M}\left([0,1] \right) \right)}
\end{equation}
for some $\eta \in (0,1)$, so we have show that condition $(ii)$ holds for some $T$ sufficiently close to $0$.

We have shown the two conditions required to apply the Banach fixed point theorem to $H(\cdot)$ if we consider $T < \min(T_{\eta},T_{R})$. Therefore we know that there is a unique flow of measures $\nu^{\flat} := \nu_t^{\flat} \in C\left([0,T];\mc{M}([0,1])\right) \cap C^1\left([0,T];(C^1([0,1]))^*\right)$ that satisfy the fixed point relationship
\begin{equation}
H(\nu^{\flat}) = H(\nu^{\flat}_t)  := \{\mu_t^{\nu^{\flat}}\}_{t \in [0,T]} = \{\nu_t^{\flat}\}_{t \in [0,T]}. 
\end{equation}
Because the group-level victory probability $\rho(x,y)$ of an $x$-cooperator group is a $C\left([0,1]\right)$ function in $y$ for each $x \in [0,1]$, we see that $\rho(x,\cdot)$ is a valid test-function for our measure $\mu_t^{\nu^{\flat}}$, we can apply the fixed-point relationship to $\rho(x,\cdot)$ to write that
\begin{equation}
\int_0^1 \rho(x,y) \mu_t^{\nu^{\flat}}(dy) = \int_0^1 \rho(x,y) \nu_t^{\flat} (dy).
\end{equation}
Using this fixed-point property for the test-function $\rho(x,\cdot)$, we can then see from Equation \eqref{eq:PDEhlinear} that the measure $\mu_t^{\nu^{\flat}}$ satisfies
\begin{equation}
\begin{aligned}
\dsddt{} \int_0^1 v(x) \mu_t^{\nu^{\flat}}(dx) &= - \int_0^1 v'(x) x (1-x) \pi(x) \mu_t^{\nu^{\flat}}(dx) + \int_0^1 v(x) \left\{ 2\int_0^1 \rho(x,y) \nu_t^{\flat}(dy) - 1\right\} \mu_t^{\nu^{\flat}}(dx) \\ &=  - \int_0^1 v'(x) x (1-x) \pi(x) \mu_t^{\nu^{\flat}}(dx) + \int_0^1 v(x) \left\{ 2 \int_0^1 \rho(x,y) \mu_t^{\nu^{\flat}}(dy) - 1 \right\} \mu_t^{\nu^{\flat}}(dx)
\end{aligned}
\end{equation}
for each test function $v(x) \in C^1\left([0,1]\right)$, and we can therefore conclude that the flow of measures $\mu^{\nu^{\flat}} := \{ \mu_t^{\nu^{\flat}}\}_{t \in [0,T]}$ is a weak solution to the full multilevel dynamics of Equation \eqref{eq:PDEmeasure} for all $t \in [0,T]$.

Furthermore, because $\mu_t^{\nu_{\flat}}(dx)$ solves Equation \eqref{eq:PDEmeasure} for any test function $v(x)$, we see by using the test function $v(x) = 1$ and the fact that $\rho(x,y) = 1 - \rho(y,x)$
\begin{equation}
\begin{aligned}
\dsdel{}{t} \int_0^1 \mu_t^{\nu_{\flat}}(dx) &= \lambda \int_0^1  \left[ 2 \left( \int_0^1 \rho(x,y) \mu_t^{\nu^{\flat}}(dy) \right) - 1 \right] \mu_t^{\nu^{\flat}}(dx) \\
&= \lambda \int_0^1 \left[ \int_0^1 \left( \rho(x,y) - \rho(y,x) \right) \mu_t^{\nu^{\flat}}(dy) \right]  \mu_t^{\nu^{\flat}}(dx) \\
&= \lambda \int_{[0,1]^2} \left[ \rho(x,y) - \rho(y,x) \right] \left( \mu_t^{\nu^{\flat}} \times \mu_t^{\nu^{\flat}} \right)(dx,dy).
\end{aligned}
\end{equation}
Because $\rho(x,y) - \rho(y,x)$ is antisymmetric about the line $y=x$, we can conclude that the integral over the unit square on the righthand side vanishes, so we can use the fact that the initial condition $\mu_0^{h_{\flat}}(dx) = \mu_0(dx)$ is a a probability measure to deduce that
\begin{equation}
    \dsdel{}{t} \int_0^1 \mu_t^{\nu^{\flat}}(dx) = 0 \Longrightarrow \int_0^1 \mu_t^{\nu^{\flat}}(dx)  = 1 \: \: \mathrm{for} \: \: t \in [0,T].
\end{equation}

Furthermore, we can use the fixed point relationship for $\nu^{\flat}(t,x)$ and the representation formula from Equation \eqref{eq:hlinearrepresentation} for the solution to linear PDE of Equation \eqref{eq:PDEhlinear} to obtain the following implicit representation formula for the measure $\mu_t^{\nu^{\flat}}$ for each $t \in [0,T]$ and for each test function $v(x) \in C([0,1])$:
\begin{equation} \label{eq:representationformulaimplicit}
\begin{aligned}
\int_0^1 v(x) \mu_t^{\nu^{\flat}}(dx) &= \int_0^1 v(\phi_t(x)) \exp\left(2 \lambda \int_0^t \rho(\phi_s(x),y) \nu_{s}^{\flat}(dy)  ds - \lambda t \right) \mu_0(dx) \\
&= \int_0^1 v(\phi_t(x)) \exp\left(2 \lambda \int_0^t \int_0^1 \rho(\phi_s(x),y) \mu_s^{\nu^{\flat}}(dy)  ds - \lambda t \right) \mu_0(dx).
\end{aligned}
\end{equation}
In particular, we note from the form of our implicit representation formula from Equation \eqref{eq:representationformulaimplicit} and our assumption that the initial measure $\mu_0$ is a probability measure that the measures $\mu_t^{\nu^{\flat}}$ will be probability measures for all $t \in [0,T]$. 

We now show that the existence and uniqueness of solutions $\mu_t(dx)$ to Equation \eqref{eq:PDEmeasure} can be demonstrated globally in time. Because $\mu_t^{h_{\flat}}(dx)$ is a probability measure for $t \in [0,T]$, we can take the solution %
$\mu_T^{\nu^{\flat}}$ as a new initial probability measure for the linear PDE of Equation \eqref{eq:PDEhlinear}. Because the contraction mapping argument and the size of the value $T < \min(T_{\eta},T_{\epsilon})$ only relies on the initial distribution through the assumption that it is a probability measure, we can repeat our argument with initial measure $\mu_{T}^{\nu^{\flat}}$ to establish existence of a solution to the nonlinear dynamics of Equation \eqref{eq:PDEmeasure} for $t \in [T,2T]$. We can repeat this process as many times as needed, allowing us to extend the existence of a unique solution $\mu$ to Equation \eqref{eq:PDEmeasure} with given initial measure $\mu_0(dx)$ to any positive time $t$.
 \end{proof}

\section{Proof of Preservation of Infimum and Supremum H{\"o}lder Exponents for Measure-Valued Solutions}
\label{sec:infsupHolderpreserved}

\begin{proof}[Proof of Proposition \ref{prop:Holderpreserve}]
Because $\mu_t(dx)$ is a measure-valued solution to Equation \eqref{eq:PDEmeasure}, we may use the implicit representation formula for our solution to write that
\begin{equation} \label{eq:useimplicit}
\frac{\mu_t\left(\left[1-x,1\right] \right)}{x^{\Theta}} = \frac{\int_{1-x}^1 \mu_t(dx)}{x^{\Theta}} = x^{-\Theta} \int_{\phi_t^{-1}(1-x)}^1 \exp\left( 2 \int_0^t  \int_0^1 \rho(x,y) \mu_s(dy) ds - t \right) \mu_0(dx)
\end{equation}
We may now use this the expression on the righthand side of Equation \eqref{eq:useimplicit} to obtain upper and lower bounds on $\mu_t([x,1])$. For the upper bound, we use the assumption that $\rho(x,y)$ is a probability to note that $|\rho(x,y)| \leq ||\rho||_{L^{\infty}([0,1]^2)} \leq 1$, which allows us to deduce that
\begin{equation}
\begin{aligned}
\frac{\mu_t\left(\left[1-x,1\right] \right)}{x^{\Theta}}  &\leq  
x^{-\Theta} \int_{\phi_t^{-1}(1-x)}^1 \exp\left( \lambda \left[ 2 ||\rho||_{L^{\infty}([0,1]^2)} - 1  \right] t\right) \mu_0(dy) \\
& \leq 
e^{\lambda t} \left( \frac{\int_{\phi_t^{-1}(1 - x)}^1 \mu_0(dy)}{x^{\Theta}} \right) \\
&= e^{\lambda t}  \left(\frac{\mu_0\left( \left[\phi_t^{-1}(1 - x),1 \right]\right)}{x^{\Theta}} \right).
\end{aligned}
\end{equation}
We now use our assumption that $\pi(\cdot) > 0$ on a neighborhood of $1$ to note that $\phi_t^{-1}(1-x) \geq 1-x$ for $x$ sufficiently close to $0$, which allows us to estimate that
\begin{equation}
\frac{\mu_t\left(\left[1-x,1\right] \right)}{x^{\Theta}}   \leq e^{\lambda t}  \left(\frac{\mu_0\left( \left[\phi_t^{-1}(1-x) ,1 \right]\right)}{x^{\Theta}} \right)
\end{equation}
for sufficiently small $x$. This allows us to deduce that
\begin{subequations}
 \begin{align}
\label{eq:liminfupper} \liminf_{x \to 0} \frac{\mu_t\left(\left[1-x,1\right] \right)}{x^{\Theta}}   &\leq e^{\lambda t}  \liminf_{x \to 0} \left(\frac{\mu_0\left( \left[1-x ,1 \right]\right)}{x^{\Theta}} \right) \\
\label{eq:limsupupper} \limsup_{x \to 0} \frac{\mu_t\left(\left[1-x,1\right] \right)}{x^{\Theta}}   &\leq e^{\lambda t}  \limsup_{x \to 0} \left(\frac{\mu_0\left( \left[ 1-x ,1 \right]\right)}{x^{\Theta}} \right) 
 \end{align}   
\end{subequations}
\sloppy{Because we assume that $\mu_0(dx)$ has infimum H{\"o}lder exponent $\overline{\theta}$, we know that, for $\Theta < \overline{\theta}$, $ \liminf_{x \to 0} x^{-\Theta} \mu_0\left([1-x,1]\right) = 0$. We can then combine this with the bound of Equation \eqref{eq:liminfupper} and the fact that $\mu_t(dx)$ is a probability measure to deduce that}
\begin{equation}
 \liminf_{x \to 0} \frac{\mu_t\left(\left[1-x,1\right] \right)}{x^{\Theta}}  \leq 0, \: \: \textnormal{and therefore} \: \:  \liminf_{x \to 0} \frac{\mu_t\left(\left[1-x,1\right] \right)}{x^{\Theta}} = 0\: \: \mathrm{for} \: \: \Theta \leq \overline{\theta}. 
\end{equation}
\sloppy{This allows us to conclude that the infimum H{\"o}lder exponent for $\mu_t(dx)$ near $x=1$ satisfies $\overline{\theta}_t \geq \overline{\theta}$. Similarly, we know from the assumption that $\mu_0(dx)$ has supremum H{\"o}lder exponent $\underline{\theta}$ near $x=1$ that, for $\Theta < \underline{\theta}$, $\limsup_{x \ to 0} x^{-\Theta} x^{-\Theta} \mu_0\left([1-x,1]\right) = 0$ We can use this to show that}
\begin{equation}
\limsup_{x \to 0} \frac{\mu_t\left(\left[1-x,1\right] \right)}{x^{\Theta}}  = 0 \: \: \mathrm{for} \: \: \Theta < \underline{\theta},
\end{equation}
and we can conclude that $\underline{\theta}_t \leq \underline{\theta}$. 

To prove the complementary inequalities for the infimum and supremum H{\"o}lder exponents of $\mu_t(dx)$ near $x=1$, we know looking to find a lower bound on the righthand side of Equation \eqref{eq:useimplicit}. Noting that $\rho(x,y)$ is a probability for all $(x,y) \in [0,1]^2$, we have that $\rho(x,y) \geq 0$ for $(x,y) \in [0,1]^2$, and we can estimate that 
\begin{equation} \label{eq:extremallimfirstlowerbound}
\begin{aligned}
 \frac{\mu_t\left(\left[1-x,1\right] \right)}{x^{\Theta}} &\geq  \frac{e^{-\lambda t} \int_{\phi_t^{-1}(1-x)}^1 \mu_0(dx) }{ x^{-\Theta}}  \mu_0(dx) \\
 &= e^{-\lambda t} \frac{\mu_0\left(\left[ \phi_t^{-1}(1-x),1\right] \right)}{x^{\Theta}} \\
 &= e^{-\lambda t} \left( \frac{1 - \phi_t^{-1}(1-x)}{x} \right)^{\Theta} \left( \frac{\mu_0\left(\left[ 1 - \left( 1 - \phi_t^{-1}1-x) \right),1\right] \right)}{\left[ 1 - \phi_t^{-1}\left(1-x\right) \right]^{\Theta}} \right)
 \end{aligned}
 \end{equation}
We can use Equation \eqref{zest} from Lemma \ref{lem:backwardcharacteristics} to write  $\phi_t^{-1}(1-x)$ in the following form
\begin{equation}
\begin{aligned}
1-\phi_t^{-1}(1-x)&= \left[1-\left(1-x\right) \right] e^{- \pi(1) t} \exp\left(\int_{1-x}^{\phi_t^{-1}(1-x)} \frac{Q(s)ds}{s\pi(s)}\right) \\
&= x e^{-\pi(1) t} \exp\left(\int_{1-x}^{\phi_t^{-1}(1-x)} \frac{Q(s)ds}{s\pi(s)}\right).
\end{aligned}
\end{equation}
We can then use this expression to rewrite the estimate of Equation \eqref{eq:extremallimfirstlowerbound} as
\begin{equation} \label{eq:extremallimsubsequentlowerbound}
 \frac{\mu_t\left(\left[1-x,1\right] \right)}{x^{\Theta}} \geq \underbrace{e^{- \left[\lambda + \Theta \pi(1)\right] t}  \exp\left(\Theta \int_{1-x}^{\phi_t^{-1}(1-x)} \frac{Q(s)ds}{s\pi(s)}\right)}_{:= C(t,x)}  \left( \frac{\mu_0\left(\left[ 1 - \left( 1 - \phi_t^{-1}(1-x) \right),1\right] \right)}{\left[ 1 - \phi_t^{-1}\left(1-x\right) \right]^{\Theta}} \right)
\end{equation}
We can then derive an upper bound for the function $C(t,x)$ by restricting attention to $x$ sufficiently close to $0$. For any $\delta < 1$, we can use the fact from Lemma \ref{lem:backwardcharacteristics} that $Q(s)$ is bounded for $s \in [1-\delta,1]$ to estimate that, for all $x \in [0,\delta]$ and all $t \geq 0$, the function $C(t,x)$ satisfies
\begin{equation}
C(t,x) \geq e^{- \left[\lambda + \Theta \pi(1)\right]  t} \underbrace{\exp\left( \Theta \int_{1-\delta}^{1} \frac{\left[ Q(s) \right]_{-}}{s \pi(s)} ds\right)}_{:= C_{\delta}} > 0,
\end{equation}
where $\left[Q(s)\right]_{-}$ denotes the negative part of the function $Q(s)$. Because this bound holds for $x \in [0,\delta]$, we may further apply this estimate to Equation \eqref{eq:extremallimsubsequentlowerbound} that, for all $x$ close enough to $0$, 
\begin{equation}
\frac{\mu_t\left(\left[1-x,1\right] \right)}{x^{\Theta}} \geq  C_{\delta}  e^{- \left[\lambda + \Theta \pi(1)\right]  t} \left( \frac{\mu_0\left(\left[ 1 - \left( 1 - \phi_t^{-1}(1-x) \right),1\right] \right)}{\left[ 1 - \phi_t^{-1}\left(1-x\right) \right]^{\Theta}} \right)
\end{equation}
We can then use the substitution $z = 1 - \phi_t^{-1}(1-x)$, the fact that $\phi_t^{-1}(1-x)$ is continuous in $x$, and the fact that $\phi_t^{-1}(1-x) \to 1$ as $x \to 0$ to deduce that
\begin{subequations} \label{eq:extremallimlowerbounds}
\begin{align}
\label{eq:liminflowerbound} \liminf_{x \to 0} \frac{\mu_t\left(\left[1-x,1\right] \right)}{x^{\Theta}}  & \geq C_{\delta}  e^{- \left[\lambda + \Theta \pi(1)\right]  t} \geq \liminf_{x \to 0} \left( \frac{\mu_0\left(\left[ 1 - \left( 1 - \phi_t^{-1}(1-x) \right),1\right] \right)}{\left[ 1 - \phi_t^{-1}\left(1-x\right) \right]^{\Theta}} \right) \nonumber \\ &\geq  C_{\delta}  e^{- \left[\lambda + \Theta \pi(1)\right]  t}  \liminf_{z \to 0} \frac{\mu_t\left(\left[1-z,1\right] \right)}{z^{\Theta}} \\
\limsup_{x \to 0} \frac{\mu_t\left(\left[1-x,1\right] \right)}{x^{\Theta}}  & \geq  C_{\delta}  e^{- \left[\lambda + \Theta \pi(1)\right]  t} \limsup_{x \to 0} \left( \frac{\mu_0\left(\left[ 1 - \left( 1 - \phi_t^{-1}(1-x) \right),1\right] \right)}{\left[ 1 - \phi_t^{-1}\left(1-x\right) \right]^{\Theta}} \right) \nonumber \\ &\geq  C_{\delta}  e^{- \left[\lambda + \Theta \pi(1)\right]  t} \limsup_{z \to 0} \frac{\mu_t\left(\left[1-z,1\right] \right)}{z^{\Theta}}.
\end{align}
\end{subequations}
Because we assume that $\mu_0(dx)$ has infimum and supremum H{\"o}lder exponents $\overline{\theta}$ and $\underline{\theta}$ for $x=1$, we know that $\liminf_{z \to 0} z^{-\Theta} \mu_0\left([1-z,1]\right) > 0$ for $\Theta > \overline{\theta}$ and $\limsup_{z \to 0} x^{-\Theta} \mu_0\left([1-z,1]\right) > 0$ for $\Theta > \underline{\theta}$. We can then use this to show that 
\begin{subequations}
\begin{align}
\liminf_{x \to 0} \frac{\mu_t\left(\left[1-x,1\right] \right)}{x^{\Theta}} & \geq 0 \: \: \mathrm{for} \: \: \Theta > \overline{\theta} \\
\limsup_{x \to 0} \frac{\mu_t\left(\left[1-x,1\right] \right)}{x^{\Theta}} & \geq 0 \: \: \mathrm{for} \: \: \Theta > \underline{\theta},
\end{align}
\end{subequations}
which allows us to deduce that $\overline{\theta}_t \leq \overline{\theta}$ and $\underline{\theta}_t \leq \underline{\theta}$ for any $t \geq 0$. Combining this the complementary inequalities shown above, we can now conclude that $\overline{\theta}_t = \overline{\theta}$ and $\underline{\theta}_t = \underline{\theta}$ for all $t \geq 0$.
 \end{proof}

\end{document}